\newcommand{\dcal}{\mathcal{D}}
\newcommand{\rcal}{\mathcal{R}}
\newcommand{\xcal}{\mathcal{X}}
\newcommand{\RR}{\mathbb{R}}
\newcommand{\BA}{\mathbf{A}}
\newcommand{\BB}{\mathbf{B}}
\newcommand{\BC}{\mathbf{C}}
\newcommand{\BD}{\mathbf{D}}
\newcommand{\BG}{\mathbf{G}}
\newcommand{\BI}{\mathbf{I}}
\newcommand{\BN}{\mathbf{N}}
\newcommand{\BO}{\mathbf{O}}
\newcommand{\BP}{\mathbf{P}}
\newcommand{\BT}{\mathbf{T}}
\newcommand{\BV}{\mathbf{V}}
\newcommand{\BW}{\mathbf{W}}
\newcommand{\BX}{\mathbf{X}}
\newcommand{\BY}{\mathbf{Y}}
\newcommand{\Bb}{\mathbf{b}}
\newcommand{\Bu}{\mathbf{u}}
\newcommand{\Bv}{\mathbf{v}}
\newcommand{\Bx}{\mathbf{x}}
\newcommand{\By}{\mathbf{y}}
\newcommand{\Bz}{\mathbf{z}}
\newcommand{\Bmu}{\boldsymbol{\mu}}
\newcommand{\Btheta}{\boldsymbol{\theta}}
\newcommand{\Beta}{\boldsymbol{\eta}}
\newcommand{\Bepsilon}{\boldsymbol{\varepsilon}}
\newcommand{\BLambda}{\boldsymbol{\Lambda}}
\newcommand{\BSigma}{\boldsymbol{\Sigma}}
\newcommand{\BOmega}{\boldsymbol{\Omega}}
\newcommand{\BPsi}{\boldsymbol{\Psi}}
\newcommand{\BGamma}{\boldsymbol{\Gamma}}
\newcommand{\BPhi}{\boldsymbol{\Phi}}
\newcommand{\Btau}{\boldsymbol{\tau}}
\newcommand{\Bxi}{\boldsymbol{\xi}}
\newcommand{\Blambda}{\boldsymbol{\lambda}}
\newcommand{\Bzero}{\mathbf{0}}
\newcommand{\tr}{{\scriptscriptstyle\mathsf{T}}}
\newcommand{\iid}{\overset{\textrm{iid}}{\sim}}
\newcommand{\indep}{\overset{\textrm{indep}}{\sim}}
\newcommand{\dd}{\textrm{d}}
\newcommand{\com}{,\,}
\newcommand{\given}{\,|\,}
\newcommand{\cas}{\overset{a.s.}{\to}}
\newcommand{\latent}{\Bz}
\newcommand{\unscaled}{\Bx}
\newcommand{\factor}{\Beta}
\newcommand{\copulacov}{\BC}
\newtheorem{theorem}{Theorem}
\newtheorem{lemma}[theorem]{Lemma}
\def\spacingset#1{\renewcommand{\baselinestretch}%
{#1}\small\normalsize} \spacingset{1}
   \titlespacing{\section}{0pt}{0ex}{0ex}
    \titlespacing{\subsection}{0pt}{0ex}{0ex}
    \titlespacing{\subsubsection}{0pt}{0ex}{0ex}
\newif\ifblind
\begin{document}


\title{\Large \vspace{-5mm} A dynamic copula model for probabilistic forecasting of non-Gaussian multivariate time series \vspace{-5mm}}

\ifblind

\else

\author{John Zito\thanks{
Assistant Research Professor, Department of Statistical Science, Duke University (\href{mailto:john.zito@duke.edu}{john.zito@duke.edu}). 
}
\and
Daniel R.\ Kowal\thanks{Corresponding author. Associate Professor, Department of Statistics and Data Science, Cornell University (\href{mailto:dan.kowal@cornell.edu}{dan.kowal@cornell.edu}). Kowal gratefully acknowledges financial support from the National Science Foundation under grant number SES-2214726.}
}

\fi

\date{}

\maketitle

\vspace{-10mm} 
\begin{abstract} 
    Multivariate time series (MTS) data often include a heterogeneous mix of non-Gaussian distributional features (asymmetry, multimodality, heavy tails) and data types (continuous and discrete variables). Traditional MTS methods based on convenient parametric distributions are typically ill-equipped to model this heterogeneity. Copula models provide an appealing alternative, but present significant obstacles for fully Bayesian inference and probabilistic forecasting. To overcome these challenges, we propose a novel and general strategy for posterior approximation in MTS copula  models and apply it to a Gaussian copula built from a dynamic factor model. This framework provides scalable, fully Bayesian inference for cross-sectional and serial  dependencies and nonparametrically learns heterogeneous marginal distributions. We validate this approach by establishing posterior consistency and confirm  excellent finite-sample performance even under model misspecification using simulated data. We apply our method to crime count and macroeconomic MTS data and find superior probabilistic forecasting performance compared to popular MTS models.   These results demonstrate that the proposed method is a versatile, general-purpose utility for probabilistic forecasting of MTS that works well across of range of applications with minimal user input.

\end{abstract}

\vspace{-3mm}
{\bf Keywords:} Bayesian methods; factor model; prediction; state space model

\section{Introduction}\label{sec:intro}

Multivariate time series (MTS) data arise in economics \citep{karlsson2013chapter}, finance \citep{coa2009chapter}, neuroscience \citep{czrv2024chapter}, and many other fields. In MTS analysis, we seek to model the joint dynamics of many time series and generate probabilistic forecasts of future data. Figures~\ref{fig:crime}~and~\ref{fig:macro} display two examples of  MTS datasets:  monthly crime case counts in New South Wales, Australia and  quarterly US macroeconomic aggregates (plots for all series are in the supplementary material). A signal feature of these data is that each variable possesses its own unique distributional features, distinct from the other variables, and often non-Gaussian. Amongst the count-valued crime variables, we observe heavy tails (blackmail) and multimodality (escape from custody). Amongst the continuous macro variables, we observe multimodality (funds rate, yield spread) and asymmetry (unemployment). Occasionally, MTS include both discrete and continuous variables (e.g., \citealp{dt2023bern}). Faced with such MTS data, the challenge becomes modeling the cross-sectional and serial dependencies among the variables while simultaneously capturing the  marginal distributional features and data types of each series.

\begin{figure}[h]
    \centering
    \includegraphics[width=\textwidth]{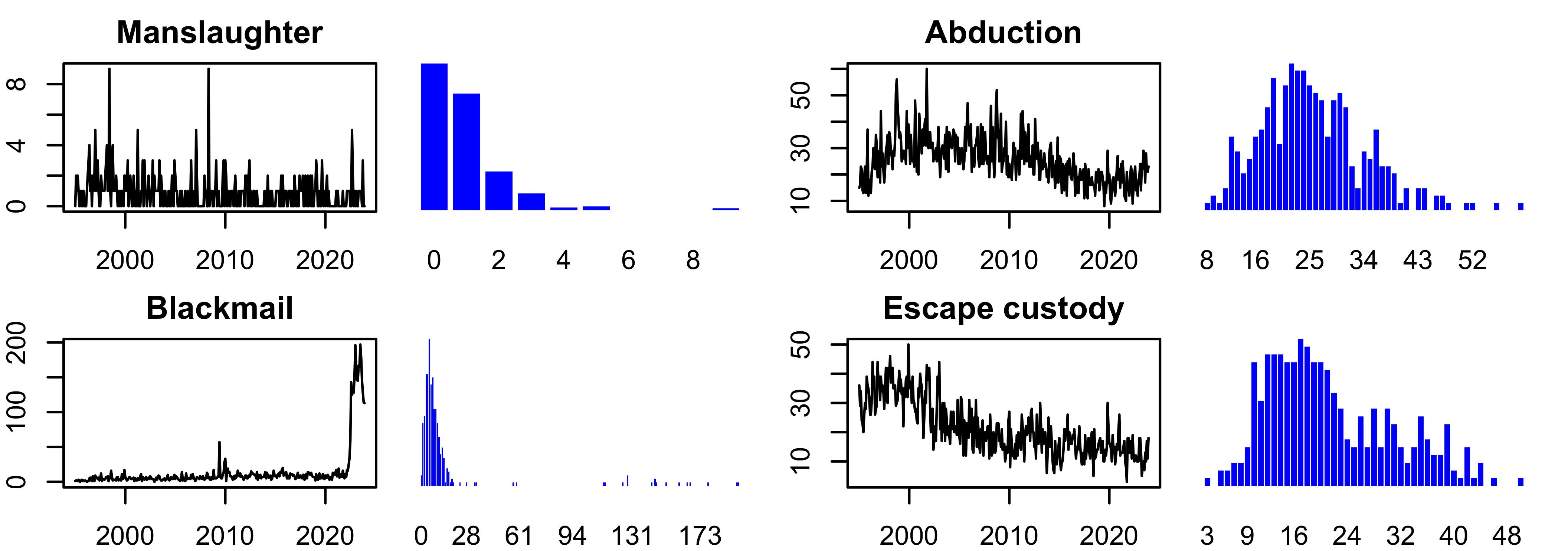}
    \caption{\small Monthly crime counts in New South Wales, Australia (January 1995 - December 2023). The full collection of variables is given in the supplementary material.}
    \label{fig:crime}
\end{figure}
\begin{figure}[h]
    \centering
    \includegraphics[width=\textwidth]{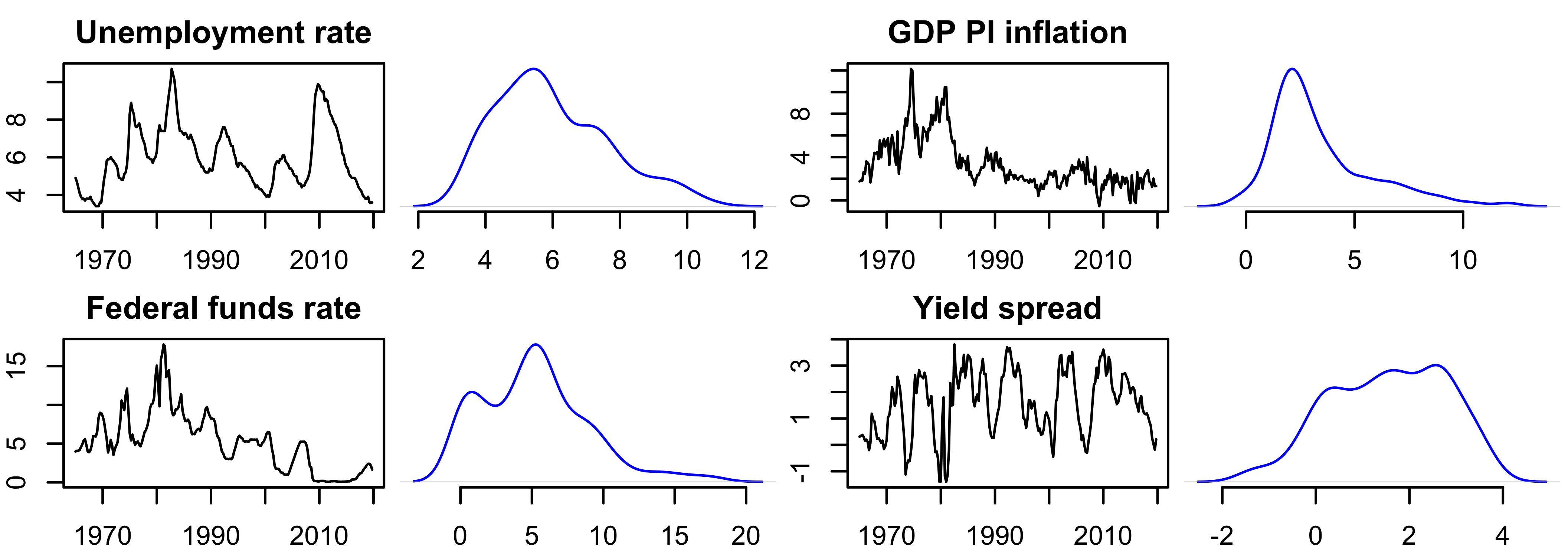}
    \caption{\small Quarterly US macroeconomic aggregates (1965Q1 - 2019Q4). The full collection of variables is given in the supplementary material.}
    \label{fig:macro}
\end{figure}

The rich literature on MTS analysis is strong on the subject of modeling cross-sectional and serial dependence, but perhaps weaker on the subject of flexibly capturing series-specific features within a multivariate framework. Most workhorse methods, such as vector autoregressive moving average (VARMA) models \citep{lutkepohl2006book} or dynamic linear models (DLMs; \citealp{wh1997book}), are based on convenient parametric families that make strict distributional assumptions:  each variable is Gaussian, or each variable is Student's $t$, etc. 
There is a modest literature on multivariate count-valued time series \citep{fokianos2024es}, including the class of dynamic generalized linear models \citep{whm1985ajasa} and the recent work of \cite{aps2018ba, aps2020jtsa}, but these methods make similarly sweeping parametric distributional assumptions. None of these methods apply for variables of mixed data types. As such, the usual methods are limited in the face of MTS that include a heterogeneous mix of distributional features and data types.

This weakness has motivated recent interest in copula models for MTS. Copula models provide a convenient framework for heterogeneous data because they effectively decouple the specification of the dependence structure between variables (and possibly across time) from the marginal behavior of each variable. Through this decoupling, each variable is assigned its own unrestricted marginal distribution 
and then these marginal distributions are ``stitched together'' in a principled way that captures the various dependencies but preserves the variable-specific behavior.


Copula methods for MTS fall into two main categories. 
In the first, a copula model is used to define the transition distribution in an autoregressive MTS model
\citep{cf2006joe, patton2013chapter, ct2015joe,dt2023bern}. 
In the second, which we pursue here, a large copula  defines the joint distribution across all variables and all time points  \citep{rps2012jma,smith2015ijf,sv2016jbes}. To be operational for MTS, these models must carefully consider cross-sectional and serial dependencies, forecasting recursions, and stationarity, among other things. These models are parametrized by two sets of unobservables: the copula parameters that model all (cross-sectional and serial) dependencies and the marginal distributions (or margins) that model the individual features of each variable. These constitute distinct ``levers'' that we can  manipulate in order to model a broad array of heterogeneous MTS data.

A primary goal of this paper is to generate \emph{probabilistic} forecasts for MTS, and in particular interval and density forecasts that convey reliable uncertainty quantification. We adopt a Bayesian approach, which is especially well-suited for this task via the posterior predictive distribution. However, posterior predictive inference requires careful accounting for the uncertainties in all model unobservables---which requires learning the joint posterior for all copula parameters and all marginal distributions. 
Unfortunately, existing Bayesian methods fall short in this difficult task. The vast majority of Bayesian approaches simply fix the margins at point estimates and then proceed with inference for the copula parameters as if the margins were known  (e.g., \citealp{ct2015joe,sv2016jbes}). Clearly, this strategy does not properly account for the uncertainty in the unknown margins.  \cite{smith2015ijf} is an exception that jointly infers the copula parameters and the margins, but requires parametric marginal distributions that are often inadequate for heterogeneous data. Furthermore, because of 1) the need to specify and tune each parametric margin and 2) the computational complexity of the posterior sampling algorithm, this approach can struggle to scale to larger cross-sections of variables \citep{ls2019jcgs}.


To overcome these limitations, we propose a new approach for Bayesian inference and probabilistic forecasting with MTS copula models. Our approach combines a rank-based posterior approximation for the copula parameters with a nonparametric procedure to infer the marginal distributions  (Section~\ref{sec:bayes}). Crucially, this approach minimizes user-intensive tuning: 
it does not require explicit specification of any marginal model, applies for mixed data types, and delivers posterior inference for the margins that adapts to the  particular features of each variable.
Taken together, this scheme provides an automated and computationally tractable method for joint inference on all model unknowns along with probabilistic forecasts. 
We validate our approach by providing model-trusting posterior consistency theory (Section~\ref{sec:theory}), and then show  that excellent performance is still achieved under model misspecification using simulated data (Section~\ref{sec:sims}). We apply our novel estimation scheme to a copula time series process built from a dynamic factor model, which is particularly well-suited for heterogeneous MTS with moderate to large cross-sections.  We compare this model to a basket of popular alternatives in a pair of probabilistic forecasting comparisons based on the crime count and macroeconomic datasets (Section~\ref{sec:applications}). 
Supplementary materials include proofs, additional simulation results, and details about the datasets. Replication files are available in the \texttt{GitHub} repository 
\ifblind
\texttt{[REDACTED]}.
\else
\href{https://github.com/johnczito/DynamicCopula}{\texttt{johnczito/DynamicCopula}}.
\fi

\section{Background}

\subsection{Copula models for multivariate time series}
We study MTS $\By_{1:T}=\{\By_1\com\By_2\com...\com \By_T\}$, where $\By_t=[y_{t,1}\,y_{t,2}\,\cdots y_{t,n}]^\tr$ is a cross-section of $n$ constituent univariate time series $\{y_{t,i}\}_{t=1}^T$. Each series  could have its own unique distributional features or data type, possibly quite different from all the others.  Our goal is to construct a probabilistic forecasting model for $\By_{1:T}$ that can capture this heterogeneity while simultaneously modeling both cross-sectional and serial dependencies.  In order to do this, we use  copula models.

A copula model is a joint probability distribution that is constructed to have an arbitrary set of marginal distributions. The ``inversion'' or ``implicit'' construction of a copula model \citep{smith2023es} has the following generative form:
\begin{align}
    \Bz_{1:T} &\sim p(\Bz_{1:T} \mid \Btheta)\label{eq:latent}\\
    y_{t,i}&=F_{t,i}^{-1}\circ G_{t,i}(z_{t,i})\label{eq:link}
\end{align}
where $\Bz_{1:T}=\{\Bz_1\com\Bz_2\com...\com \Bz_T\}$, $\Bz_t=[z_{t,1}\,z_{t,2}\,\cdots z_{t,n}]^\tr \in \mathbb{R}^n$ are latent variables with a joint distribution $p(\Bz_{1:T} \mid \Btheta)$ parametrized by $\Btheta$ and marginal distributions $G_{t,i}$. The construction in (\ref{eq:latent}\com \ref{eq:link}) is not specific to MTS, but rather defines a joint probability model  $p(\By_{1:T})$ that decouples the dependencies of $\By_{1:T}$ (modeled by  \eqref{eq:latent} and $\Btheta$) and the margins $y_{t,i} \sim F_{t,i}$. 

As a concrete example, the \emph{Gaussian copula} features a Gaussian latent distribution $\Bz_{1:T} \sim \text{N}_{Tn}(\Bzero\com\BC_{\theta})$ with  correlation $\copulacov_{\theta}$ indexed by $\Btheta$. Each margin for $\Bz_{1:T}$ is standard Gaussian,  $G_{t,i}=\Phi$. Despite its apparent simplicity, this model accommodates arbitrary marginal distributions, $y_{t,i}\sim F_{t,i}$, with dependencies among $\By_{1:T}$ determined by  $\BC_{\theta}$. Because of the  probability integral transform $F_{t,i}^{-1}\circ\Phi$,  the dependencies among $\By_{1:T}$ can be quite complex, including nonlinearities. 

To adapt (\ref{eq:latent}\com\ref{eq:link}) for MTS, we consider each level in turn. First, the latent model \eqref{eq:latent}  determines both serial and cross-sectional dependencies. 
We adopt a forward-looking, recursive characterization and instead specify a transition distribution $p(\Bz_{t}\given \Bz_{1:t-1}\com\Btheta)$ for the latent MTS. This framing grants access to a broad collection of MTS models, such as VARMAs, DLMs, dynamic factor models, etc. and operationalizes  (\ref{eq:latent}\com \ref{eq:link}) as a forecasting model for MTS. Specific choices are given in Section~\ref{sec:gauss_copula}, but our methods are general.

Second, we impose  some structure on the margins $\{F_{t,i}\}$, which in   \eqref{eq:link} are  variable- and time-specific. We require (\ref{eq:latent}\com \ref{eq:link}) to be strictly stationary:  set $F_{t,i}=F_i$ for all $t$ and specify $p(\Bz_{t}\given \Bz_{1:t-1}\com\Btheta)$ so that $\Bz_{1:T}$ is strictly stationary. Now, $F_i$ is the (unknown) marginal stationary distribution of $\{y_{t,i}\}_{t=1}^T$ and $G_{t,i}=G_i$ for all $t$. 
Then our generic template for specifying a MTS copula model  is 
\begin{align}
    \Bz_t&\sim p(\Bz_t\given\Bz_{1:t-1}\com\Btheta)\label{eq:ztrans}\\
    y_{t,i}&=F_i^{-1}\circ G_i(z_{t,i}).\label{eq:stationary_link}
\end{align}
Different choices of $p(\Bz_t\given\Bz_{1:t-1}\com\Btheta)$ produce different model configurations and propagate cross-sectional and serial dependencies to $\By_{1:T}$.  Variable-specific features are governed by the margins $\{F_i\}$, which can accommodate a heterogeneous mix of distributional features and data types. The model unknowns are  the copula parameters $\Btheta$ and the margins $\{F_i\}$; the latent stationary distributions $\{G_i\}$ are often known in closed-form and otherwise may be simulated from \eqref{eq:ztrans}. We evaluate our methods against real data that may violate strict stationarity in Section~\ref{sec:applications}. 



\subsection{Dynamic Gaussian factor copula models}\label{sec:gauss_copula}


In order to specify a latent transition distribution $p(\Bz_t\given\Bz_{1:t-1}\com\Btheta)$ with stationary dynamics, we turn to the rich class of linear, Gaussian time series models. These are well-studied, and bring with them a set of well-developed inferential tools that will ease the burden of computation later. To handle moderate to large cross-sections $n$, we specify a dynamic factor model (DFM) for \eqref{eq:ztrans} and refer to this model with \eqref{eq:stationary_link} as a \textit{dynamic Gaussian factor copula} (DGFC): 
\begin{align}
    \latent_t&=\BD_0^{-1/2}\unscaled_t, && \unscaled_t\given\Beta_t\indep\text{N}_n(\BLambda\Beta_t\com \BV)\label{eq:model_xz}\\
    \Beta_t&=\BG\Beta_{t-1}+\Bepsilon_t,&&\Bepsilon_t\iid\text{N}_k(\Bzero\com\BSigma). \label{eq:model_factor}
\end{align}
The DFM is applied to the latent variable $\unscaled_{t}$, which is rescaled by $\BD_0=\text{diag}\{\text{var}(\unscaled_{t})\}$ so that $\Bz_t$ in (\ref{eq:ztrans}\com \ref{eq:stationary_link}) is marginally (over $\Beta_t$) a  standard Gaussian process with $G_i = \Phi$. The factor model in \eqref{eq:model_xz} expresses  $\unscaled_{t} \in \mathbb{R}^n$ in terms of a static loadings matrix $\BLambda$ ($n \times k$) and dynamic factors $\Beta_t \in \mathbb{R}^k$  with $k \ll n$ and $\BV=\text{diag}(\Bv)$. These lower-dimensional factors $\Beta_t$ follow a vector autoregression (VAR) in (\ref{eq:model_factor}), which is the primary source of serial dependence in the model. The dynamics are determined by the transition matrix $\BG$, which we restrict to have eigenvalues in the unit circle for stationarity. In aggregate, the static parameters $\Btheta = \{\BG\com\BSigma\com\BLambda\com\Bv\}$ govern the cross-sectional and serial dependencies of $\Bz_{1:T}$ (and thus $\By_{1:T}$); a detailed accounting of these relationships is provided in the supplementary material. Default prior specifications are discussed in Section~\ref{sec:priors}. 


DFMs were introduced by \cite{geweke1977chapter} and are popular in macroeconometrics, where they are used to model large cross-sections of macroeconomic time series with a small number of latent factors \citep{sw2016chapter}. \cite{kn1998restat} introduced Bayesian analysis of these models, subsequently expanded upon by \cite{aw2000jbes} and \cite{bw2015jbes}, among many others. In our setting, DFMs are especially useful because they enable application of (\ref{eq:ztrans}\com \ref{eq:stationary_link}) for moderate to large cross-sections. In particular, DFMs provide both 1) model parsimony for MTS and 2) tractable properties that facilitate efficient posterior computations   (see Section~\ref{sec:priors}). However, our methods apply more broadly, including stationary Gaussian processes for $\Bz_{1:T}$ and  VAR copula models \citep{sv2016jbes,fhp2023joe}. 



\section{Bayesian inference and probabilistic forecasting}\label{sec:bayes}
Proceeding with the general MTS copula  model (\ref{eq:ztrans}\com \ref{eq:stationary_link}), there are two sets of unobservables: the static parameters $\Btheta$ that govern the cross-sectional and serial dependencies in $\By_{1:T}$ and the marginal distributions $F_{1:n}=\{F_1\com F_2\com ...\com F_n\}$ that govern the individual distributional features of each series. From a Bayesian perspective, the goal is to access the full, joint posterior distribution $p(F_{1:n}\com\Btheta\given\By_{1:T})$. This term is also the main ingredient for probabilistic forecasting via the multi-step  posterior predictive distribution $p(\By_{T+1:T+H}\given\By_{1:T})$, which we illustrate below. 

The primary challenge here is the margins $F_{1:n}$. Joint posterior inference for  $F_{1:n}$ and $\Btheta$ is sufficiently difficult that most Bayesian copula methods do not even attempt it, and instead fix the margins at point estimates 
\citep{ct2015joe, sv2016jbes}. This approach does not properly account for the uncertainty in the margins $F_{1:n}$, which can be substantial for short- to medium-length time series, and implicates  probabilistic forecasting from the posterior predictive distribution. Parametric margins offer a partial resolution \citep{smith2015ijf}, but limit the distributional flexibility and still require sophisticated and computationally intensive  algorithms.

In this environment, we seek a posterior approximation for $p(F_{1:n}\com \Btheta \given \By_{1:T})$ that 1) properly accounts for the joint uncertainties in $F_{1:n}$ and $\Btheta$, 2) allows for nonparametric modeling of each $F_{i}$ with minimal tuning, and 3) facilitates efficient computing. To provide an overview of the proposed approach, we begin with the following representation of the joint posterior under (\ref{eq:ztrans}\com\ref{eq:stationary_link}):
\begin{align}
    p(F_{1:n}\com\Btheta\given \By_{1:T})    &=    \int p(\latent_{1:T}\com F_{1:n}\com\Btheta\given\By_{1:T})  \, \dd \latent_{1:T} \label{eq:posterior} \\
     &= \int p(\latent_{1:T}\com\Btheta\given\By_{1:T}) \ p(F_{1:n}\given \latent_{1:T}\com\Btheta\com\By_{1:T})  \, \dd \latent_{1:T} \label{eq:m-c}
\end{align}
which applies a data augmentation in \eqref{eq:posterior} and a marginal-conditional decomposition in \eqref{eq:m-c}. 

Our approximation strategy sequentially considers the terms in the integrand of \eqref{eq:m-c}. First, $p(\latent_{1:T}\com\Btheta\given\By_{1:T})$ is unconditional on the margins $F_{1:n}$, which apparently requires an intractable marginalization. Instead, we approximate this distribution with a \emph{rank posterior} \citep{hoff2007aoas} that  conditions only on the ranks of $\{y_{t,i}\}_{t=1}^T$ for each series  (Section~\ref{sec:erl}). Crucially, rank posteriors are invariant to the margins $F_{1:n}$ and thus approximate $p(\latent_{1:T}\com\Btheta\given\By_{1:T})$ without the need to specify, estimate, or marginalize over $F_{1:n}$. 
However, full posterior inference for the margins remains a priority and is essential for probabilistic forecasting. Thus, the rank posterior alone is insufficient for our goals. This is resolved by our treatment of the second term, $p(F_{1:n}\given \latent_{1:T}\com\Btheta\com\By_{1:T})$: we show that the margins $F_{1:n}$ are essentially determined when given $\latent_{1:T}$ and $\By_{1:T}$ (Section~\ref{sec:ma}). Remarkably, this approach does not require any explicit model for $F_{1:n}$ and thus delivers automatic and nonparametric inference for these marginal stationary distributions. 

In conjunction, these approximations produce the pseudo-posterior
\begin{equation}\label{eq:pseudo}
        \tilde{p}(F_{1:n}\com\Btheta\given \By_{1:T})   = \int 
        \tilde{p}(\latent_{1:T}\com\Btheta\given\By_{1:T})\ 
        \tilde{p}(F_{1:n}\given \latent_{1:T}\com\Btheta\com\By_{1:T})
    \,
    \dd \latent_{1:T}
    .
\end{equation}
Subsequently, we will show that \eqref{eq:pseudo} is especially convenient and efficient to compute for many choices of (\ref{eq:ztrans}\com \ref{eq:stationary_link})  (Algorithm~\ref{alg:post})  and demonstrates favorable theoretical (Section~\ref{sec:theory}) and empirical (Section~\ref{sec:sims}) properties for joint inference on $F_{1:n}$ and $\Btheta$. 


Lastly, we build upon the approximation \eqref{eq:pseudo} to enable probabilistic  forecasting. Similar to (\ref{eq:posterior}\com \ref{eq:m-c}), the multi-step posterior predictive distribution can be expressed via data augmentation, now using both historical and $H$-step ahead latent data $\Bz_{1:T+H} = \{\Bz_{1:T}\com \Bz_{T+1:T+H}\}$: 
$p(\By_{T+1:T+H}\given \By_{1:T}) = \int p(\Bz_{1:T+H}\com F_{1:n}\com\Btheta\com\By_{T+1:T+H} \given \By_{1:T}) \,  \dd  \Bz_{1:T+H} \,
    \dd  F_{1:n} \, \dd \Btheta$. Again using marginal-conditional decompositions, the multi-step posterior predictive distribution can be represented conveniently as 
\begin{multline} \label{eq:pseudo-pred}
    p(\By_{T+1:T+H}\given \By_{1:T})  = \int 
    p(\By_{T+1:T+H} \given \Bz_{1:T+H} \com F_{1:n}\com\Btheta\com\By_{1:T})\  \times \\  p(\Bz_{T+1:T+H} \given \Bz_{1:T}\com F_{1:n}\com\Btheta \com \By_{1:T})
    \ p(\Bz_{1:T}\com F_{1:n}\com\Btheta\given \By_{1:T})
    \,  \dd  \Bz_{1:T+H} \dd  F_{1:n} \, \dd \Btheta.
    \end{multline}
First,  $p(\By_{T+1:T+H} \given \Bz_{1:T+H} \com F_{1:n}\com\Btheta\com\By_{1:T}) = p(\By_{T+1:T+H} \given \Bz_{T+1:T+H} \com F_{1:n})$  
is degenerate: given latent data and the margins, the (future) observations  are completely determined  by the probability integral transform \eqref{eq:stationary_link}. Second, $p(\Bz_{T+1:T+H} \given \Bz_{1:T}\com F_{1:n}\com\Btheta \com \By_{1:T}) = p(\Bz_{T+1:T+H} \given \Bz_{1:T} \com\Btheta)$ is simply the $H$-step forecasting distribution of the latent time series \eqref{eq:ztrans} with given parameters. Finally, the third term is precisely what we approximate in the integrand of \eqref{eq:pseudo}, i.e., $\tilde p(\Bz_{1:T}\com F_{1:n} \com \Btheta \given \By_{1:T})$. Crucially, access to \eqref{eq:pseudo-pred} requires minimal computing effort beyond the posterior approximation in   \eqref{eq:pseudo}  and fully leverages the forward-looking specification \eqref{eq:ztrans} of the latent model \eqref{eq:latent}; details are in Algorithm~\ref{alg:forecast}. 
With this, we can produce and evaluate the full range of point, interval, and density forecasts---which is not typical for MTS copula models.

\subsection{The rank posterior}\label{sec:erl}


The first task is to approximate $p(\latent_{1:T}\com\Btheta\given\By_{1:T})$, which notably does not depend on the margins $F_{1:n}$.  Rather than specifying a prior on  $F_{1:n}$ and attempting to integrate analytically (which would be intractable), we propose a rank-based posterior approximation. This strategy offers several key features: 1) it is invariant to the margins $F_{1:n}$, yet our approach  still provides joint inference for $F_{1:n}$ and $\Btheta$ (Section~\ref{sec:ma});  2) it is computationally efficient for many latent MTS models \eqref{eq:ztrans} using Gibbs sampling (see the supplementary material, Algorithm~\ref{alg:gibbs_sampler}); and 3) it offers favorable theoretical properties (Section~\ref{sec:theory}). 

To see why the rank posterior is especially useful for MTS copula models (\ref{eq:ztrans}\com \ref{eq:stationary_link}), notice that  $F_i^{-1}\circ G_i$   is nondecreasing, so   $z_{t,i}<z_{t',i}$ whenever $y_{t,i}<y_{t',i}$, for each series $i=1\com...\com n$. Crucially, this is true regardless of the margins $F_{1:n}$. The basic idea is to replace the observed data values with their relative ordering in our information set, which results in some information loss but removes the need to specify, estimate, or marginalize over $F_{1:n}$ at this stage.  More specifically, 
\eqref{eq:stationary_link} implies that $\latent_{1:T}$ must obey the same partial ordering as $\By_{1:T}$, meaning that  $\latent_{1:T}$ must reside in the set
  $  \dcal(\By_{1:T})=\{\latent_{1:T}\in\RR^{T\times n}:\max_k\{z_{k,i}:y_{k,i}<y_{t,i}\}<z_{t,i}<\min_{k}\{z_{k,i}:y_{t,i}<y_{k,i}\}\}.$
For continuous variables, this defines a total ordering, and the set $\dcal(\By_{1:T})$ encodes the same information as the ranks of the data $\By_{1:T}$. When  $y_{t,i}=y_{t',i}$ (e.g., for discrete variables), then $\dcal(\By_{1:T})$ defines only a partial ordering. 

Our approximation of $p(\latent_{1:T}\com\Btheta\given\By_{1:T})$ relies on the following decomposition, which we derive in the supplementary material (Appendix~\ref{app:rank}):
\begin{equation}
    p(\latent_{1:T}\com\Btheta\given\By_{1:T})
    = \underbrace{p\{\Bz_{1:T}\com\Btheta\given\Bz_{1:T}\in\dcal(\By_{1:T})\}}_{\text{rank posterior}}
\,
\times 
\,
    \frac{
    p\{\By_{1:T}\given\Bz_{1:T}\in\dcal(\By_{1:T})\com\Bz_{1:T}\com\Btheta\}
    }
    {
    p\{\By_{1:T}\given\Bz_{1:T}\in\dcal(\By_{1:T})\}
    }.\label{eq:rank}
\end{equation}
We use the   \textit{rank posterior} $ p\{\Bz_{1:T}\com\Btheta\given\Bz_{1:T}\in\dcal(\By_{1:T})\}$ as our approximation $\tilde{p}(\latent_{1:T}\com\Btheta\given\By_{1:T})$ in \eqref{eq:pseudo}.  The intuition, which we formalize later in Theorem~\ref{thm:doob1}, is that the ranks contain most of the relevant information for inference and thus the righthand term in \eqref{eq:rank} is negligible. Crucially, the rank posterior is invariant to the margins: this occurs because the rank \emph{likelihood} for the model unobservables satisfies $p\{\Bz_{1:T}\in\dcal(\By_{1:T}) \given \Btheta \com F_{1:n} \} = p\{\Bz_{1:T}\in\dcal(\By_{1:T}) \given \Btheta  \}$.
This obviates the need to marginalize over $F_{1:n}$ analytically, or to  even specify a model for $F_{1:n}$. Similar strategies have been used successfully with independent and identically distributed data \citep{hoff2007aoas,mdcl2013jasa}, but these previous approaches did not consider MTS data, inference for the margins $F_{1:n}$, or probabilistic forecasting.

To estimate the rank posterior for MTS copula models, we design a general Gibbs sampling algorithm. There are two main blocks:
\begin{compactenum}
    \item $p\{\Btheta\given \Bz_{1:T}\com \Bz_{1:T}\in\dcal(\By_{1:T})\} = p(\Btheta\given \Bz_{1:T})$ is exactly the posterior distribution under the MTS model \eqref{eq:ztrans} given data $\Bz_{1:T}$; and

    \item $p\{z_{t,i}\given \Btheta\com \Bz_{1:T}\setminus\{z_{t,i}\}\com\Bz_{1:T}\in\dcal(\By_{1:T})\} = p(z_{t,i}\given \Btheta\com \Bz_{1:T}\setminus\{z_{t,i}\})\ \mathbf{I}\{z_{t,i} \in (L_{t,i}, U_{t,i})\}$ for $t=1,\ldots,T$ and $i=1,\ldots,n$ is the full conditional distribution of \eqref{eq:ztrans} truncated to  an interval with endpoints $L_{t,i}=\max_{t'}\{z_{t',i}:y_{t',i}<y_{t,i}\}$ and $U_{t,i}=\min_{t'}\{z_{t',i}:y_{t',i}>y_{t,i}\}$. 
\end{compactenum}
Block 1  allows us to  apply existing algorithms for MTS models, including  VARMAs, DLMs, DFMs, etc. While these models may be unsatisfactory when applied directly to MTS with heterogeneous features (Figures~\ref{fig:crime}~and~\ref{fig:macro}), the computational tools for these models are immediately applicable for MTS copula models (\ref{eq:ztrans}\com \ref{eq:stationary_link}). For the DGFC with DFM (\ref{eq:model_xz}\com \ref{eq:model_factor}), we provide the sampling steps in Algorithm~\ref{alg:gibbs_sampler}. 
In Block 2, we further decompose $p\{\Bz_{1:T}\given \Btheta\com\Bz_{1:T}\in\dcal(\By_{1:T})\}$, which truncates the $Tn$-dimensional joint distribution in \eqref{eq:ztrans}, into univariate distributions truncated to an interval. This massively improves computational scalability in both $T$ and $n$, albeit at some expense to Monte Carlo efficiency. When \eqref{eq:ztrans} is a Gaussian time series model, the joint and full conditional distributions may be written  $\Bz_{1:T}\given\Btheta\sim\text{N}_{Tn}(\Bzero\com\BC_{\Btheta})$ and $z_{t,i}\given \Btheta\com \Bz_{1:T}\setminus\{z_{t,i}\}\sim\text{N}(\mu_{t,i}\com\sigma^2_{t,i})$, respectively, where the moments derive from standard conditioning results for Gaussian variables. Conveniently, this shows that Block 2 is  rather generic and broadly applicable across latent Gaussian time series models. Note that in general, computing $(\mu_{t,i}\com\sigma^2_{t,i})$ requires a large matrix inversion. However, our implementation of the DGFC circumvents this issue, namely because $z_{t,i}$ is conditionally independent of $\Bz_{1:T}\setminus\{z_{t,i}\}$ given $\Btheta$ and the factors $\Beta_{1:T}$ (see Algorithm~\ref{alg:gibbs_sampler}). 






\subsection{Inference for the marginal stationary distributions}\label{sec:ma}


The second task is to approximate $p(F_{1:n}\given \latent_{1:T}\com\Btheta\com\By_{1:T})$, which would complete the joint posterior distribution for all copula parameters $\Btheta$ and margins $F_{1:n}$ in \eqref{eq:pseudo}. At first glance, this appears infeasible: the approximation $\tilde p(\Bz_{1:T}\com  \Btheta \given \By_{1:T})$ from Section~\ref{sec:erl} used the rank posterior \eqref{eq:rank} and eschewed any specification, estimation, or marginalization over $F_{1:n}$. Yet perhaps surprisingly, we show that is both possible and rather straightforward to identify $p(F_{1:n}\given \latent_{1:T}\com\Btheta\com\By_{1:T})$---and to do so without any new modeling assumptions or specifications of $F_{1:n}$. 

For ease of exposition, we discuss a single $F_i$, and we momentarily drop the variable index $i$ for simplicity. Since the conditioning set includes the latent data $\Bz_{1:T}$, the parameters $\Btheta$, and the data $\By_{1:T}$, these are all treated as given in this section. 


The singular appearance of $F$ in the MTS copula model (\ref{eq:ztrans}\com\ref{eq:stationary_link}) occurs in the probability integral transform  $y_t=F^{-1}\circ G(z_t)$, which implies that $F(y_t)=G(z_t)$. Crucially, with our conditioning set, both $\By_{1:T}$ and $\Bz_{1:T}$ are given. Then $F$ is known \emph{exactly} at each $\{y_t\}_{t=1}^T$ (see Figure~\ref{fig:ma}, top). Put another way, $p\{F(x) \given \latent_{1:T}\com\Btheta\com\By_{1:T}\}$ is degenerate at $G(z_t)$ whenever $x=y_t$. For any unobserved point $x \not\in \{y_t\}_{t=1}^T$, the value of $F$ remains uncertain. Typically, this is where a prior specification for $F$ would contribute some information or smoothness. However, specifying a prior for the marginal distribution of each series is user-intensive even for moderate cross-sections and requires careful consideration of each variable's distributional features, support, and data type (e.g., real-valued, positive continuous, count, etc.). Thus, we pursue a simpler approach free of user input. 

Notice that $F$ must be bounded, non-decreasing, and pass through the pairs $\{(y_t\com G(z_t))\}_{t=1}^T$. Thus, we simply approximate  $p\{F(x) \given \latent_{1:T}\com\Btheta\com\By_{1:T}\}$ to be degenerate at a  distribution function $\tilde F$ that interpolates $\{(y_t\com G(z_t))\}_{t=1}^T$. As $T$ grows, the points along this curve will become dense and the approximation errors will dissipate.   
By default, we adopt the simplest interpolation strategy (Figure~\ref{fig:ma}, middle): a step function  that is zero for $x < \min\{y_{1:T}\}$, one for $x \ge \max\{y_{1:T}\}$, and otherwise ($\min\{y_{1:T}\}\leq x < \max\{y_{1:T}\}$) equal to
\begin{equation}\label{eq:ma}
    \tilde{F}(x)=\underset{t}{\max}\{G(z_t):y_t\leq x\}.
\end{equation}
We refer to (\ref{eq:ma}) as the \textit{margin adjustment}, following the terminology of \cite{fk2024jmlr}, who used a similar idea for imputation of independent and identically distributed data. When $F$ is continuous, each $y_t$ has a unique value, and so $\tilde{F}$ is a step function passing through the pairs $(y_t\com G(z_t))$ (Figure~\ref{fig:ma}, middle). When $F$ is discrete and each unique value of $y$ could have multiple values of $G(z)$ associated with it, (\ref{eq:ma}) takes the largest. 

This approximation is simple, computationally efficient, and well-behaved (Theorem~\ref{thm:ma}). Other interpolations may be preferred in some settings, such as a monotone spline that interpolates $\{(y_t\com G(z_t))\}_{t=1}^T$ to provide smoothness in place of a step function. However, we emphasize that the central feature is the interpolation; the behavior of $F(x)$ at unobserved points $x \not \in \{y_t\}_{t=1}^T$ becomes less and less important as $T$ grows.

\begin{figure}[h]
    \centering
    \includegraphics[width=.7\textwidth]{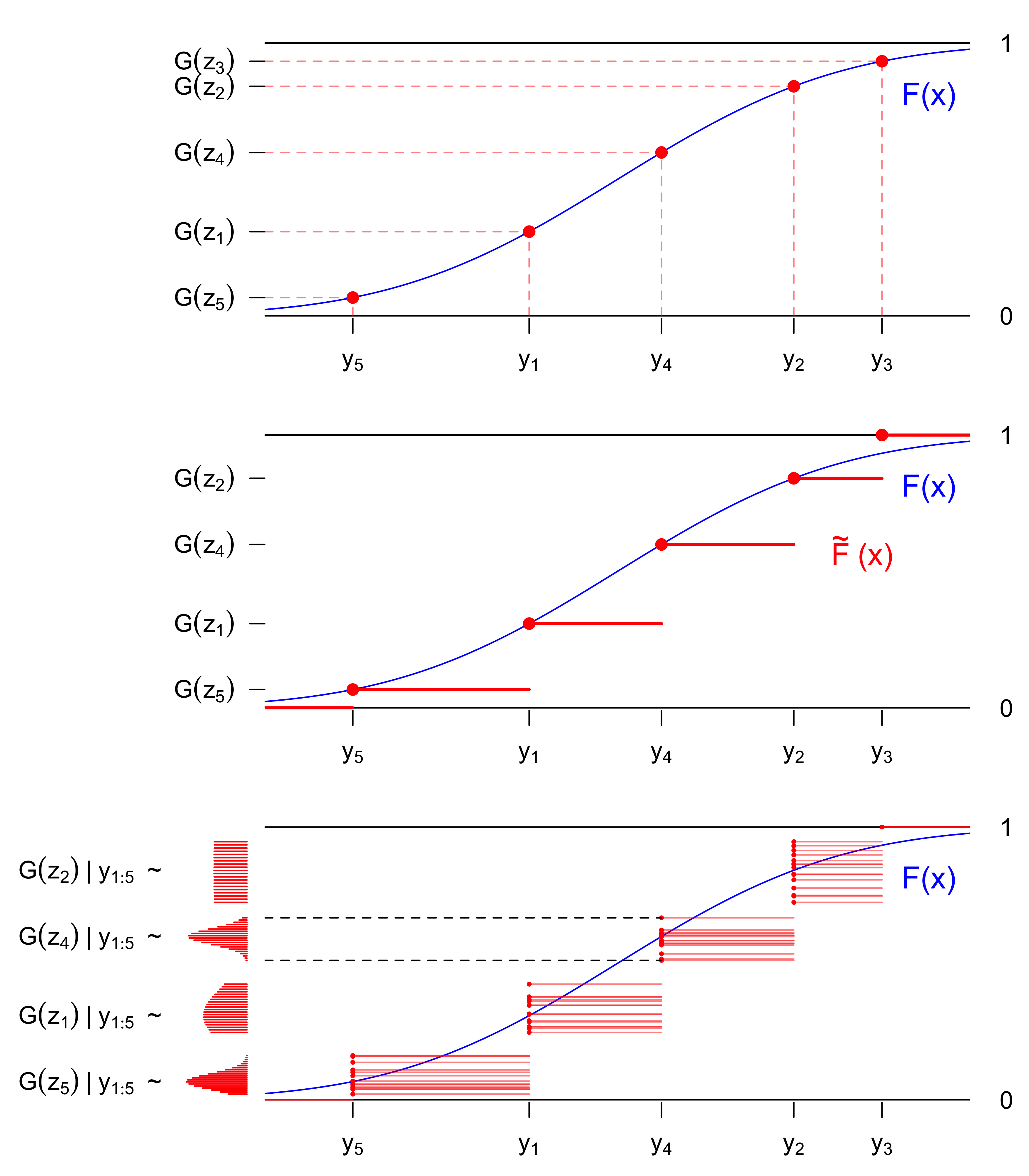}
    \caption{\small Given $z_t$, the unknown CDF $F$ (blue) must pass through the points $\{(y_t\com G(z_t))\}_{t=1}^T$ (top). The margin adjustment interpolates these points with a step function  $\tilde{F}$ (middle). But in practice the $G(z_t)$ are uncertain quantities with a posterior distribution inherited from $z_t$ (left margin, bottom), which then induces a posterior distribution for $\tilde{F}$ (bottom).} 
    \label{fig:ma}
\end{figure} 

\subsection{Putting it together}

Our strategy for approximating the joint posterior  $p(F_{1:n}\com\Btheta \given \By_{1:T})$ is outlined in Algorithm~\ref{alg:post}. 
\begin{algorithm}
{\bf Input:} Data $\By_{1:T}$
\begin{compactenum}  
    \item Sample $\{\latent^{(m)}_{1:T}\com\Btheta^{(m)}\}\sim p\{\latent_{1:T}\com\Btheta\given\latent_{1:T}\in\dcal(\By_{1:T})\}$;
    \item Compute $\tilde{F}_i^{(m)}$ using (\ref{eq:ma}) with $\latent^{(m)}_{1:T}$ for each $i=1,\ldots,n$.
\end{compactenum}
{\bf Output:} $\{\tilde{F}^{(m)}_{1:n}\com\Btheta^{(m)}\} \sim \tilde{p}(F_{1:n}\com\Btheta\given \By_{1:T})$.
\caption{Joint posterior sampling for the MTS copula  model (\ref{eq:ztrans}\com \ref{eq:stationary_link}).
 \label{alg:post}}
\end{algorithm}  
The first step follows Section~\ref{sec:erl} using the rank posterior while the second step follows Section~\ref{sec:ma} using the margin adjustment. The output $\{\tilde{F}^{(m)}_{1:n}\com\Btheta^{(m)}\}$ is drawn jointly from the pseudo-posterior in (\ref{eq:pseudo}). Although the algorithm does not explicitly store $\Bz_{1:T}^{(m)}$, we note that this quantity is generated by Algorithm~\ref{alg:post}  and subsequently used for probabilistic forecasting (Algorithm~\ref{alg:forecast}). 

While Section~\ref{sec:ma} leveraged the degeneracies of the \emph{conditional} posterior $p(F_{1:n}\given \latent_{1:T}\com\Btheta\com\By_{1:T})$ to construct the margin adjustment, Algorithm~\ref{alg:post} helps clarify how our scheme delivers posterior inference for the margins. Notably, the margin adjustment \eqref{eq:ma} is a function of the latent data $\latent_{1:T}$. Once draws of the $\latent_{1:T}^{(m)}$ are generated in the first step, the margin adjustment is updated accordingly in the second step, leading to draws $\tilde F_{1:n}^{(m)}$. 
This is visualized in Figure~\ref{fig:ma} (bottom). Put another way, the conditional posterior $p(F_{1:n}\given \latent_{1:T}\com\Btheta\com\By_{1:T})$ is degenerate at the interpolation points, but the marginal (over $\latent_{1:T}$) posterior for $F_{1:n}$ is not.

Finally, we provide probabilistic forecasting in Algorithm~\ref{alg:forecast}. The algorithm is written generally, but in practice we use the draws from the pseudo-posterior generated in  Algorithm~\ref{alg:post}. 
\begin{algorithm}
{\bf Input:} draws $\{\Bz_{1:T}^{(m)}, F_{1:n}^{(m)}, \Btheta^{(m)}\} \sim p(\Bz_{1:T}\com F_{1:n} \com \Btheta \given \By_{1:T})$
\begin{compactenum}  
\item Forward-sample $\{\Bz_{T+1:T+H}^{(m)} \} \sim p(\Bz_{T+1:T+H} \given \Bz_{1:T} = \Bz_{1:T}^{(m)}\com \Btheta = \Btheta^{(m)})$;
\item Compute $y_{t,i}^{(m)} = (F_i^{(m)})^{-1} \circ G_i(z_{t,i}^{(m)})$ for $t=T+1,\ldots,T+H$ and $i=1,\ldots,n$.
\end{compactenum}
{\bf Output:} $\{\By_{T+1:T+H}^{(m)}\} \sim p(\By_{T+1:T+H}\given \By_{1:T})$
\caption{Probabilistic forecasting for the MTS copula  model (\ref{eq:ztrans}\com \ref{eq:stationary_link}).
 \label{alg:forecast}}
\end{algorithm}  
Notably, Algorithm~\ref{alg:forecast} incurs minimal additional computational cost. The first step forward-simulates the latent MTS \eqref{eq:ztrans}. For the DGFC, this requires forward-simulating the dynamic factors in \eqref{eq:model_factor} and then generating the corresponding (multi-step-ahead) latent data in \eqref{eq:model_xz}. The second step simply applies the probability integral transform \eqref{eq:stationary_link}. 

Most important, Algorithm~\ref{alg:forecast} generates draws from the posterior predictive distribution \eqref{eq:pseudo-pred} and enables the full range of probabilistic (multi-step) forecasts. This is not typical for MTS copula models: existing methods either ignore the uncertainty from the margins $F_{1:n}$  \citep{ct2015joe, sv2016jbes} or impose parametric restrictions \citep{smith2015ijf}. Our approach does neither: we account for the uncertainty in the (unknown) margins, which are learned nonparametrically and automatically without any user tuning. Lastly, the combined strategy in Algorithms~\ref{alg:post}--\ref{alg:forecast} applies generally to provide joint posterior inference and probabilistic forecasting for MTS copula models (\ref{eq:ztrans}\com \ref{eq:stationary_link}). In practice, the modeler must decide on the latent MTS model \eqref{eq:ztrans}, which determines cross-sectional and serial dependencies along with computational scalability. We propose and evaluate the DGFC with latent MTS model (\ref{eq:model_xz}\com \ref{eq:model_factor}) as a  default option.


\subsection{Priors and posterior computation for the DGFC}\label{sec:priors}
We now discuss our default prior specifications for the DGFC. First, the parameters in \eqref{eq:model_xz} are the factor loadings $\BLambda=[\lambda_{i,l}]$ and the (diagonal) variances $\Bv=[v_1\,v_2\,\cdots\,v_n]^\tr$. For the factor loadings, we use the multiplicative gamma process prior \citep{bd2011bka}, which provides ordered shrinkage across the columns of $\BLambda$ to reduce sensitivity to the specified number of factors $k$. The variances are assigned $v_i^{-1}\sim\text{iid}\,\text{Gamma}(\alpha_0\com\beta_0)$. Next, we use a conditionally  conjugate prior for the VAR parameters in \eqref{eq:model_factor},  $\BG^\tr\com\BSigma\sim\text{MNIW}_{k,k}(d_0\com\BPsi_0\com\overline{\BG}_0^\tr\com\BO^{-1}_0)$,  truncated so that $\BG$ always has eigenvalues within the unit circle for stationarity. In each of the simulations and applications that follow, we use the same default prior hyperparameters: $k=\lceil0.7 n\rceil$, $d_0=k+1$, $\BPsi_0=\BI_k$, $\overline{\BG}_0=\Bzero$, $\BO_0=\BI_k$, $\alpha_0=1$, $\beta_0=0.3$. Further details are in the supplement.

These priors are all conditionally conjugate within (\ref{eq:model_xz}\com \ref{eq:model_factor}), which leads to straightforward and efficient sampling steps for the main blocks outlined in Section~\ref{sec:erl}. For instance, $(\BG\com\BSigma)$ can be drawn using standard conjugacy results for VARs \citep{karlsson2013chapter}, while $\Beta_{1:T}$ appears within a linear, (conditionally) Gaussian state space system (\ref{eq:model_xz}\com \ref{eq:model_factor}) and can be drawn using the Kalman simulation smoother. Finally, the latent data $\Bz_{1:T}$ are  conditionally independent given $\{\Beta_{1:T}, \BLambda, \Bv\}$, which removes all matrix inversions from their truncated sampling steps (Section~\ref{sec:erl}) and dramatically improves computational scalability.  Details on the Gibbs sampling algorithm are in the supplementary material (Algorithm~\ref{alg:gibbs_sampler}).

\section{Theoretical results}\label{sec:theory}




We study the asymptotic behavior of the pseudo-posterior $\tilde{p}(F_{1:n}\com\Btheta\given\By_{1:T})$ in (\ref{eq:pseudo}). 
To begin, we state a technical result demonstrating that the margin adjustment  (\ref{eq:ma}) approximates the true distribution in the context of a model like (\ref{eq:ztrans}\com \ref{eq:stationary_link}). All proofs are in the supplementary material. 
\begin{theorem}\label{thm:ma}
    Let $Z_t$ be a strictly stationary ergodic process with continuous marginal distribution $G(x)=P(Z_1\leq x)$, and let $F$ be any continuous distribution function. If we have $Y_t=F^{-1}\circ G(Z_t)$ and define $\tilde{F}$ as in (\ref{eq:ma}), then $\tilde{F}(x)\cas F(x)$ for all $x\in\RR$ as $T\to\infty$.
\end{theorem}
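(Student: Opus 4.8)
The plan is to exploit the fact that $\tilde{F}$ in (\ref{eq:ma}) is essentially the empirical distribution function of the transformed sample, pushed through $G$. Concretely, since $Y_t = F^{-1}\circ G(Z_t)$ and $F$ is continuous, the event $\{Y_t \le x\}$ is (for $x$ in the support) equivalent to $\{G(Z_t) \le F(x)\}$, which is equivalent to $\{Z_t \le G^{-1}(F(x))\}$ modulo the usual care with generalized inverses. Hence $\tilde{F}(x) = \max_t\{G(Z_t) : Y_t \le x\}$ can be rewritten as $\max\{G(Z_t) : Z_t \le q_x\}$ where $q_x := G^{-1}(F(x))$ (again up to boundary subtleties), and this maximum is simply $G$ evaluated at the largest order statistic of $\{Z_t\}_{t=1}^T$ that does not exceed $q_x$. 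First I would establish this identification carefully, handling the endpoints $x < \min\{Y_t\}$ and $x \ge \max\{Y_t\}$ separately (there $\tilde{F}$ is $0$ or $1$ by definition, and one checks this is consistent with the limit), and using continuity of $F$ to avoid ties in $\{Y_t\}$.

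Next I would bring in ergodicity. The empirical distribution of a strictly stationary ergodic process satisfies the Glivenko--Cantelli theorem: $\hat{G}_T(y) := \frac{1}{T}\sum_{t=1}^T \mathbf{I}\{Z_t \le y\} \to G(y)$ almost surely, uniformly in $y$, by the ergodic theorem applied pointwise plus the standard monotonicity-and-countable-dense-set argument (using continuity of $G$) to upgrade to uniform convergence. Then I would relate $\tilde{F}(x)$ to $\hat{G}_T$: the largest $Z_t$ below $q_x$ is the empirical $\hat{G}_T(q_x^-)$-quantile, so $G$ of it is close to $\hat{G}_T(q_x)$, which converges to $G(q_x) = F(x)$ by the choice of $q_x$ (here using that $G$ is continuous and, where needed, that $G$ is strictly increasing on the relevant range, or else arguing directly that $G(q_x) = F(x)$ by continuity). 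Combining, $\tilde{F}(x) \to F(x)$ a.s. for each fixed $x$.

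A cleaner route to the same end, which I would probably write up instead, is to show directly that $\tilde{F}(x)$ is sandwiched: by monotonicity, $\tilde{F}(x) \le \max\{G(Z_t): Z_t \le q_x\} \le G(q_x) = F(x)$ always holds (the middle term is at most $G(q_x)$ since $G$ is nondecreasing), while from below, for any $\varepsilon > 0$ there is (by continuity of $G$ and the definition of $q_x$) a point $q' < q_x$ with $G(q') > F(x) - \varepsilon$; ergodicity guarantees that almost surely some $Z_t$ falls in $(q', q_x]$ for all large $T$ (since this interval has positive $G$-probability, $\hat G_T$ of it converges to a positive number), whence $\tilde{F}(x) \ge G(q') > F(x) - \varepsilon$ eventually. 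Letting $\varepsilon \downarrow 0$ along a sequence gives $\tilde{F}(x) \to F(x)$ a.s.

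The main obstacle is bookkeeping with generalized inverses and the boundary of the support: $G^{-1}\circ F$ need not be well-defined or continuous if $G$ is not strictly increasing, and $x$ may lie outside the support of $Y$, where $F(x) \in \{0,1\}$ and the ``interval of positive probability'' argument degenerates. I would dispatch these by treating three cases — $x$ below the essential infimum of $Y$, $x$ in the interior of the support, $x$ at or above the essential supremum — and in the interior case replace $q_x$ throughout by $\inf\{y : G(y) \ge F(x)\}$ and verify $G(q_x) = F(x)$ using continuity of $G$. No uniformity in $x$ is claimed in the statement, so pointwise ergodic-theorem control suffices and no empirical-process machinery beyond Glivenko--Cantelli is needed.
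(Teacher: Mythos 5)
Your sandwich argument is essentially the paper's proof: the paper rewrites $\tilde F(x)=\max\{G(Z_t):Y_t\le x\}$ as $\max\{U_t:U_t\le F(x)\}$ with $U_t=G(Z_t)\sim\mathrm{Unif}(0,1)$ and then invokes a lemma saying that, for a stationary ergodic sequence with continuous marginal, the maximum of the observations below a threshold converges a.s.\ to that threshold --- exactly your ``for every $\varepsilon>0$ some observation eventually lands in the $\varepsilon$-neighborhood below the threshold'' step. The only real difference is that the paper transforms the conditioning event directly via $F^{-1}(u)\le x\iff u\le F(x)$ and works entirely on the uniform scale, which eliminates the $G^{-1}$/support bookkeeping that occupies your final paragraph.
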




Next, we 
show that the rank posterior $p\{\Btheta\given\latent_{1:T}\in\dcal(\By_{1:T})\}$ provides consistent inferences for the MTS copula parameters. We do so under a Gaussian VAR for the latent MTS model \eqref{eq:ztrans}:
\begin{equation}
     \latent_{t}=\tilde{\BG}\latent_{t-1}+\Bepsilon_t,\quad\Bepsilon_t\iid\text{N}_n(\Bzero\com\tilde{\BSigma})\label{eq:simple_latent}
\end{equation}
Posterior consistency requires parameter identification, and the copula parameters $\Btheta=\{\tilde{\BG}\com\tilde{\BSigma}\}$ are only identified if $\tilde{\BG}$ has eigenvalues within the unit circle, and if the stationary covariance matrix given by $\text{vec}(\BC_0)=(\BI_{n^2}-\tilde{\BG}\otimes\tilde{\BG})^{-1}\text{vec}(\tilde{\BSigma})$ is a correlation matrix. This also implies that the margin (or stationary distribution) of \eqref{eq:simple_latent} is $G_i = \Phi$. 
We discuss this further in Appendix~\ref{app:id}, but for the purposes of the theorems, we assume that our prior for $\Btheta$ is supported only on the subset $\Theta\subseteq \RR^{n\times n}\times\text{SPD}_n$ where these conditions hold. In addition, we require the following:
\begin{compactenum}[({A}.1)]
\item $\By_{1:\infty}$ is distributed according to $H^{\infty}_{\Btheta_0,F_{1:n}}$, the joint distribution implied by (\ref{eq:stationary_link}\com\ref{eq:simple_latent});
\item $F_i$ are all continuous;
\item $\Pi$ is a prior distribution on $\Theta$ and $\pi(\Btheta)$ is its density with respect to some measure $\nu$; 
\item $\pi(\Btheta)>0$ a.e.\ with respect to $\nu$.
\end{compactenum}

\begin{theorem}\label{thm:doob1}
Under (A.1)--(A.4),  
for $\Btheta_0$ a.e.\ $\nu$ and for any neighborhood $A\ni\Btheta_0$, we have \\
$
\lim_{T\to\infty} \Pi\{\Btheta\in A\given \latent_{1:T}\in\dcal(\By_{1:T})\}=1\text{ a.s. }H^{\infty}_{\Btheta_0,F_{1:n}}.$
\end{theorem}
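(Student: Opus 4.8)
The plan is to reduce Theorem~\ref{thm:doob1} to a classical Doob-style posterior consistency argument, with the extra work concentrated in showing that conditioning on $\{\latent_{1:T}\in\dcal(\By_{1:T})\}$ is, asymptotically, as informative as conditioning on $\latent_{1:T}$ itself. First I would observe that since the $F_i$ are continuous (A.2), the event $\{\latent_{1:T}\in\dcal(\By_{1:T})\}$ encodes exactly the ranks of each series $\{y_{t,i}\}_{t=1}^T$, and by the probability integral transform these are the same as the ranks of $\{z_{t,i}\}_{t=1}^T$. So the rank posterior $\Pi\{\Btheta\in A\given\latent_{1:T}\in\dcal(\By_{1:T})\}$ is a function of the rank statistics of the unobserved Gaussian VAR path $\latent_{1:T}$ alone; crucially it does not depend on $F_{1:n}$, so it suffices to prove consistency under the law of $\latent_{1:T}$ generated by (\ref{eq:simple_latent}) with parameter $\Btheta_0$.

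Next I would set up the Doob machinery on the sequence space of ranks. Let $R_T$ denote the vector of within-series ranks of $\latent_{1:T}$ (equivalently the $\sigma$-algebra it generates), and let $R_\infty=\bigvee_T \sigma(R_T)$. Doob's theorem gives that $\Pi\{\Btheta\in A\given R_T\}\to \mathbf{I}\{\Btheta_0\in A\}$ for $\Pi$-almost every $\Btheta_0$ and almost every sample path, provided the parameter is \emph{identified by} the tail-/limit information in $R_\infty$, i.e.\ provided there is a measurable map from the rank sequence back to $\Btheta_0$ on a set of full prior probability. This is where assumptions (A.3)--(A.4) and the identification restriction baked into $\Theta$ enter: with $\pi>0$ a.e.\ on $\Theta$, Doob's conclusion holds on a $\nu$-null-complement set of $\Btheta_0$, which matches the ``$\Btheta_0$ a.e.\ $\nu$'' qualifier in the statement. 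To upgrade ``for any neighborhood $A$'' I would take a countable neighborhood base (e.g.\ rational balls) around each $\Btheta_0$ and intersect the corresponding null sets.

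The main obstacle — and the step I would spend the most care on — is verifying the identification hypothesis: that $\Btheta_0=\{\tilde\BG_0,\tilde\BSigma_0\}$ is recoverable from the ranks of a single realization of the stationary Gaussian VAR. The key facts to assemble are (i) ergodicity of the stationary VAR on $\Theta$, so that empirical rank-based functionals converge a.s.\ to population analogues; (ii) that population rank functionals of a stationary Gaussian process identify its autocovariance structure — concretely, the lag-$h$ cross-``Spearman/Kendall'' type functionals of $(\latent_t,\latent_{t+h})$ are strictly monotone functions of the entries of the lag-$h$ autocorrelation matrix $\BC_h$ (for jointly Gaussian pairs, Kendall's $\tau$ equals $\tfrac{2}{\pi}\arcsin\rho$), so the whole autocorrelation sequence $\{\BC_h\}_{h\ge 0}$ is identified; and (iii) that, on $\Theta$ where $\BC_0$ is a correlation matrix, the map $\Btheta\mapsto\{\BC_h\}_{h\ge0}$ is injective — this is the Yule--Walker identity $\tilde\BG=\BC_1\BC_0^{-1}$ together with $\tilde\BSigma=\BC_0-\tilde\BG\BC_0\tilde\BG^\tr$, which is precisely the content referenced in Appendix~\ref{app:id}. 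Chaining (i)--(iii) produces the required measurable inverse from the rank sequence to $\Btheta_0$ on all of $\Theta$, which is what Doob's theorem consumes.

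Finally I would stitch the pieces together: restrict attention to the probability-one event (under $H^\infty_{\Btheta_0,F_{1:n}}$) on which all the ergodic rank limits hold and Doob's martingale-convergence conclusion is valid; on that event the map from ranks to $\Btheta_0$ pins down $\Btheta_0$ exactly, so for every neighborhood $A\ni\Btheta_0$ the indicator $\mathbf{I}\{\Btheta_0\in A\}=1$ and hence $\Pi\{\Btheta\in A\given\latent_{1:T}\in\dcal(\By_{1:T})\}\to 1$. Because the rank posterior was shown at the outset to be margin-free, this conclusion transfers verbatim to the data-generating law $H^\infty_{\Btheta_0,F_{1:n}}$ for any continuous $F_{1:n}$, giving the statement. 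I expect the only genuinely delicate technical point beyond bookkeeping to be handling the boundary/edge behavior of the empirical rank functionals (finite-sample rank ties have probability zero under continuity, so this is mild) and making the ``countable intersection of null sets'' argument airtight so that the exceptional $\nu$-null set does not depend on $A$.
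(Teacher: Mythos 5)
Your proposal is correct and follows essentially the same route as the paper: both invoke Doob's theorem and discharge its hypothesis by exhibiting a consistent, rank-measurable estimator of $\Btheta_0$, built from the ergodic convergence of the normalized ranks, the Gaussian Kendall's-tau relation $\tau=\tfrac{2}{\pi}\arcsin\rho$ (the paper's $\hat{\BC}_T=\sin(\pi\hat{\BT}_T/2)$ is exactly its inverse), and the Yule--Walker inversion $\tilde{\BG}=\BC_1\BC_0^{-1}$ on the identified set $\Theta$. The only cosmetic difference is that the paper explicitly cites a result extending Doob's theorem to stationary (non-i.i.d.) models, where you appeal to ergodicity directly.
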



This result implies that the ignored ratio term in (\ref{eq:rank}) becomes negligible as $T$ grows. Lastly, we show that our pseudo-posterior for the margins (\ref{eq:ma}) provides consistent inferences. We use $\tilde{\Pi}$ to denote the marginal margin adjustment posterior distribution implied by (\ref{eq:pseudo}):

\begin{theorem}\label{thm:doob2}
Under (A.1)--(A.4), define $\tilde{F}_i$ as in (\ref{eq:ma}). For any $x\in\RR$ and any neighborhood $A\ni F_i(x)$, we have
    $\lim_{T\to\infty} \tilde{\Pi}\{\tilde{F}_i(x)\in A\given \By_{1:T}\}=1\text{ a.s. }H^{\infty}_{\Btheta_0,F_{1:n}}.$
\end{theorem}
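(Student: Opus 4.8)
\textbf{Proof proposal for Theorem~\ref{thm:doob2}.}
The plan is to reduce the claim to a statement about the empirical distribution of the rank-posterior draws of the latent series, which can then be handled by combining Theorem~\ref{thm:ma}, Theorem~\ref{thm:doob1}, and a tower-property coupling. First I would record preliminaries. Under (A.1), the latent VAR~\eqref{eq:simple_latent} is a stable Gaussian VAR, hence strictly stationary and ergodic (indeed geometrically $\beta$-mixing), and since $\By_t$ is a fixed measurable map of $\latent_t$, the processes $\By_{1:\infty}$ and $(\latent_{1:\infty},\By_{1:\infty})$ are strictly stationary ergodic; by the ergodic theorem applied to $\mathbf{1}\{y_{t,i}\le x\}$ together with monotonicity and continuity of $F_i$, $\hat F_{T,i}(x):=T^{-1}\sum_{t=1}^T\mathbf{1}\{y_{t,i}\le x\}$ satisfies $\sup_x|\hat F_{T,i}(x)-F_i(x)|\cas 0$. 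Moreover the identification constraints on $\Theta$ force the stationary marginal of every latent coordinate to be $\Phi$ for \emph{every} admissible $\Btheta$, so the ``true'' latent values $z_{t,i}^0:=\Phi^{-1}(F_i(y_{t,i}))$ coincide a.s.\ with the realized path of~\eqref{eq:simple_latent}; applying Theorem~\ref{thm:ma} to this path yields the oracle bound $\max_t\{\Phi(z_{t,i}^0):y_{t,i}\le x\}\cas F_i(x)$ (this is just $\hat F_{T,i}\to F_i$ in disguise).

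Next I would rewrite the margin adjustment in terms of order statistics. Because any rank-posterior draw $\latent_{1:T}$ lies in $\dcal(\By_{1:T})$, the ordering of $(z_{1,i},\dots,z_{T,i})$ matches that of $(y_{1,i},\dots,y_{T,i})$, and since $\Phi$ is increasing,~\eqref{eq:ma} gives $\tilde F_i(x)=\Phi\bigl(z_{(r_T(x)),i}\bigr)$, where $r_T(x):=T\hat F_{T,i}(x)$ and $z_{(1),i}<\cdots<z_{(T),i}$ are the order statistics of column $i$. Since the empirical distribution $\hat\Phi_{T,i}(z):=T^{-1}\sum_t\mathbf{1}\{z_{t,i}\le z\}$ of the draw takes the value $r/T$ at $z_{(r),i}$, we get $\bigl|\tilde F_i(x)-\hat F_{T,i}(x)\bigr|=\bigl|\Phi(z_{(r_T(x)),i})-\hat\Phi_{T,i}(z_{(r_T(x)),i})\bigr|\le\sup_z|\hat\Phi_{T,i}(z)-\Phi(z)|$, uniformly in $x$ and with no extra handling of the random index $r_T(x)$. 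Combined with the preliminary $\hat F_{T,i}\to F_i$, it therefore suffices to show that $\sup_z|\hat\Phi_{T,i}(z)-\Phi(z)|\to 0$ in rank-posterior probability and a.s.\ $H^{\infty}_{\Btheta_0,F_{1:n}}$; the stated neighborhood conclusion then follows because, by construction of~\eqref{eq:pseudo}, the $\latent$-marginal of $\tilde\Pi$ is exactly the rank posterior and $\tilde F_i(x)\in[0,1]$ is bounded (so convergence in probability gives vanishing mass outside any neighborhood of $F_i(x)$).

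The crux is the convergence of $\sup_z|\hat\Phi_{T,i}-\Phi|$ for the rank-posterior draw. Here I would use Theorem~\ref{thm:doob1}: the $\Btheta$-marginal of the rank posterior concentrates on any neighborhood of $\Btheta_0$, and $\BC_\Btheta$ depends continuously on $\Btheta$, so up to a vanishing error the $\latent$-marginal of the rank posterior is $\mathrm{N}_{Tn}(\Bzero,\BC_{\Btheta_0})$ restricted and renormalized to the rank cell $\dcal(\By_{1:T})$. Now the coupling: this cell is a.s.\ the rank cell of the true latent path $\latent_{1:T}^0\sim\mathrm{N}_{Tn}(\Bzero,\BC_{\Btheta_0})$, so averaging the restricted-and-renormalized law over the data is precisely averaging $\mathrm{N}_{Tn}(\Bzero,\BC_{\Btheta_0})(\,\cdot\mid\text{rank cell of }\latent^0)$ over $\latent^0\sim\mathrm{N}_{Tn}(\Bzero,\BC_{\Btheta_0})$, which by the tower property returns $\mathrm{N}_{Tn}(\Bzero,\BC_{\Btheta_0})$ itself. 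Hence, marginally, a rank-posterior latent draw is distributed as a length-$T$ segment of the stationary ergodic latent VAR, whose marginal is $\Phi$, and $\sup_z|\hat\Phi_{T,i}-\Phi|\to 0$ a.s.\ by ergodic Glivenko--Cantelli. Exponential $\beta$-mixing of the VAR gives exponential concentration of $\sup_z|\hat\Phi_{T,i}-\Phi|$, which via Borel--Cantelli upgrades the data-averaged ``in probability'' statement to the required a.s.-in-data, in-rank-posterior-probability statement.

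I expect the main obstacle to be making the coupling step fully rigorous: the rank posterior conditions jointly across all $n$ series on a \emph{random} cell that is a complicated function of the realized latent path, and one must verify that (a) the Theorem~\ref{thm:doob1} localization of $\Btheta$ propagates to the $\latent$-marginal uniformly enough to be discarded, and (b) the tower-property identity and the subsequent mixing-based concentration survive conditioning on that data-dependent cell rather than on a fixed event. The order-statistic bound above dissolves what would otherwise be a third difficulty (plugging in the random index $r_T(x)$). Ties are excluded by (A.2), so the partial-order case needs no separate treatment here, though the same template would cover it.
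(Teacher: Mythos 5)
Your reduction is clean and, up to the final step, correct: with continuous margins the rank constraint forces the coordinatewise orderings of $\latent_{1:T}$ and $\By_{1:T}$ to agree, so $\tilde F_i(x)=\Phi\bigl(z_{(r_T(x)),i}\bigr)$ with $r_T(x)=T\hat F_{T,i}(x)$, and the bound $|\tilde F_i(x)-\hat F_{T,i}(x)|\le\sup_z|\hat\Phi_{T,i}(z)-\Phi(z)|$ together with ergodicity of $\By_{1:\infty}$ converts the theorem into a Glivenko--Cantelli statement for the latent draw under the rank posterior. The genuine gap is exactly the one you flag in your last paragraph, and it is not a technicality that can be waved through: the rank posterior for $\latent_{1:T}$ is the mixture $\int p\{\latent_{1:T}\mid\Btheta,\latent_{1:T}\in\dcal(\By_{1:T})\}\,\pi\{\Btheta\mid\latent_{1:T}\in\dcal(\By_{1:T})\}\,d\Btheta$, and your tower-property identity (conditional law of the cell averaged over the cell returns the unconditional law) is valid only for the $\Btheta_0$ component, because only under $\Btheta_0$ is $\dcal(\By_{1:T})$ the rank cell of a draw from $\mathrm{N}_{Tn}(\Bzero,\BC_{\Btheta})$. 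Theorem~\ref{thm:doob1} gives concentration of $\pi\{\cdot\mid\dcal(\By_{1:T})\}$ near $\Btheta_0$, but replacing the mixture by the $\Btheta_0$-conditional ``up to vanishing error'' requires the map $\Btheta\mapsto P_\Btheta\{\sup_z|\hat\Phi_{T,i}(z)-\Phi(z)|>\epsilon\mid\latent_{1:T}\in\dcal(\By_{1:T})\}$ to be continuous \emph{uniformly in $T$} on the shrinking neighborhoods where the posterior mass lives; the conditioning event is a $Tn$-dimensional cell whose conditional laws can become arbitrarily sensitive to $\Btheta$ as $T$ grows, and nothing in Theorem~\ref{thm:doob1} supplies such a modulus. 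So the argument is incomplete at its crux (the $\Btheta_0$-only version of your coupling, with Markov plus Borel--Cantelli to upgrade the data-averaged bound to an almost-sure one, is fine).

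For comparison, the paper takes a different and much shorter route that sidesteps the coupling entirely: it observes that $\tilde F_i(x)=\max_t\{\Phi(z_{t,i}):y_{t,i}\le x\}$ is measurable with respect to the $\sigma$-field generated by the marginal rank events $\{\dcal(y_{1:t,i})\}_{t=1}^{\infty}$ and, by Theorem~\ref{thm:ma}, converges a.s.\ to $F_i(x)$ under the true model; hence $F_i(x)$ is measurable with respect to the limiting rank $\sigma$-field, and Doob's theorem (Lemma~\ref{lem:doob}) delivers the stated concentration in one stroke. The martingale convergence underlying Doob's theorem automatically absorbs both difficulties you are fighting---the mixture over $\Btheta$ and the ``a.s.\ in the data'' qualifier---which is precisely the uniformity your direct argument is missing. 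Your order-statistic representation of \eqref{eq:ma} is a genuinely useful observation (it makes the mechanism of the margin adjustment transparent and quantitative), but to complete a proof along your lines you would either need to establish the uniform-in-$T$ continuity above or, more realistically, fall back on the Doob-style argument for the final step.
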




Taken together, these results provide asymptotic validation for our posterior approximation strategy via the pseudo-posterior $\tilde{p}(F_{1:n}\com\Btheta\given\By_{1:T})$ in (\ref{eq:pseudo}). 
We confirm these theoretical results using simulations in the supplementary material (Appendix~\ref{app:sim}). 

The main limitations here are 1) these results assume model correctness, both of the MTS copula model  (\ref{eq:ztrans}\com \ref{eq:stationary_link}) and the specific latent dynamics of \eqref{eq:simple_latent}, and 2) they use continuous margins. Our empirical evaluations with  simulated (Section~\ref{sec:sims}) and real (Section~\ref{sec:applications}) data remove these limitations and consider misspecified models with continuous and discrete MTS data.

\section{Competing methods}\label{sec:other}


We advocate the proposed approach as a versatile, general-purpose utility for probabilistic  forecasting of MTS data. To establish this, we compare the DGFC against  benchmark competitors that are used for probabilistic forecasting of MTS across a broad range of scenarios with minimal tuning. Specifically, we focus on Bayesian VARs (\textbf{BVAR}) and dynamic linear models ({\bf DLM}).  Bayesian VARs apply \eqref{eq:simple_latent} to $\By$ instead of $\Bz$ and thus have a (conditionally) Gaussian marginal distribution for each variable. We incorporate the estimation procedure and shrinkage priors from \cite{glp2015restat}, which jointly models the VAR coefficients and error covariance matrix using a  a conjugate matrix-normal-inverse-Wishart prior with ``Minnesota''-type shrinkage toward a random walk for each variable. We adopt the default prior settings from the   \texttt{BVAR} package \citep{kv2021jss} in \texttt{R}. Next, we consider DLMs (or state space models) of the form 
    \begin{align}
        \By_t&=\Bmu_t+\Bepsilon_t, &&\Bepsilon_t\iid\text{N}_n(\Bzero\com \BW)\label{eq:measurement}\\
        \Bmu_t&=\Bmu_{t-1}+\Bxi_t,&&\Bxi_t\iid\text{N}_n(\Bzero\com\BV) \label{eq:evolution}
    \end{align}
    and implemented in the \texttt{dlm} \citep{petris2010jss} and \texttt{KFAS} \citep{helske2017jss} packages in \texttt{R}. The latter also supports dynamic \textit{generalized} linear models (DGLMs) for modeling non-Gaussian data: DGLMs model count data by replacing   (\ref{eq:measurement}) with  $y_{t,i}\given \mu_{t,i}\sim \text{Poisson}(\exp(\mu_{t,i}))$ or a negative-binomial distribution, with serial and cross-sectional dependencies modeled via \eqref{eq:evolution}. 
     By default, the static parameters $\BW$ (when included) and $\BV$ are estimated via maximum likelihood. 

\section{Simulation results}\label{sec:sims}




We evaluate estimation, inference, and probabilistic forecasting using simulated data. To reflect practical usage, and to complement our theoretical analysis (Section~\ref{sec:theory}), we focus on misspecified models that wholly or partly deviate from the DGFC. We consider three data-generating processes (DGPs), each of which yields different cross-sectional and serial dependencies along with different marginal  stationary  distributions. For each of these DGPs, the margins are known exactly, which allows us to evaluate estimation and inference for the proposed margin adjustment. However, our approach does \emph{not} assume knowledge of any marginal stationary distributions. 

The DGPs are the following:
\begin{compactitem}
\item \textbf{VARMA}:  $
        \By_t=\Bb_0+\sum\limits_{\ell=1}^p\BB_\ell\By_{t-\ell}+\sum\limits_{j=1}^q\BC_j\Bepsilon_{t-j}+\Bepsilon_t$ with $ \Bepsilon_t\iid\text{N}_n(\Bzero\com\BSigma).
    $ Subject to conditions on the parameter matrices, this is a stationary Gaussian process and the marginal moments can be computed exactly 
    \citep{lutkepohl2006book}. 
    We set $(n\com p\com q)=(2\com3\com6)$ and draw random parameters $\BSigma\sim\text{IW}_n(n+1\com\BI_n)$, $\BC_l\sim\text{iid}\,\text{N}(0\com1)$, $\Bb_0\sim\text{N}_n(\Bzero\com\BI_n)$, and $\BB_l\sim\text{iid}\,\text{N}(0\com0.1)$, redrawing the $\{\BB_\ell\}_{\ell=1}^p$ until they imply stationarity.
\item \textbf{VARCH}: $ \By_t\given\By_{t-1}\sim t_n\{\nu\com\Bzero\com (\BA+\By_{t-1}\By_{t-1}^\tr)/\nu\}$ with $\nu>n+2$  is a multivariate generalization of the autoregressive conditional heteroskedasticity model based on the multivariate-$t$ distribution \citep{pw2005jasa}, 
     with  stationary distribution $\By_t\sim t_n\{\nu-1\com\Bzero\com\BA/(\nu-1)\}$. This DGP is challenging: the dynamics are nonlinear and the stationary distributions have heavy tails.    We set $n=2$ and draw $(\nu-n-2)\sim\text{Gamma}(1, 1)$ and $\BA\sim\text{IW}_n(n+1\com\BI_n)$.
\item \textbf{VARMA Copula}: we replace the DFM in (\ref{eq:model_xz}\com \ref{eq:model_factor}) with  $\unscaled_t \sim\text{VARMA}_n(p\com q)$ for $(n\com p\com q)=(2\com1\com1)$ and specify a heterogeneous mix of marginal distributions, $F_1=\text{Poisson}(\lambda = 5)$ and $F_2=\text{Skew-}t(\text{df}=3\com\mu=0\com\sigma=1\com\alpha=2)$ from \cite{ac2003jrssb}. The VARMA parameters are simulated as in the {\bf VARMA} case above. 
\end{compactitem}



    These DGPs present significant challenges: in each case, the DGFC has misspecified dynamics and no knowledge about the marginal stationary distributions. In that context, we  seek to answer two main questions:
\begin{compactenum}
    \item Does the margin adjustment posterior provide accurate estimates with precise and well-calibrated uncertainty quantification for unknown marginal stationary distributions?
    \item Does our (pseudo-)posterior predictive distribution provide accurate probabilistic forecasts compared to popular MTS alternatives?
\end{compactenum}

\subsection{Recovering the marginal stationary distributions}

Figure~\ref{fig:simulation_recovery} plots the true, marginal stationary distribution for a representative variable from each of the three DGPs, along with credible bands visualizing the margin adjustment posterior across increasing sample sizes $(T=25\com50\com100\com200)$. The posterior approximation uses Algorithm~\ref{alg:post} with 1,000 posterior draws retained from 10,000 (burning the first 5,000 and thinning every 5th). We use the same default priors in all cases. 

For each DGP, the margin adjustment posterior concentrates around the true stationary distribution as $T$ grows. This is similar to the behavior guaranteed by Theorem~\ref{thm:doob2}, but here our MTS copula model is misspecified. Notably, convergence is typically quite fast: $T=50$ or $T=100$ is enough to achieve strong concentration around the ground truth. Taken together, these results demonstrate the ability of the margin adjustment posterior to quickly and automatically learn the marginal stationary distributions of MTS exhibiting a range of distributional features (discrete support, heavy tail, skew, etc.) and across a wide variety of underlying dynamics. 

\begin{figure}[h]
    \centering
    \setlength{\tabcolsep}{0pt}
    \begin{tabular}{cccc} \small
        VARMA & VARCH & VARMA Copula & VARMA Copula  \\
        (Gaussian variable) & ($t$ variable) & (Poisson variable) & (Skew-$t$ variable) \\
       \includegraphics[width=0.18\textwidth]{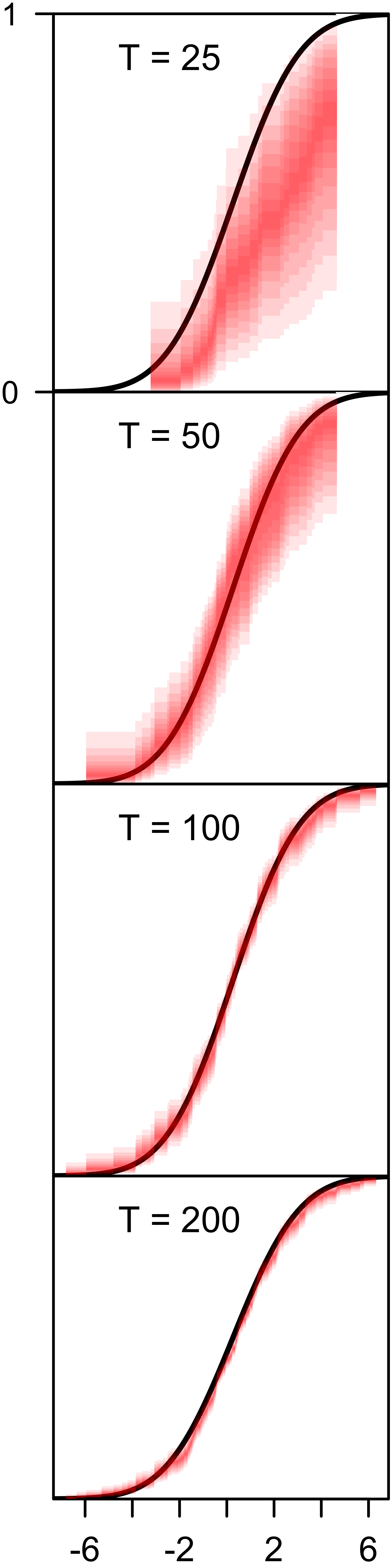}  
       & 
       \includegraphics[width=0.18\textwidth]{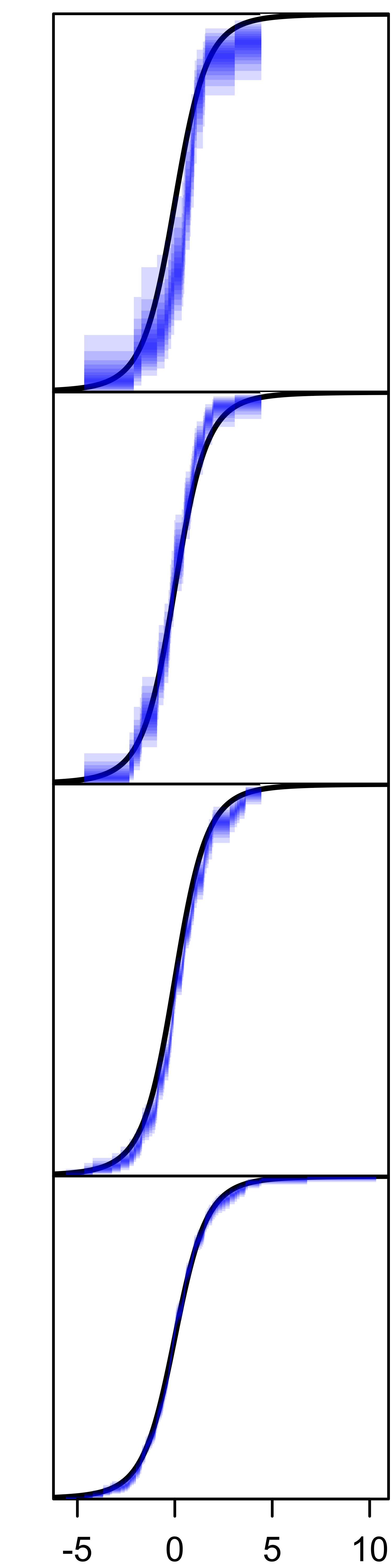}
       &
       \includegraphics[width=0.18\textwidth]{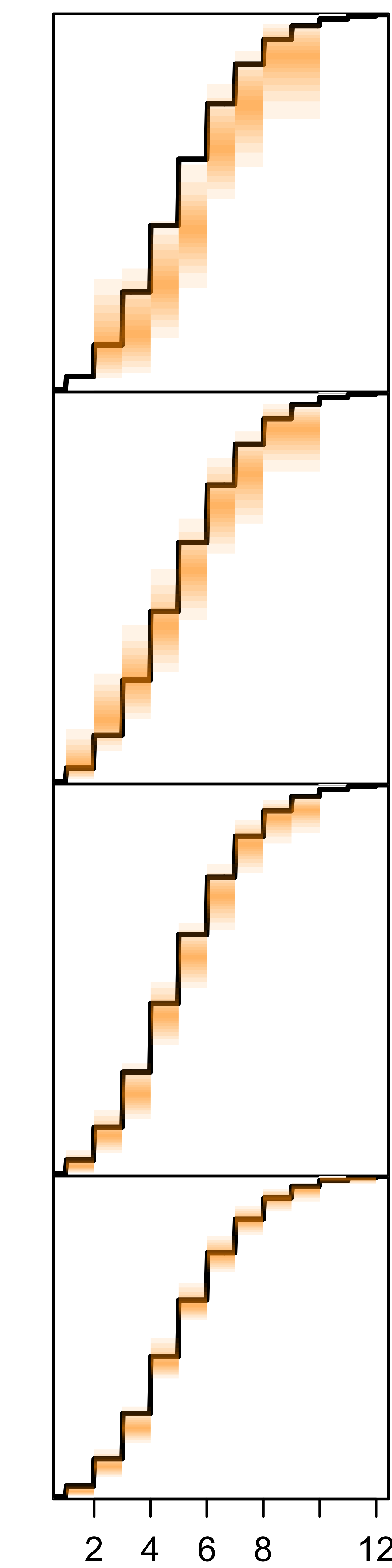}
       &
       \includegraphics[width=0.18\textwidth]{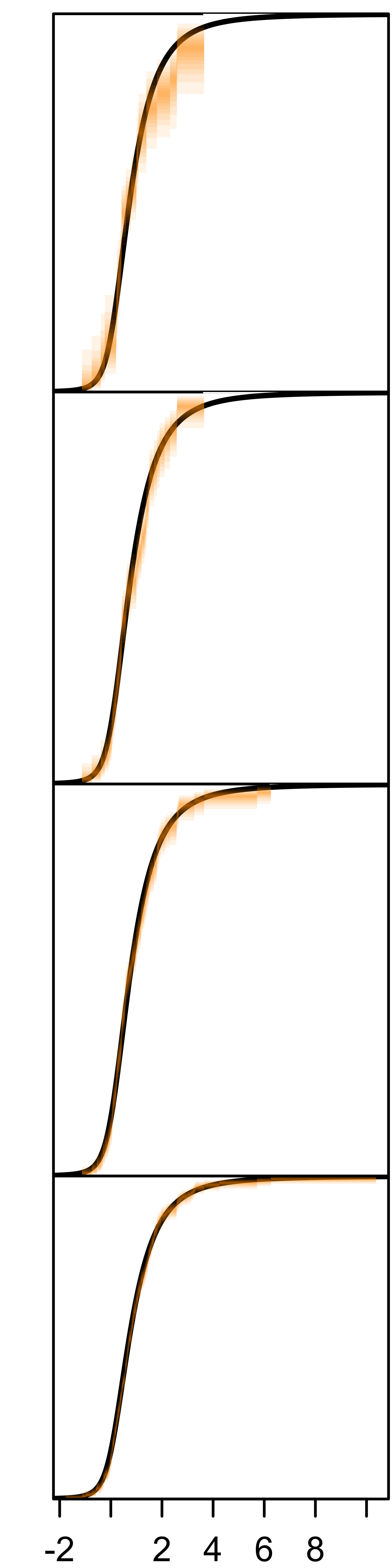}
    \end{tabular}
    \caption{\small Pointwise posterior credible bands ($\alpha=0.1\com0.2\com...\com0.9$) based on the sample quantiles of  $\{\tilde{F}^{(m)}\}_{m=1}^{1000}$ drawn from Algorithm~\ref{alg:post}. Isolating a representative variable for each DGP, we see the margin adjustment posterior concentrating around each true stationary distribution as $T$ grows.} 
    \label{fig:simulation_recovery}
\end{figure}






\subsection{Probabilistic forecast evaluation}\label{sec:forecastsim}
We next compare the probabilistic forecasts from the proposed DGFC against the competitors from Section~\ref{sec:other}  using simulated data from our three DGPs. We do this with a recursive, out-of-sample forecasting exercise. We first simulate a time series of length $T$ from a DGP. Then, we recursively refit each of our models on a expanding window of this data. So for each time point $t = t_0,\ldots,T$, where $t_0>h$ is some starting period for the exercise, we retrain every model (the DGFC, BVAR, and DLM) on the initial subset of data $\By_{1:t-h}$ and then generate point, interval, and density forecasts of the next $h=1\com 2\com ...\com 10$ periods. We compare the forecasts $\hat{\By}_{t+h}$ to the subsequent realizations $\By_{t+h}$, and average the forecasting performance over time for each model. 

We use three metrics to summarize forecasting accuracy. Point forecasting accuracy is measured using mean squared error for continuous variables and mean absolute error for count variables. Interval forecasts, computed as 95\% highest posterior predictive density regions approximated from the posterior draws, are evaluated by average size and empirical coverage. Finally, we evaluate the entire density forecast using the mean continuous ranked probability score \citep{gr2007jasa}. The metrics are averaged over a period from $t_0=10$ to $T=300$.

Figure~\ref{fig:simulation_forecasting} displays the forecasting results for each model and each horizon $h$ across the three DGPs.  Despite the model misspecifications, the DGFC consistently delivers the best density forecasts, highly competitive point forecasts, and the narrowest interval forecasts (although slightly below the nominal coverage) across all $H=10$ horizons.  The same model configurations and default priors are used in all cases, so we again observe the versatility and adaptability of our framework across a range of MTS with differing dynamics and distributional features.

\begin{figure}[h]
    \centering
    \setlength{\tabcolsep}{0pt} \small 
        \begin{tabular}{cccc}
        VARMA & VARCH & VARMA Copula & VARMA Copula  \\
        (Gaussian variable) & ($t$ variable) & (Poisson variable) & (Skew-$t$ variable) \\
    \includegraphics[width=0.24\textwidth]{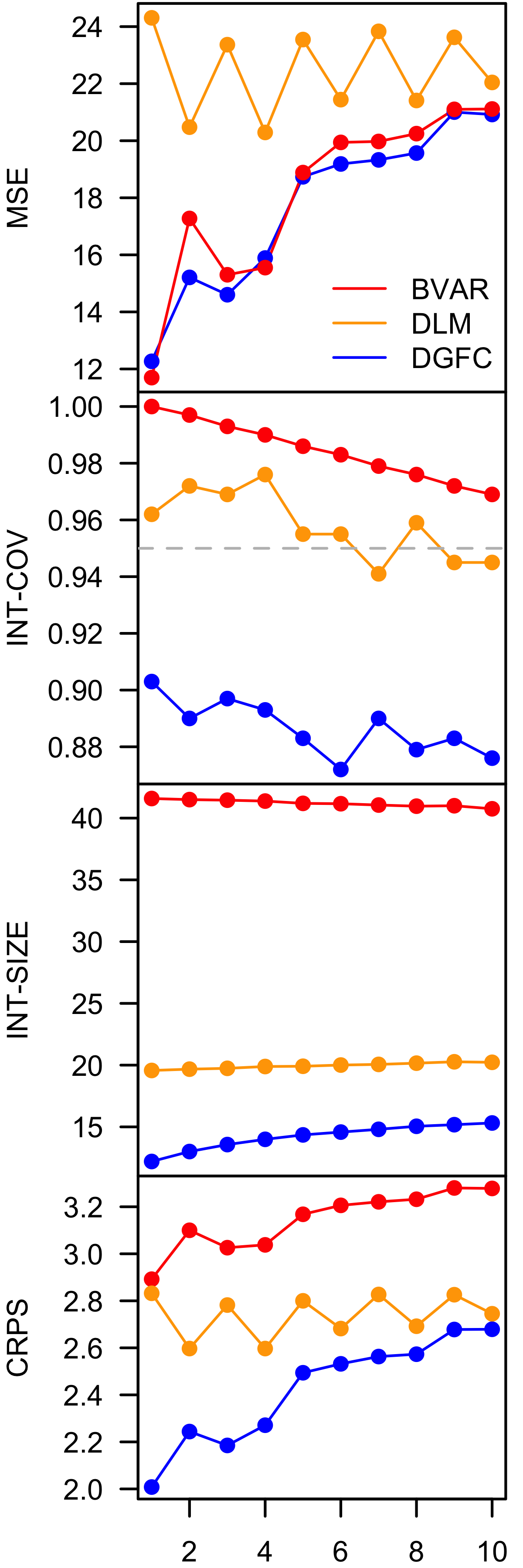}
    &
    \includegraphics[width=0.24\textwidth]{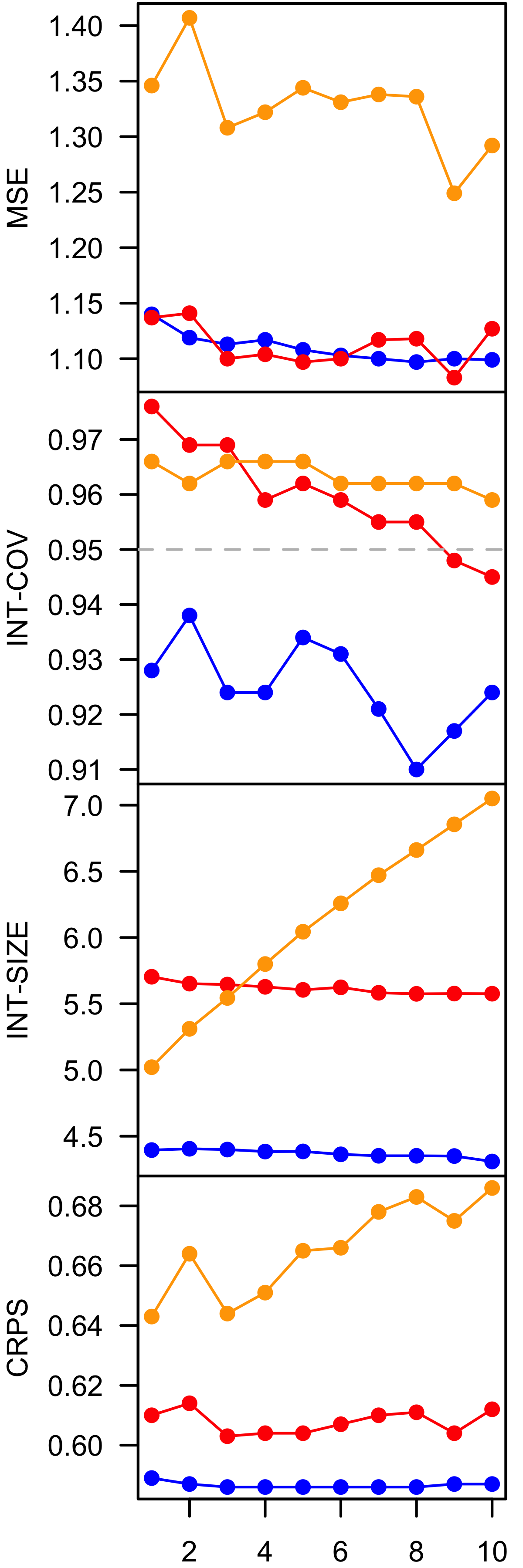}
    & 
    \includegraphics[width=0.24\textwidth]{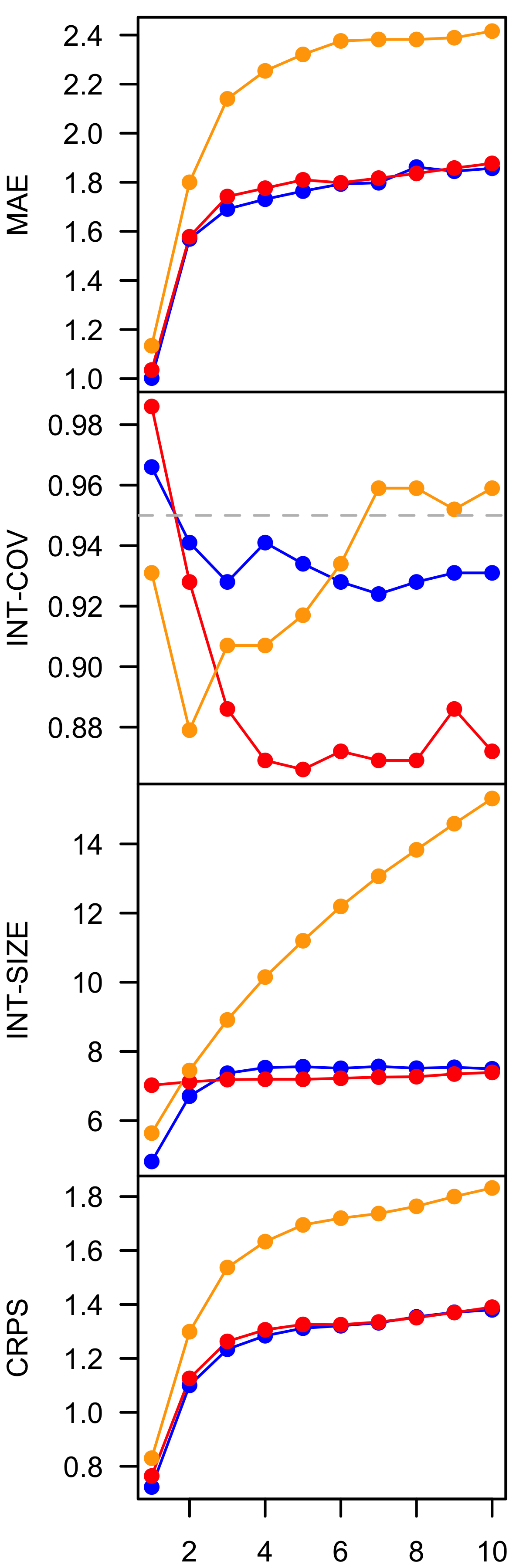}
    & \includegraphics[width=0.24\textwidth]{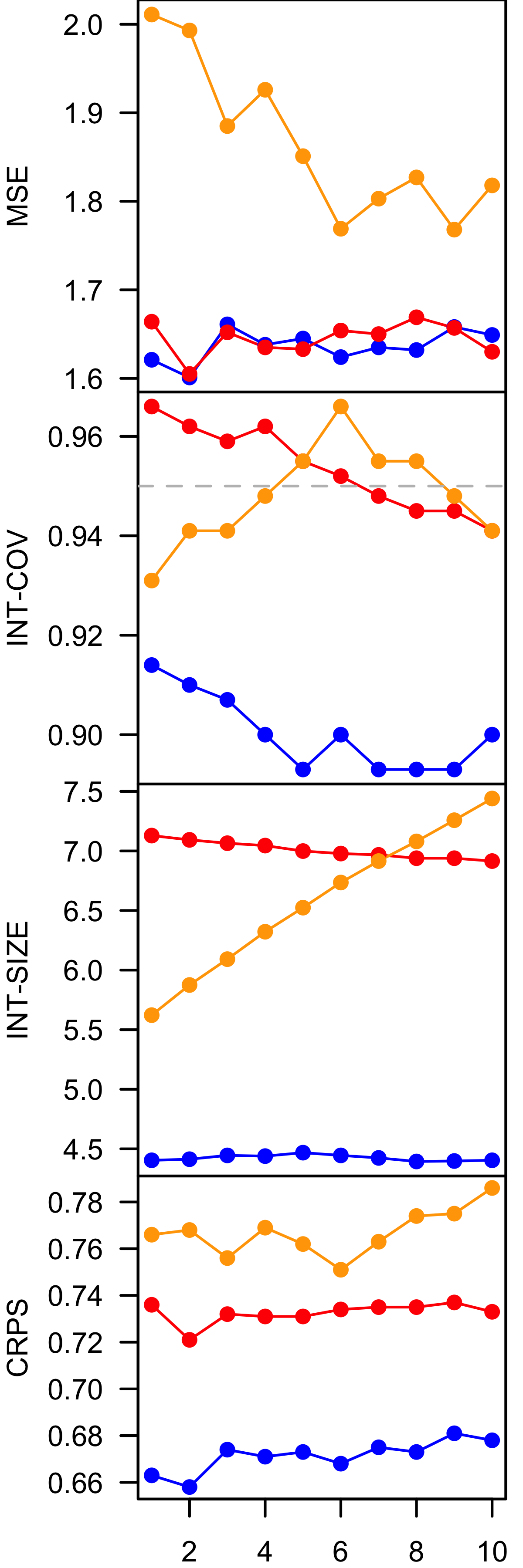}
    \\
     \multicolumn{4}{c}{horizon} 
    \end{tabular}
    \caption{\small Forecast metrics for horizons $h=1\com 2\com ...\com 10$, averaged over a synthetic dataset of length $T=300$ from each DGP. MSE and MAE are the mean squared and mean absolute error of the posterior predictive mean and median, respectively. Forecast intervals are 95\% highest posterior predictive density sets for continuous variables, and equal-tailed quantile intervals for discrete variables.}
    %
    %
    \label{fig:simulation_forecasting}
\end{figure}

\section{Real-data applications}\label{sec:applications}


\subsection{Crime counts in New South Wales, Australia}\label{sec:crime}

The New South Wales Bureau of Crime Statistics and Research releases data counting the number of monthly recorded cases for a variety of crimes\footnote{New South Wales Bureau of Crime Statistics and Research (\texttt{bcsr@justice.nsw.gov.au}): \url{https://bocsar.nsw.gov.au/statistics-dashboards/open-datasets/criminal-offences-data.html}}. These data constitute a count-valued MTS  where each  $y_{t,i}$ is the number of times crime $i$ was recorded in period $t$ (January 1995 through December 2023). Figure~\ref{fig:crime} displays the data for four of these variables (the rest are in the supplementary material), which can differ significantly in their distributional features. As such, this example provides an interesting test case for our approach. In our exercise, we focus on jointly modeling the $n=10$ variables listed in Table~\ref{tab:crime}. 

Figure~\ref{fig:crime_params} displays summaries of the parameter inferences from our model. On the left we plot the margin adjustment posterior for a selection of variables that exhibit  heterogeneous features (small counts, heavy tails, asymmetry). Using the empirical cumulative distribution function (ECDF) as a reference point, the margin adjustment appears adequate while also providing reasonable uncertainty quantification.  On the right, we provide heat map posterior summaries of  the stationary correlation $\text{corr}(\Bz_t)$ and first-lag cross-correlation $\text{corr}(\Bz_{t+1}\com\Bz_t)$ on the latent scale, which are proxies for the cross-sectional and serial dependence in $\By_t$ (defined in the supplementary material, Appendix~\ref{sec:dgfc_dependence}). The model identifies  strong positive (cross-sectional and serial) correlations among murder, attempted murder, manslaughter, and escaping custody. Blackmail is negatively (cross-sectionally and serially) correlated with murder, attempted murder, and escaping custody. In general, cross-sectional correlations persist serially, while autocorrelations are strong for each individual series. 




\begin{figure}[h]
    \centering
    \includegraphics[width = 0.27\textwidth]{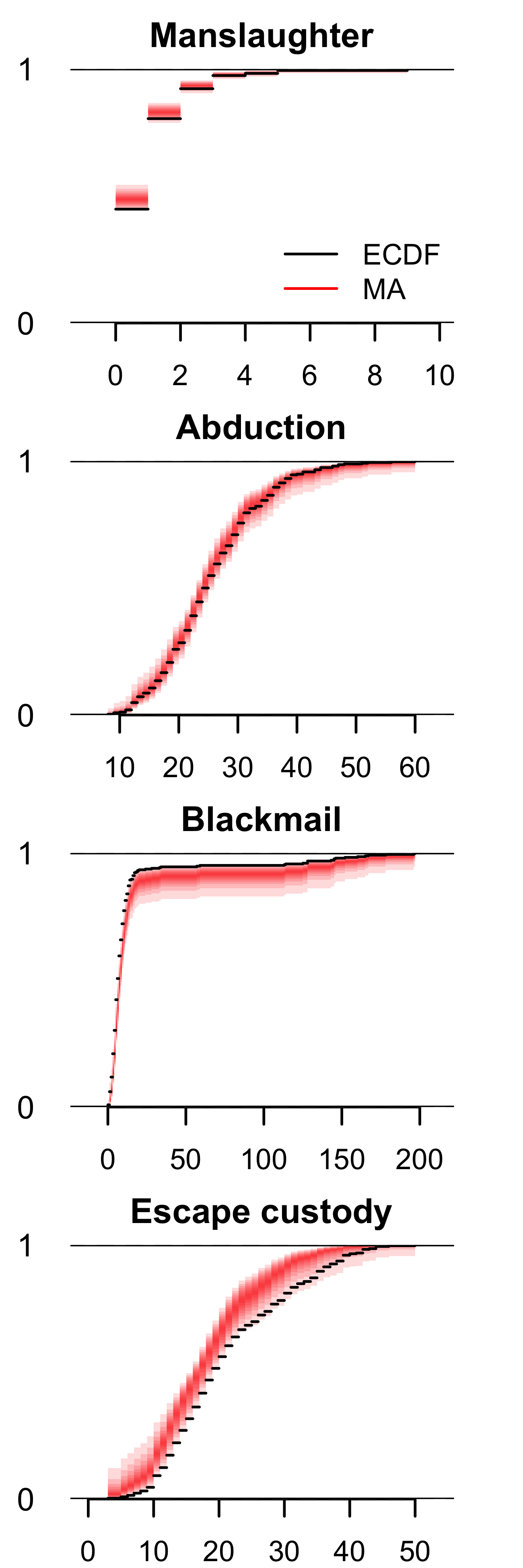}
    \includegraphics[width = 0.57\textwidth]{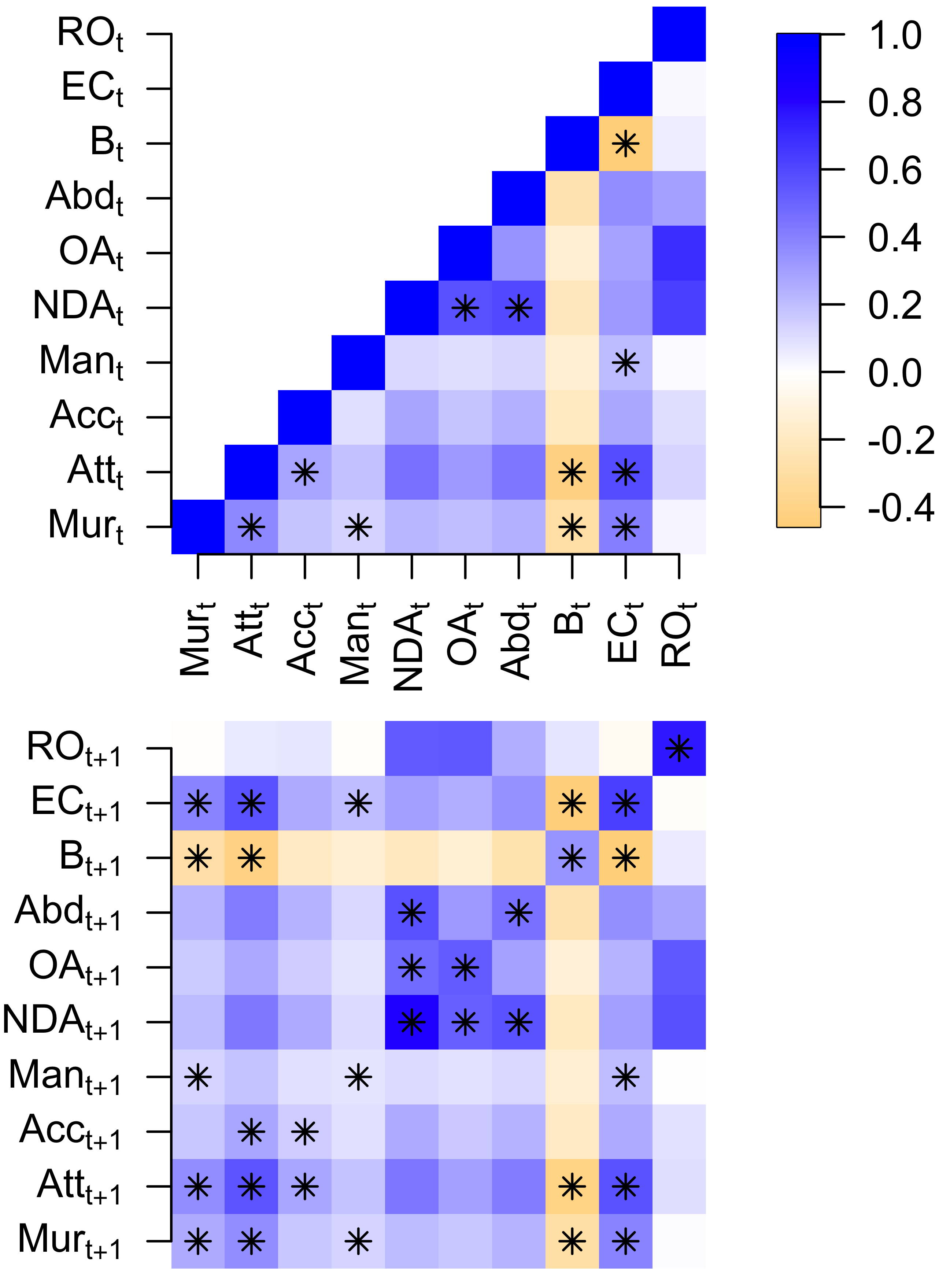}
    \caption{\small \textbf{Left}: the margin adjustment posterior (pointwise quantile bands of MCMC draws) and the raw ECDF for select crime variables. \textbf{Right}: posterior mean estimate of $\text{corr}(\Bz_t)$ (top) and $\text{corr}(\Bz_{t+1}\com\Bz_t)$ (bottom). A ``*'' indicates that the 95\% (with Bonferroni correction) highest posterior density interval excludes zero. \textbf{Abbreviations}: Murder (Mur), Attempted murder (Att), Accessory (Acc), Manslaughter (Man), Non-domestic assault (NDA), Assaulting an officer (OA), Abduction (Abd), Blackmail (B), Escaping Custody (EC), Resisting an officer (RO).}
    \label{fig:crime_params}
\end{figure}

Table~\ref{tab:crime} presents  one-month-ahead forecast metrics for each variable, averaged over the period January 2000 ($t_0$) to December 2023 ($T$). We compare our approach to  BVAR($p=1$) and two count-valued multivariate state space models: a Poisson DLGM (P-DGLM) and a negative binomial DGLM (NB-DGLM), both using the dynamics from \eqref{eq:evolution}.  For all variables, the DGFC delivers the best density forecasts and the smallest 95\% interval forecasts with near nominal coverage. For nine of ten variables, the DGFC provides superior point forecasts, often by a large amount. 

\setlength{\tabcolsep}{6pt}
\begin{table}[h] \small
    \centering
    \begin{tabular}{l  rrrr  r rrrr}
               & \textbf{MAE}  &  \textbf{COV}  &  \textbf{SIZE}  &  \textbf{CRPS}  &  & \textbf{MAE}  &  \textbf{COV}  &  \textbf{SIZE}  &  \textbf{CRPS} \\
    \cline{2-5}
    \cline{7-10}
     & \multicolumn{4}{|c|}{Murder}  &  & \multicolumn{4}{|c|}{Assaulting an officer}\\\cline{2-5}\cline{7-10}
{DGFC} & \textbf{2.16} & 0.96 & \textbf{11.23} & \textbf{1.53} & & \textbf{24.34} & 0.93 & \textbf{115.29} & \textbf{17.40} \\
{BVAR} & 3.20 & 1.00 & 437.63 & 18.37 & & 109.93 & 1.00 & 14683.04 & 849.64 \\
{P-DGLM} & 15.32 & 0.85 & 73.45 & 13.30 & & 368.79 & 0.78 & 1547.76 & 292.29 \\
{NB-DGLM} & 13.88 & 0.92 & 152.19 & 11.29 & & 312.50 & 0.93 & 3539.22 & 247.04 \\\cline{2-5}\cline{7-10}
     & \multicolumn{4}{|c|}{Attempted murder}  &  & \multicolumn{4}{|c|}{Abduction}\\\cline{2-5}\cline{7-10}
DGFC & \textbf{1.92} & 0.98 & \textbf{10.49} & \textbf{1.33} && \textbf{5.40} & 0.96 & \textbf{29.54} & \textbf{3.80} \\
BVAR & 5.12 & 1.00 & 774.11 & 38.94 & &14.10 & 1.00 & 1947.94 & 102.81 \\
P-DGLM & 4.55 & 0.89 & 18.82 & 3.50 && 23.85 & 0.83 & 89.04 & 18.52 \\
NB-DGLM & 4.20 & 0.93 & 43.09 & 3.40 & &27.16 & 0.98 & 301.17 & 21.70 \\\cline{2-5}\cline{7-10}
     & \multicolumn{4}{|c|}{Accessory to murder}  &&  \multicolumn{4}{|c|}{Blackmail}\\\cline{2-5}\cline{7-10}
DGFC &\textbf{ 0.39} & 0.99 & \textbf{2.66} & \textbf{0.30} && 11.51 & 0.95 & \textbf{30.03} & \textbf{9.13} \\
BVAR & 0.70 & 1.00 & 96.30 & 3.55 && \textbf{5.25} & 1.00 & 391.85 & 16.85 \\
P-DGLM & 0.96 & 0.96 & 5.44 & 0.84 & &37.60 & 0.83 & 97.04 & 24.15 \\
NB-DGLM & 0.72 & 0.98 & 7.43 & 0.58 && 19.92 & 0.92 & 213.46 & 16.24 \\\cline{2-5}\cline{7-10}
     & \multicolumn{4}{|c|}{Manslaughter}  & & \multicolumn{4}{|c|}{Escaping custody}\\\cline{2-5}\cline{7-10}
DGFC & \textbf{0.74} & 0.99 & \textbf{3.48} & \textbf{0.50} && \textbf{4.73} & 0.94 & \textbf{21.43} & \textbf{3.29} \\
BVAR & 0.98 & 1.00 & 125.96 & 4.68 && 8.45 & 1.00 & 1194.11 & 57.35 \\
P-DGLM & 1.27 & 0.93 & 5.24 & 0.95 && 27.66 & 0.82 & 125.57 & 21.40 \\
NB-DGLM & 1.05 & 0.97 & 8.41 & 0.78 & &39.01 & 0.94 & 395.19 & 30.50 \\\cline{2-5}\cline{7-10}
     & \multicolumn{4}{|c|}{Non-domestic assault}  &  &\multicolumn{4}{|c|}{Resisting an officer}\\\cline{2-5}\cline{7-10}
DGFC & \textbf{214.8} & 0.93 & \textbf{979.9} & \textbf{150.8} && \textbf{58.2} & 0.95 & \textbf{295.1} & \textbf{41.7} \\
BVAR & 1799.0 & 1.00 & 232293.5 & 13731.8 && 261.5 & 1.00 & 36659.8 & 2138.9 \\
P-DGLM & 2612.8 & 0.75 & 8940.7 & 2046.7 && 523.9 & 0.75 & 1587.9 & 394.9 \\
NB-DGLM & 3290.3 & 0.96 & 34814.6 & 2610.1 && 572.4 & 0.95 & 5793.4 & 446.9 \\
    \end{tabular}
    \caption{\small One-month-ahead forecast metrics for the crime count data in Figure~\ref{fig:crime}, averaged over the period January 2000 to December 2023. \textbf{MAE} is the mean absolute error of the posterior predictive median (smaller is better). \textbf{COV} and \textbf{SIZE} are the empirical coverage and average length of $95\%$ credible intervals based on posterior predictive quantiles (prefer small intervals with high coverage). \textbf{CRPS} is continuous ranked probability score of the entire forecast distribution (smaller is better). The best (smallest) values of MAE, SIZE, and CRPS are bolded for each variable. The proposed method (DGFC) consistently provides the best density forecasts, the smallest interval forecasts with good coverage, and superior point forecasts. 
}
    \label{tab:crime}
\end{table}

To evaluate the impact of the cross-section size $n$, we repeat this exercise  using only the first three variables (murder, attempted murder, and accessory) that were considered in \cite{ls2019jcgs} (see the supplemental material). The DGFC still performs as well or better than the competitors, but the gap is smaller. Most notably, the forecasting performance of the competing methods deteriorates substantially from $n=3$ to $n=10$, while the DGFC forecasting performance remains excellent in both cases.  These results highlight the advantages of the DGFC for both \emph{probabilistic} and \emph{multivariate} time series forecasting, especially for moderate cross-sections. 

\subsection{US macroeconomic aggregates}


Forecasting macroeconomic time series (e.g., Figure~\ref{fig:macro}) is an important part of the decision making process within central banks and financial institutions, and there is a massive econometric literature devoted to it. Recent work has emphasized the importance of capturing   asymmetry and heavy tails when forecasting variables like GDP growth (see \citealt{sv2016jbes}, \citealt{ccm2024jmcb} and references therein), and so this example provides a useful test case for our method. As such, we consider the $n=14$ macroeconomic series in \cite{ccm2016jbes}, and mimic their real-time forecasting exercise to compare our approach with the BVAR and DLM competitors (see Section~\ref{sec:other}). Macroeconomic time series data exhibit the unique feature of being revised several times after their initial release, sometimes quite substantially, so when embarking upon a forecasting exercise like the one described in Section~\ref{sec:forecastsim}, it is important to take into account the version or \textit{vintage} of the data that would have been available in the period that a given forecast was being made. This best measures how a model would have performed historically if it had been deployed in real-time. \cite{ccm2016jbes} describe the implementation details that we use here (see the supplemental materials for more information about the dataset).

Figure~\ref{fig:macro_params} displays summaries of the parameter inferences in our model. Again, the DGFC  captures the non-Gaussian marginals and provides reasonable uncertainty quantification.  The estimated latent correlations identify  high positive associations the real economic activity variables (output, productivity, etc). We also observe strong association contemporaneously and intertemporally amongst the funds rate and the two measures of inflation. These stylized results match various basic economic intuitions about the dependence structure of these variables. 

\begin{figure}[h]
    \centering
    \includegraphics[width = 0.27\textwidth]{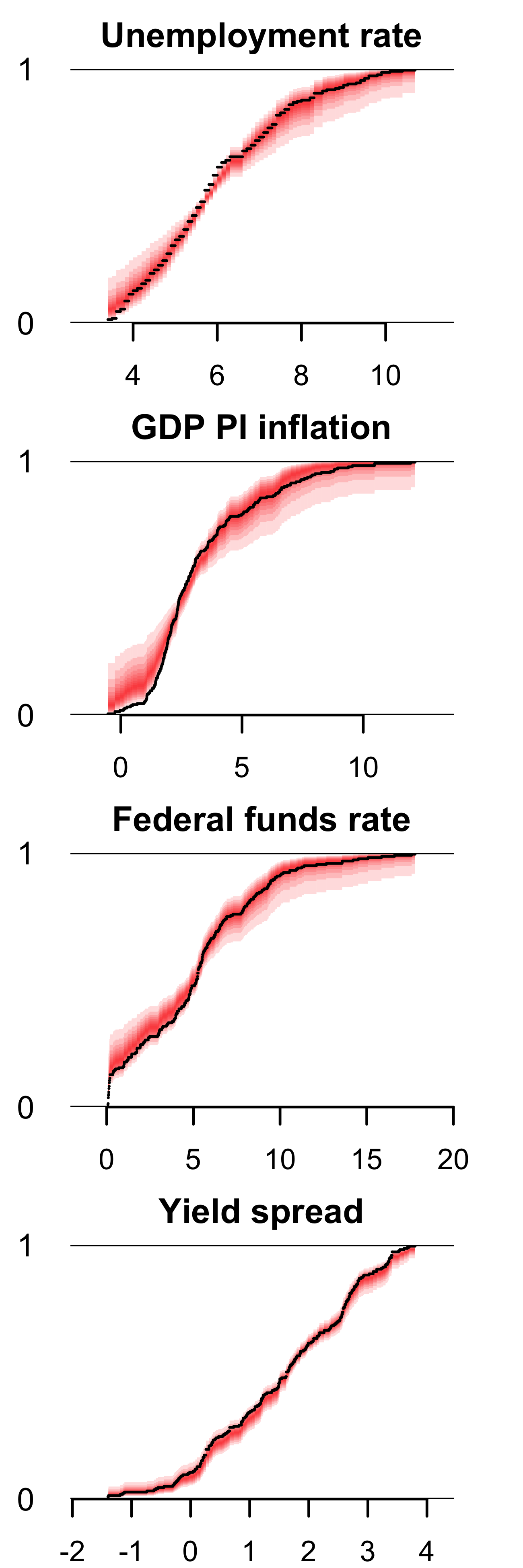}
    \includegraphics[width = 0.57\textwidth]{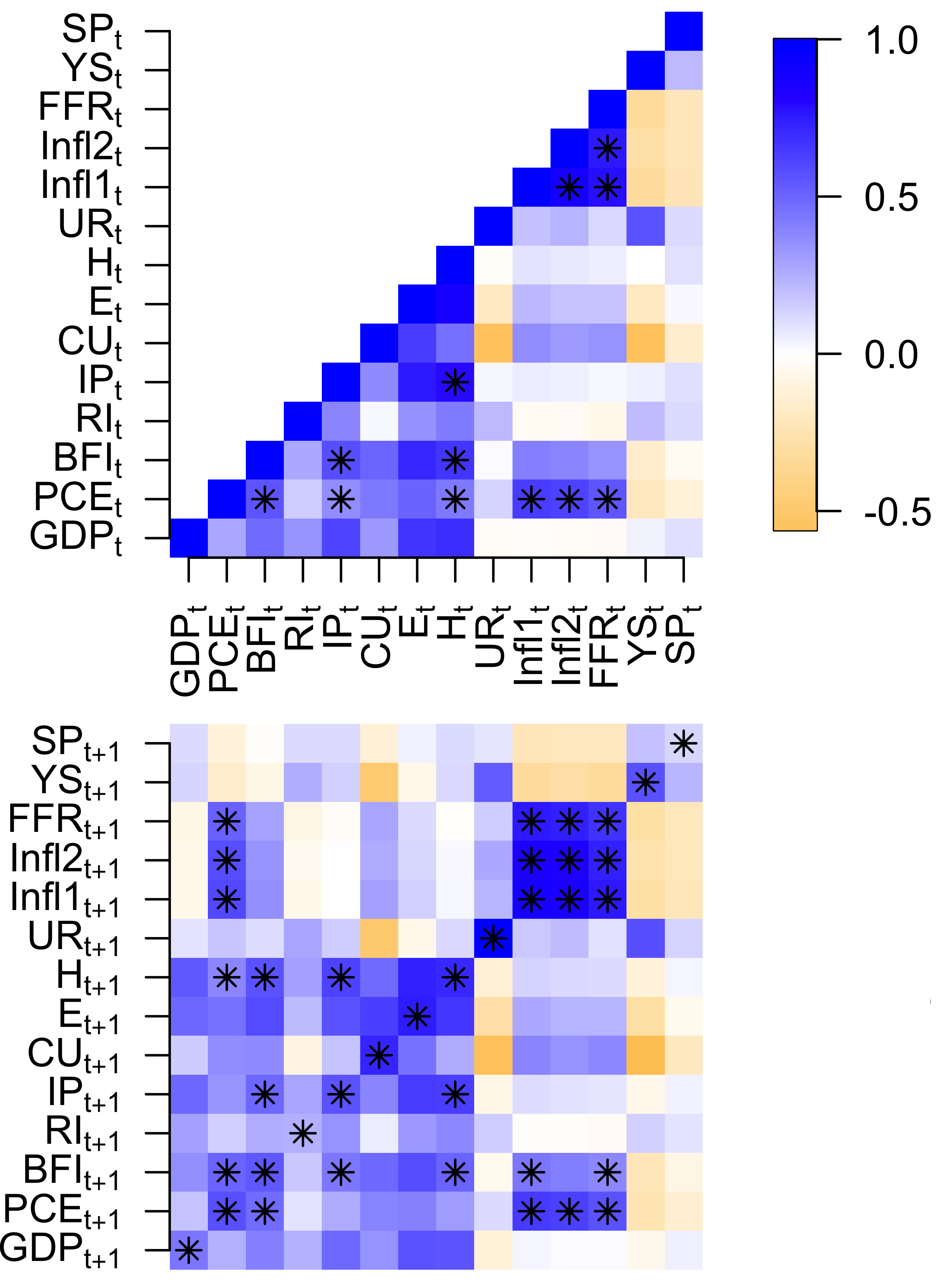}
    \caption{\small \textbf{Left}: the margin adjustment posterior (pointwise quantile bands of MCMC draws) and the raw ECDF for select macro variables. \textbf{Right}: posterior mean estimate of $\text{corr}(\Bz_t)$ (top) and $\text{corr}(\Bz_{t+1}\com\Bz_t)$ (bottom). A ``*'' indicates that the 95\% (with Bonferroni correction) highest posterior density interval excludes zero. Abbreviations are in Table~\ref{tab:macro}.}
    \label{fig:macro_params}
\end{figure}

Table~\ref{tab:macro_results} displays the one-step ahead forecasting results, averaged over the period 1985Q1 ($t_0$) to 2014Q1 ($T$). Across all variables, the DGFC delivers the smallest forecasting intervals---often by a wide margin---with close to nominal (95\%) coverage. By contrast, the DLM and BVAR are excessively conservative. Similarly favorable results are observed for density forecasting, with CRPS preferring DGFC for the majority of variables. Point forecasting performance tends to favor the DLM, although the DGFC remains competitive. These results suggest that the DGFC may be most useful for interval and density forecasting, both of which are essential for probabilistic uncertainty quantification. 

\setlength{\tabcolsep}{6pt}
\begin{table}[h] \small
    \centering
    \begin{tabular}{l  rrrr r  rrrr}
          & \textbf{MSE}  &  \textbf{COV}  &  \textbf{SIZE}  &  \textbf{CRPS}  &  &  \textbf{MSE}  &  \textbf{COV}  &  \textbf{SIZE}  &  \textbf{CRPS} \\
    \cline{2-5}\cline{7-10}
     & \multicolumn{4}{|c|}{Real GDP growth}  &  &  \multicolumn{4}{|c|}{PCE growth}\\\cline{2-5}\cline{7-10}
     
DGFC  &  6.31 & 0.97 & \textbf{11.32} & \textbf{1.37} &  & 6.75 & 0.90 & \textbf{9.29} & 1.44\\
BVAR  &  6.92 & 1.00 & 96.57 & 5.68 &  & 6.06 & 1.00 & 80.16 & 4.70\\
DLM  &  \textbf{4.55} & 0.99 & 16.37 & 1.39 &  & \textbf{4.30} & 1.00 & 13.45 & \textbf{1.26}\\\cline{2-5}\cline{7-10}

     & \multicolumn{4}{|c|}{BFI growth}  &  &  \multicolumn{4}{|c|}{Residential investment growth}\\\cline{2-5}\cline{7-10}
     
DGFC  & \textbf{ 67.97} & 0.93 & \textbf{30.37} & \textbf{4.33} &  & 251.07 & 0.96 & \textbf{68.98} & 8.27\\
BVAR  &  84.16 & 1.00 & 241.03 & 14.64 &  & 240.97 & 1.00 & 792.96 & 47.02\\
DLM  &  99.70 & 0.97 & 39.82 & 5.30 &  & \textbf{173.44} & 0.99 & 84.23 & \textbf{8.09}\\\cline{2-5}\cline{7-10}

     & \multicolumn{4}{|c|}{Industrial production growth}  &   & \multicolumn{4}{|c|}{Capacity utilization}\\\cline{2-5}\cline{7-10}
     
DGFC  &  \textbf{17.58} & 0.95 & \textbf{17.49} & \textbf{2.27} &  & 4.50 & 0.94 & \textbf{7.69} & \textbf{1.09}\\
BVAR  &  19.70 & 1.00 & 135.12 & 8.12 &  & \textbf{1.23} & 1.00 & 27.68 & 1.68\\
DLM  &  21.82 & 0.98 & 28.32 & 2.77 &  & 6.34 & 0.99 & 20.89 & 1.70\\\cline{2-5}\cline{7-10}

     & \multicolumn{4}{|c|}{Payroll employment growth}  &   & \multicolumn{4}{|c|}{Growth in aggregate hours}\\\cline{2-5}\cline{7-10}
     
DGFC  &  1.39 & 0.97 & \textbf{5.72} & \textbf{0.67} &  & \textbf{3.33} & 0.97 & \textbf{8.99} & \textbf{1.03}\\
BVAR  &  \textbf{1.06} & 1.00 & 31.11 & 1.85 &  & 4.74 & 1.00 & 70.56 & 4.19\\
DLM  &  1.35 & 1.00 & 10.90 & 0.83 &  & 5.03 & 0.99 & 16.15 & 1.41\\\cline{2-5}\cline{7-10}

     & \multicolumn{4}{|c|}{Unemployment rate}  &   & \multicolumn{4}{|c|}{GDP PI inflation}\\\cline{2-5}\cline{7-10}
     
DGFC  &  0.44 & 0.92 & \textbf{2.10} & 0.34 &  & 1.58 & 0.95 & \textbf{4.81} & \textbf{0.66}\\
BVAR  &  0.04 & 1.00 & 5.48 & \textbf{0.33} &  & 1.29 & 1.00 & 19.61 & 1.16\\
DLM  &  \textbf{0.23} & 1.00 & 8.48 & 0.55 &  & \textbf{0.98} & 1.00 & 11.76 & 0.83\\\cline{2-5}\cline{7-10}

     & \multicolumn{4}{|c|}{PCE inflation}  &  &  \multicolumn{4}{|c|}{Federal funds rate}\\\cline{2-5}\cline{7-10}
     
DGFC  &  3.46 & 0.91 & \textbf{5.98} & \textbf{0.94} &  & 3.06 & 0.95 & \textbf{6.12} & 0.94\\
BVAR  &  3.01 & 1.00 & 21.35 & 1.37 &  & 0.58 & 1.00 & 15.72 & 0.93\\
DLM  &  \textbf{2.36} & 0.99 & 12.05 & 0.98 &  & \textbf{0.77} & 1.00 & 9.17 & \textbf{0.68}\\\cline{2-5}\cline{7-10}

     & \multicolumn{4}{|c|}{Yield spread}  &  &  \multicolumn{4}{|c|}{Real S\&P 500 growth}\\\cline{2-5}\cline{7-10}
     
DGFC  &  0.79 & 0.93 & \textbf{3.13} & \textbf{0.51 }&  & 756.71 & 0.91 & \textbf{83.93} & 14.33\\
BVAR  &  \textbf{0.26} & 1.00 & 10.48 & 0.60 &  & 939.14 & 1.00 & 1686.73 & 100.5\\
DLM  &  0.35 & 1.00 & 7.32 & \textbf{0.51} &  & \textbf{698.01} & 0.97 & 99.12 & \textbf{13.93}\\
    \end{tabular}
    \caption{\small One-quarter-ahead forecast metrics for the macroeconomic data in Figure~\ref{fig:crime}, averaged over the period 1985Q1 to 2014Q1. 
    \textbf{MSE} is the mean squared error of the posterior predictive median (smaller is better); \textbf{COV}, \textbf{SIZE}, and \textbf{CRPS} are defined in Table~\ref{tab:crime}. The best (smallest) values of MSE, SIZE, and CRPS are bolded for each variable. 
    The proposed method (DGFC) consistently offers superior density and interval forecasts and competitive point forecasts.}
    \label{tab:macro_results}
\end{table}

\section{Conclusion}\label{sec:conclusion}

In this article, we proposed a posterior approximation strategy MTS copula models and applied it to a Gaussian copula based on a dynamic factor model in order to forecast MTS with a heterogeneous mix of distributional features and data types across variables. Our novel estimation scheme allows us to circumvent the usual limitations of MTS copula models by producing full Bayesian uncertainty quantification for all model unobservables without requiring variable-specific tuning or unwieldy posterior samplers. 

We demonstrated the efficacy of our approach in theoretical analyses and simulation studies, and in real data applications to crime counts and macroeconomic aggregates. Across these explorations, we confronted our method with a wide range of data, both real and simulated: cross-sections of all counts, cross-sections of all continuous variables, mixed cross-sections, variables with skew, heavy tails, multiple modes, zero-inflation. We also toured various forms of time series dynamics: autoregressive, moving average, linear, nonlinear, conditionally heteroskedastic, as well as real data that may not even be stationary. In all cases, our proposed approach worked exceptionally well without any user input or tuning: the results were generated using the same default priors and posterior sampling algorithm. Any variable- or application-specific data types and distributional features were automatically learned 
without any ad hoc interventions by the analyst. As such, we offer our approach as a versatile, general-purpose utility for MTS forecasting that works well across of range of applications with minimal user-intensive tuning. 

There are several promising avenues for future work. First, our approach may be extended for binary and categorical variables,  for instance by replacing the rank posterior approximation (Section~\ref{sec:erl}) with a rank-probit posterior \citep{fk2022aoas}. Second, we aim to incorporate nonstationary dynamics in \eqref{eq:ztrans} with a more general link \eqref{eq:link}, which would require careful modifications of the margin adjustment (Section~\ref{sec:ma}). Third, our rank posterior approximation uses 
ranks instead of observed data values, which may boost robustness---and thus combat a main limitation of Gaussian copulas. 
Finally, our broader framework applies for general MTS copula models; it would be interesting to explore the rich collection of non-Gaussian copulas. 



\bibliographystyle{ecta}
\bibliography{main.bib}

\begin{thebibliography}{42}
\newcommand{\enquote}[1]{``#1''}
\expandafter\ifx\csname natexlab\endcsname\relax\def\natexlab#1{#1}\fi

\bibitem[\protect\citeauthoryear{Aguilar and West}{Aguilar and West}{2000}]{aw2000jbes}
\textsc{Aguilar, O. and M.~West} (2000): \enquote{Bayesian dynamic factor models and portfolio allocation,} \emph{Journal of Business and Economic Statistics}, 18, 338--357.

\bibitem[\protect\citeauthoryear{Aktekin, Polson, and Soyer}{Aktekin et~al.}{2018}]{aps2018ba}
\textsc{Aktekin, T., N.~Polson, and R.~Soyer} (2018): \enquote{Sequential Bayesian analysis of multivariate count data,} \emph{Bayesian Analysis}, 13, 385--409.

\bibitem[\protect\citeauthoryear{Aktekin, Polson, and Soyer}{Aktekin et~al.}{2020}]{aps2020jtsa}
---\hspace{-.1pt}---\hspace{-.1pt}--- (2020): \enquote{A family of multivariate non-gaussian time series models,} \emph{Journal of Time Series Analysis}, 41, 691--721.

\bibitem[\protect\citeauthoryear{Azzalini and Capitanio}{Azzalini and Capitanio}{2003}]{ac2003jrssb}
\textsc{Azzalini, A. and A.~Capitanio} (2003): \enquote{Distributions generated by perturbation of symmetry with emphasis on a multivariate skew t-distribution,} \emph{Journal of the Royal Statistical Society, Series B}, 65, 367--389.

\bibitem[\protect\citeauthoryear{Bai and Wang}{Bai and Wang}{2015}]{bw2015jbes}
\textsc{Bai, J. and P.~Wang} (2015): \enquote{Identification and Bayesian estimation of dynamic factor models,} \emph{Journal of Business and Economic Statistics}, 33, 221--240.

\bibitem[\protect\citeauthoryear{Bhattacharya and Dunson}{Bhattacharya and Dunson}{2011}]{bd2011bka}
\textsc{Bhattacharya, A. and D.~Dunson} (2011): \enquote{Sparse Bayesian infinite factor models,} \emph{Biometrika}, 98, 291--306.

\bibitem[\protect\citeauthoryear{Carriero, Clark, and Marcellino}{Carriero et~al.}{2016}]{ccm2016jbes}
\textsc{Carriero, A., T.~Clark, and M.~Marcellino} (2016): \enquote{Common drifting volatility in large Bayesian VARs,} \emph{Journal of Business and Economic Statistics}, 34, 375--390.

\bibitem[\protect\citeauthoryear{Carriero, Clark, and Marcellino}{Carriero et~al.}{forthcoming}]{ccm2024jmcb}
---\hspace{-.1pt}---\hspace{-.1pt}--- (forthcoming): \enquote{Capturing macro-economic tail risks with Bayesian vector autoregressions,} \emph{Journal of Money, Credit, and Banking}.

\bibitem[\protect\citeauthoryear{Chan and Jeliazkov}{Chan and Jeliazkov}{2009}]{cj2009ijmmno}
\textsc{Chan, J. and I.~Jeliazkov} (2009): \enquote{Efficient simulation and integrated likelihood estimation in state space models,} \emph{International Journal of Mathematical Modelling and Numerical Optimisation}, 1, 101--120.

\bibitem[\protect\citeauthoryear{Chen and Fan}{Chen and Fan}{2006}]{cf2006joe}
\textsc{Chen, X. and Y.~Fan} (2006): \enquote{Estimation and model selection of semiparametric copula-based multivariate dynamic models under copula misspecification,} \emph{Journal of Econometrics}, 135, 125 -- 154.

\bibitem[\protect\citeauthoryear{Chiang, Zito, Rao, and Vannucci}{Chiang et~al.}{2024}]{czrv2024chapter}
\textsc{Chiang, S., J.~Zito, V.~Rao, and M.~Vannucci} (2024): \enquote{Time-Series Analysis,} in \emph{Statistical Methods in Epilepsy}, ed. by S.~Chiang, V.~Rao, and M.~Vannucci, CRC Press, 166--200.

\bibitem[\protect\citeauthoryear{Chib, Omori, and Asai}{Chib et~al.}{2009}]{coa2009chapter}
\textsc{Chib, S., Y.~Omori, and M.~Asai} (2009): \enquote{Multivariate stochastic volatility,} in \emph{Handbook of Financial Time Series}, ed. by T.~Mikosch, J.-P. Krei\ss, R.~Davis, and T.~Andersen, Springer-Verlag, 365--400.

\bibitem[\protect\citeauthoryear{Creal and Tsay}{Creal and Tsay}{2015}]{ct2015joe}
\textsc{Creal, D. and R.~Tsay} (2015): \enquote{High-dimensional dynamic stochastic copula models,} \emph{Journal of Econometrics}, 189, 335--345.

\bibitem[\protect\citeauthoryear{Debaly and Truquet}{Debaly and Truquet}{2023}]{dt2023bern}
\textsc{Debaly, Z.-M. and L.~Truquet} (2023): \enquote{Multivariate time series models for mixed data,} \emph{Bernoulli}, 29, 669--695.

\bibitem[\protect\citeauthoryear{Durbin and Koopman}{Durbin and Koopman}{2002}]{dk2002bka}
\textsc{Durbin, J. and S.~J. Koopman} (2002): \enquote{A simple and efficient simulation smoother for state space time series analysis,} \emph{Biometrika}, 89, 603--616.

\bibitem[\protect\citeauthoryear{Fan, Han, and Park}{Fan et~al.}{2023}]{fhp2023joe}
\textsc{Fan, Y., F.~Han, and H.~Park} (2023): \enquote{Estimation and inference in a high-dimensional semiparametric Gaussian copula vector autoregressive model,} \emph{Journal of Econometrics}, 237, 1--28.

\bibitem[\protect\citeauthoryear{Feldman and Kowal}{Feldman and Kowal}{2022}]{fk2022aoas}
\textsc{Feldman, J. and D.~Kowal} (2022): \enquote{Bayesian data synthesis and the utility-risk trade-off for mixed epidemiological data,} \emph{Annals of Applied Statistics}, 16, 2577--2602.

\bibitem[\protect\citeauthoryear{Feldman and Kowal}{Feldman and Kowal}{2024}]{fk2024jmlr}
---\hspace{-.1pt}---\hspace{-.1pt}--- (2024): \enquote{Nonparametric copula models for multivariate, mixed, and missing data,} \emph{Journal of Machine Learning Research}, 25, 1--50.

\bibitem[\protect\citeauthoryear{Fokianos}{Fokianos}{2024}]{fokianos2024es}
\textsc{Fokianos, K.} (2024): \enquote{Multivariate count time series modelling,} \emph{Econometrics and Statistics}, 31, 100--116.

\bibitem[\protect\citeauthoryear{Geweke}{Geweke}{1977}]{geweke1977chapter}
\textsc{Geweke, J.} (1977): \enquote{The dynamic factor analysis of economic time series,} in \emph{Latent Variables in Socio-Economic Models}, ed. by D.~Aigner and A.~Goldberger, North Holland, chap.~19, 365 -- 383.

\bibitem[\protect\citeauthoryear{Giannone, Lenza, and Primiceri}{Giannone et~al.}{2015}]{glp2015restat}
\textsc{Giannone, D., M.~Lenza, and G.~Primiceri} (2015): \enquote{Prior selection for vector autoregressions,} \emph{Review of Economics and Statistics}, 97, 436--451.

\bibitem[\protect\citeauthoryear{Gneiting and Raftery}{Gneiting and Raftery}{2007}]{gr2007jasa}
\textsc{Gneiting, T. and A.~Raftery} (2007): \enquote{Strictly proper scoring rules, prediction, and estimation,} \emph{Journal of the American Statistical Association}, 102, 359--378.

\bibitem[\protect\citeauthoryear{Gu and Ghosal}{Gu and Ghosal}{2009}]{gg2009jspi}
\textsc{Gu, J. and S.~Ghosal} (2009): \enquote{Bayesian ROC curve estimation under binormality using a rank likelihood,} \emph{Journal of Statistical Planning and Inference}, 139, 2076--2083.

\bibitem[\protect\citeauthoryear{Helske}{Helske}{2017}]{helske2017jss}
\textsc{Helske, J.} (2017): \enquote{\texttt{KFAS}: exponential family state space models in \texttt{R},} \emph{Journal of Statistical Software}, 78, 1--39.

\bibitem[\protect\citeauthoryear{Hoff}{Hoff}{2007}]{hoff2007aoas}
\textsc{Hoff, P.} (2007): \enquote{Extending the rank likelihood for semiparametric copula estimation,} \emph{Annals of Applied Statistics}, 1, 265--283.

\bibitem[\protect\citeauthoryear{Karlsson}{Karlsson}{2013}]{karlsson2013chapter}
\textsc{Karlsson, S.} (2013): \enquote{Forecasting with Bayesian vector autoregression,} in \emph{Handbook of Economic Forecasting}, ed. by G.~Elliott and A.~Timmermann, North Holland, vol.~2B, chap.~15, 791--897.

\bibitem[\protect\citeauthoryear{Kim and Nelson}{Kim and Nelson}{1998}]{kn1998restat}
\textsc{Kim, C.-J. and C.~Nelson} (1998): \enquote{Business cycle turning points, a new coindicent index, and tests of duration dependence based on a dynamic factor model with regime switching,} \emph{Review of Economics and Statistics}, 80, 188--201.

\bibitem[\protect\citeauthoryear{Kuschnig and Vashold}{Kuschnig and Vashold}{2021}]{kv2021jss}
\textsc{Kuschnig, N. and L.~Vashold} (2021): \enquote{\texttt{BVAR}: {B}ayesian vector autoregressions with hierarchical prior selection in \texttt{R},} \emph{Journal of Statistical Software}, 100, 1--27.

\bibitem[\protect\citeauthoryear{Lijoi, Prünster, and Walker}{Lijoi et~al.}{2007}]{lpw2007et}
\textsc{Lijoi, A., I.~Prünster, and S.~Walker} (2007): \enquote{Bayesian consistency for stationary models,} \emph{Econometric Theory}, 23, 749--759.

\bibitem[\protect\citeauthoryear{Loaiza-Maya and Smith}{Loaiza-Maya and Smith}{2019}]{ls2019jcgs}
\textsc{Loaiza-Maya, R. and M.~S. Smith} (2019): \enquote{Variational Bayes estimation of discrete-margined copula models with application to time series,} \emph{Journal of Computational and Graphical Statistics}, 28, 523--539.

\bibitem[\protect\citeauthoryear{L\"{u}tkepohl}{L\"{u}tkepohl}{2006}]{lutkepohl2006book}
\textsc{L\"{u}tkepohl, H.} (2006): \emph{New Introduction to Multiple Time Series Analysis}, Springer.

\bibitem[\protect\citeauthoryear{Murray, Dunson, Carin, and Lucas}{Murray et~al.}{2013}]{mdcl2013jasa}
\textsc{Murray, J., D.~Dunson, L.~Carin, and J.~Lucas} (2013): \enquote{Bayesian Gaussian copula factor models for mixed data,} \emph{Journal of the American Statistical Association}, 108, 656--665.

\bibitem[\protect\citeauthoryear{Patton}{Patton}{2013}]{patton2013chapter}
\textsc{Patton, A.} (2013): \enquote{Copula methods for forecasting multivariate time series,} in \emph{Handbook of Economic Forecasting}, ed. by G.~Elliott and A.~Timmermann, North Holland, vol.~2B, chap.~16, 899--960.

\bibitem[\protect\citeauthoryear{Petris}{Petris}{2010}]{petris2010jss}
\textsc{Petris, G.} (2010): \enquote{An \texttt{R} package for dynamic linear models,} \emph{Journal of Statistical Software}, 36, 1--16.

\bibitem[\protect\citeauthoryear{Pitt and Walker}{Pitt and Walker}{2005}]{pw2005jasa}
\textsc{Pitt, M. and S.~Walker} (2005): \enquote{Stationary time series models using auxiliary variables with applications,} \emph{Journal of the American Statistical Association}, 100, 554--564.

\bibitem[\protect\citeauthoryear{R\'{e}millard, Papageorgiou, and Soustra}{R\'{e}millard et~al.}{2012}]{rps2012jma}
\textsc{R\'{e}millard, B., N.~Papageorgiou, and F.~Soustra} (2012): \enquote{Copula-based semiparametric models for multivariate time series,} \emph{Journal of Multivariate Analysis}, 110, 30 -- 42.

\bibitem[\protect\citeauthoryear{Smith and Vahey}{Smith and Vahey}{2016}]{sv2016jbes}
\textsc{Smith, M. and S.~Vahey} (2016): \enquote{Asymmetric forecast densities for U.S. macroeconomic variables from a Gaussian copula model of cross-sectional and serial dependence,} \emph{Journal of Business and Economic Statistics}, 34, 416--434.

\bibitem[\protect\citeauthoryear{Smith}{Smith}{2015}]{smith2015ijf}
\textsc{Smith, M.~S.} (2015): \enquote{Copula modelling of dependence in multivariate time series,} \emph{International Journal of Forecasting}, 31, 815--833.

\bibitem[\protect\citeauthoryear{Smith}{Smith}{2023}]{smith2023es}
---\hspace{-.1pt}---\hspace{-.1pt}--- (2023): \enquote{Implicit copulas: an overview,} \emph{Econometrics and Statistics}, 28, 81--104.

\bibitem[\protect\citeauthoryear{Stock and Watson}{Stock and Watson}{2016}]{sw2016chapter}
\textsc{Stock, J. and M.~Watson} (2016): \enquote{Dynamic factor models, factor-augmented vector autoregressions, and structural vector autoregressions in macroeconomics,} in \emph{Handbook of Macroeconomics}, ed. by J.~Taylor and H.~Uhlig, North Holland, vol.~2A, chap.~8, 415--525.

\bibitem[\protect\citeauthoryear{West and Harrison}{West and Harrison}{1997}]{wh1997book}
\textsc{West, M. and J.~Harrison} (1997): \emph{Bayesian Forecasting and Dynamic Models}, Springer.

\bibitem[\protect\citeauthoryear{West, Harrison, and Migon}{West et~al.}{1985}]{whm1985ajasa}
\textsc{West, M., J.~Harrison, and H.~Migon} (1985): \enquote{Dynamic generalized linear models and Bayesian forecasting,} \emph{Journal of the American Statistical Association}, 80, 73--83.

\end{thebibliography}


\newpage
\setcounter{page}{1}
{\, } \vspace{5mm}
\begin{center}
\Large 
    {Supplement to ``A dynamic copula model for probabilistic forecasting of non-Gaussian multivariate time series''}
\end{center}

\appendix 

\section{Theoretical background}

\subsection{The rank posterior}\label{app:rank}

The true posterior and the rank posterior are related as follows:
\begin{align*}
    p(\latent_{1:T}\com\Btheta\given\By_{1:T})
    &=
    p\{\latent_{1:T}\com\Btheta\given\By_{1:T}\com \latent_{1:T}\in\dcal(\By_{1:T})\}\\
    &=
        \frac{
    p\{\By_{1:T}\com\Bz_{1:T}\in\dcal(\By_{1:T})\given\Bz_{1:T}\com\Btheta\} \
    p(\Bz_{1:T}\com\Btheta)
    }
    {
    p\{\By_{1:T}\com\Bz_{1:T}\in\dcal(\By_{1:T})\}
    }\\
    &=        \frac{
    {p\{\By_{1:T}\given\Bz_{1:T}\in\dcal(\By_{1:T})\com\Bz_{1:T}\com\Btheta\}}
        {
        p\{\Bz_{1:T}\in\dcal(\By_{1:T})\given\Bz_{1:T}\com\Btheta\}\ 
        p(\Bz_{1:T}\com\Btheta)
        }
    }
    {
    {p\{\By_{1:T}\given\Bz_{1:T}\in\dcal(\By_{1:T})\}}
        {p\{\Bz_{1:T}\in\dcal(\By_{1:T})\}}
    }\\
    &=
    \frac{
    {
        p\{\Bz_{1:T}\in\dcal(\By_{1:T})\given\Bz_{1:T}\com\Btheta\} \
        p(\Bz_{1:T}\com\Btheta)
        }
    }
    {
    {p\{\Bz_{1:T}\in\dcal(\By_{1:T})\}}
    }
    \,\times \,
    \frac{
    {p\{\By_{1:T}\given\Bz_{1:T}\in\dcal(\By_{1:T})\com\Bz_{1:T}\com\Btheta\}}
    }
    {
    {p\{\By_{1:T}\given\Bz_{1:T}\in\dcal(\By_{1:T})\}}
    }\\
    &= p\{\Bz_{1:T}\com\Btheta\given\Bz_{1:T}\in\dcal(\By_{1:T})\}
    \,\times \,
    \frac{
    p\{\By_{1:T}\given\Bz_{1:T}\in\dcal(\By_{1:T})\com\Bz_{1:T}\com\Btheta\}
    }
    {
    p\{\By_{1:T}\given\Bz_{1:T}\in\dcal(\By_{1:T})\}
    }.
\end{align*}
The first line follows from the fact that, given $\By_{1:T}$, it must already be the case that $\Bz_{1:T}\in\dcal(\By_{1:T})$, so the added truncation is redundant. The second line is Bayes' theorem, the third line applies a marginal-conditional decomposition, the fourth line rearranges terms, and the final line applies Bayes' theorem to the first term. 

\subsection{Proof of Theorem~\ref{thm:ma}}\label{app:proof_ma}    

We first state a technical lemma:

\begin{lemma}\label{lem:lemma}
    Let $U_t$ be a strictly stationary ergodic process with continuous $H(x)=P(U_1\leq x)$. Then for any $x$, $M_T=\max_{t=1,...,T}\{U_t:U_t\leq x\}\cas x$.
\end{lemma}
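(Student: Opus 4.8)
The plan is to establish, almost surely, the two one-sided bounds $\limsup_{T\to\infty} M_T \le x$ and $\liminf_{T\to\infty} M_T \ge x$, and then combine them. The upper bound is immediate from the definition: whenever the index set $\{t \le T : U_t \le x\}$ is nonempty, $M_T$ is a maximum of numbers each at most $x$, so $M_T \le x$; and this index set is nonempty for all large $T$ almost surely, as the lower-bound argument will show. So the whole content sits in the lower bound.

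For the lower bound, fix $\epsilon > 0$ and put $I_\epsilon = (x-\epsilon, x]$. Then $\PP(U_1 \in I_\epsilon) = H(x) - H(x-\epsilon)$, which we take to be strictly positive (in the use we make of the lemma $U_t$ has a $\mathrm{Uniform}(0,1)$ stationary marginal, so this probability equals $\epsilon$ for every $x \in (0,1)$ with $\epsilon < x$). Apply Birkhoff's pointwise ergodic theorem to the strictly stationary ergodic $\{0,1\}$-valued sequence $t \mapsto \mathbf{1}\{U_t \in I_\epsilon\}$: its running average converges almost surely to $\PP(U_1 \in I_\epsilon) > 0$. Hence, on an event of probability one, there are infinitely many $t$ with $U_t \in I_\epsilon$; each such $t$ has $U_t \le x$, so it belongs to the index set defining $M_T$ for every $T \ge t$, and also $U_t > x - \epsilon$. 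Therefore $M_T > x - \epsilon$ for all large $T$, i.e.\ $\liminf_{T\to\infty} M_T \ge x - \epsilon$ almost surely.

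Intersecting the probability-one events obtained along a sequence $\epsilon_k \downarrow 0$ gives $\liminf_{T\to\infty} M_T \ge x$ almost surely, and together with the upper bound this yields $M_T \cas x$. The step that carries the real content---and where I would be most careful---is the passage from ``$\{U_t \in I_\epsilon\}$ has positive probability'' to ``$\{U_t \in I_\epsilon\}$ occurs for infinitely many $t$, almost surely'': stationarity alone already delivers this via Poincar\'{e} recurrence, but invoking Birkhoff's ergodic theorem is cleaner and dovetails with how the lemma is used inside the proof of Theorem~\ref{thm:ma}. One minor bookkeeping point is to take the target interval half-open on the right, $I_\epsilon = (x-\epsilon,x]$, so that the witnessing indices genuinely lie in $\{t : U_t \le x\}$; the remaining details are routine.
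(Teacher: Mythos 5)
The paper states this lemma without proof in the supplement, so there is no argument of the authors' to compare against; your proof is the natural one (Birkhoff's ergodic theorem applied to the indicators $\mathbf{1}\{U_t\in(x-\epsilon,x]\}$, giving infinitely many visits to each left-neighborhood of $x$, hence $\liminf_T M_T\ge x-\epsilon$ for every $\epsilon>0$, combined with the trivial bound $M_T\le x$), and it is correct. Your caveat about the positivity of $H(x)-H(x-\epsilon)$ is not mere bookkeeping: continuity of $H$ rules out atoms but not gaps in the support, so the lemma as literally stated fails for $x$ lying in such a gap (e.g.\ $U_1$ uniform on $[0,1]\cup[2,3]$ and $x=1.5$ gives $M_T\to 1$), and it also degenerates when $H(x)=0$, where the index set $\{t:U_t\le x\}$ is almost surely empty for all $T$. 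You correctly identify that the lemma is only invoked with $U_t=G(Z_t)\sim\mathrm{Unif}(0,1)$ and threshold $F(x)\in(0,1]$, where $P\{U_1\in(F(x)-\epsilon,F(x)]\}=\min(\epsilon,F(x))>0$, so the argument goes through exactly where it is needed. A cleaner fix would be to add the hypothesis that $P\{U_1\in(x-\epsilon,x]\}>0$ for every $\epsilon>0$ to the lemma's statement; with that hypothesis your proof is complete.
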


    




The main result follows:

\begin{proof}[Proof of Theorem~\ref{thm:ma}]
    Since $Z_t$ is strictly stationary and ergodic, $U_t=G(Z_t)$ is also strictly stationary and ergodic, and $U_t\sim\text{Unif}(0\com 1)$ for all $t$ by the probability integral transform. Noting that 
    \begin{align*}
        \tilde{F}_T(x)
        &=
        \max_{t=1,...,T}\{G(Z_t):Y_t\leq x\}
        \\
        &=
        \max_{t=1,...,T}\{G(Z_t):F^{-1}\circ G(Z_t)\leq x\}
        \\
        &=
        \max_{t=1,...,T}\{G(Z_t):G(Z_t)\leq F(x)\}
        \\
        &=
        \max_{t=1,...,T}\{U_t:U_t\leq F(x)\},
    \end{align*}
    Lemma~\ref{lem:lemma} tell us that $\tilde{F}_T(x)\cas F(x)$.
\end{proof}

\subsection{Identification in the VAR copula}\label{app:id}

Our posterior consistency theory in Section~\ref{sec:theory} applies when the true data-generating process is a VAR copula:
\begin{align*}
    \unscaled_t&=\BG\unscaled_{t-1}+\Bepsilon_t,&&\Bepsilon_t\iid\text{N}_n(\Bzero\com\BSigma)\\
    \latent_t&=\BD_0^{-1/2}\unscaled_t &&\BD_0=\text{diag}(\text{var}(\unscaled_t))\\
    y_{t,i}&=F_i^{-1}\circ\Phi(z_{t,i}).
\end{align*}
In general, the unrestricted VAR(1) parameters $\BG$ and $\BSigma$ are not identified, but the normalized variate $\latent_t$ follows a unit-variance VAR(1), and its parameters $\tilde{\BG}$ and $\tilde{\BSigma}$ are identified. These are what we refer to in (\ref{eq:simple_latent}) and Theorem~\ref{thm:doob1}. To see this, note that the stationary process $\unscaled_t$ will have 
$$
\begin{bmatrix}
    \unscaled_{t-1} \\
    \unscaled_t
\end{bmatrix}
\sim 
\text{N}_{2n}
\left(
\Bzero 
\com 
\BOmega 
= 
\begin{bmatrix}
    \BOmega_0 & \BOmega_1^\tr \\
    \BOmega_1 & \BOmega_0
\end{bmatrix}
\right)
,
$$
where $\text{cov}(\unscaled_t)=\BOmega_0$ and $\text{cov}(\unscaled_t\com\unscaled_{t-1})=\BOmega_1$, and we can calculate these with $\text{vec}(\BOmega_0)=(\BI_{n^2}-\BG\otimes\BG)^{-1}\text{vec}(\BSigma)$ and $\BOmega_1=\BG\BOmega_0$ \citep{lutkepohl2006book}. We see then that $\BD_0=\text{diag}(\BOmega_0)$, and the normalized process $\latent_t$ has 
\begin{equation}\label{eq:apprescale}
\begin{bmatrix}
    \latent_{t-1} \\
    \latent_t
\end{bmatrix}
=
\BI_2
\otimes 
\BD_0^{-1/2}
\begin{bmatrix}
    \unscaled_{t-1} \\
    \unscaled_t
\end{bmatrix}
\sim 
\text{N}_{2n}
\left(
\Bzero 
\com 
\BC 
= 
\begin{bmatrix}
    \BC_0 & \BC_1^\tr \\
    \BC_1 & \BC_0
\end{bmatrix}
\right)
,
\end{equation}
where $\BC_1=\BD_0^{-1/2}\BOmega_1\BD_0^{-1/2}$, and $\BC_0=\BD_0^{-1/2}\BOmega_0\BD_0^{-1/2}$ is a correlation matrix. As such, we see that $\latent_t$ is a unit-variance VAR(1) process, and we can calculate its implied parameters by inverting the relationship $(\BG\com\BSigma)\mapsto(\BOmega_0\com\BOmega_1)$ from above:
\begin{align}
    \tilde{\BG}&=\BC_1\BC_0^{-1}\label{eq:yw1}\\
    \text{vec}(\tilde{\BSigma})&=(\BI_{n^2}-\tilde{\BG}\otimes\tilde{\BG})\text{vec}(\BC_0).\label{eq:yw2}
\end{align}
So, integrating over $\unscaled_t$, we are left with a specification like (\ref{eq:simple_latent}, \ref{eq:stationary_link}), where the parameters $(\tilde{\BG}\com\tilde{\BSigma})$ necessarily imply that the stationary distribution has unit variance, ensuring that the parameters are identifiable. Two final points on this:
\begin{itemize}
    \item The assumptions of Theorem~\ref{thm:doob1} refer to a prior distribution for $(\tilde{\BG}\com\tilde{\BSigma})$ supported only on the set $\Theta\subseteq\RR^{n\times n}\times\text{SPD}_n$ where unit variance holds. We see now how such a prior could be constructed. For any prior on the unrestricted VAR parameters $(\BG\com\BSigma)$, we can draw $\BG\com\BSigma\sim p(\BG\com\BSigma)$ and then transform $(\BG\com\BSigma)\mapsto(\BOmega_0\com\BOmega_1)\mapsto(\BC_0\com\BC_1)\mapsto(\tilde{\BG}\com\tilde{\BSigma})$ according to the formulas above. This scheme generates iid draws from an induced prior for $(\tilde{\BG}\com\tilde{\BSigma})$, where the restriction holds. This is the time series analog of inducing a prior on a correlation matrix $\BC$ by placing a prior on the associated covariance matrix $\BV\sim p(\BV)$ and then normalizing $C_{i,j}=V_{i,j}/\sqrt{V_{i,i}V_{j,j}}$. See \cite{hoff2007aoas} for discussion.
    \item Similarly, Algorithm~\ref{alg:gibbs_sampler} produces MCMC draws from the posterior for the unrestricted VAR parameters. As a post-processing step, we can compute $(\BG\com\BSigma)\mapsto(\BOmega_0\com\BOmega_1)\mapsto(\BC_0\com\BC_1)\mapsto(\tilde{\BG}\com\tilde{\BSigma})$ on a draw-by-draw basis to approximate the posterior for the identifiable combination of parameters. This is what we do in Section~\ref{app:sim}.
\end{itemize}

\subsection{Proof of Theorem~\ref{thm:doob1}}\label{app:proof_doob1}

Similar to \cite{mdcl2013jasa}, we prove the result by appealing to Doob's theorem:
\begin{lemma}\label{lem:doob} (\textbf{Doob's theorem as stated by \citealt{gg2009jspi}}) Let $\Bx_{1:t}$ be observations whose distribution depends on a parameter $\Btheta$, both of which take values in a Polish space. Let $\Btheta\sim\Pi$, let $\Bx_{1:t}\given\Btheta\sim P^{(t)}_{\Btheta}$, let $\xcal_t$ be the $\sigma$-algebra generated by $\Bx_{1:t}$, and let $\xcal_\infty=\sigma\left(\cup_{t=1}^\infty\xcal_t\right)$. If there exists a $\xcal_\infty$-measurable function $f$ such that $\Btheta=f(\omega)$ a.e.\ $[\Pi\times P_{\Btheta}^{(\infty)}]$, then the posterior is strongly consistent at $\Btheta$ for almost every $\Btheta$ $[\Pi]$.
\end{lemma}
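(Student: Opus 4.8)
The plan is to run the classical martingale argument behind Doob's theorem, with the stated hypothesis ``$\Btheta = f(\omega)$ for some $\xcal_\infty$-measurable $f$'' supplying the identifiability that the argument needs. I would work on the joint probability space $\Omega = \tcal \times \scal^{\infty}$, where $\tcal$ is the (Polish) parameter space and $\scal^\infty$ the sequence space of the observations $\Bx_t$, equipped with the joint law $Q$ under which $\Btheta \sim \Pi$ and $\Bx_{1:\infty}\given\Btheta \sim P^{(\infty)}_\Btheta$. Let $(\xcal_t)_{t\ge 1}$ be the filtration generated by the coordinate maps $\omega \mapsto \Bx_{1:t}$, so that $\xcal_\infty = \sigma(\cup_t \xcal_t)$. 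Because $\tcal$ is Polish, the regular conditional distribution $\Pi_t(\cdot) := Q(\Btheta \in \cdot \given \xcal_t)$ — the time-$t$ posterior — exists, and the goal is to show that $\Pi_t \Rightarrow \delta_{\Btheta}$ weakly, $Q$-a.s., which is precisely strong posterior consistency at $\Btheta$.

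The core is a two-step martingale argument. First I would fix a bounded measurable $g : \tcal \to \RR$ and apply L\'evy's upward martingale convergence theorem to the bounded (hence uniformly integrable) martingale $M^g_t := \EE_Q[g(\Btheta)\given\xcal_t] = \int g\, d\Pi_t$, obtaining $M^g_t \to \EE_Q[g(\Btheta)\given \xcal_\infty]$ $Q$-a.s. Here the hypothesis enters decisively: since $\Btheta = f(\omega)$ with $f$ being $\xcal_\infty$-measurable, $g(\Btheta)$ is $\xcal_\infty$-measurable, so the limit collapses to $\EE_Q[g(\Btheta)\given\xcal_\infty] = g(\Btheta)$, i.e.\ $\int g\, d\Pi_t \to g(\Btheta)$ $Q$-a.s. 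To upgrade this from ``for each $g$'' to weak convergence of $\Pi_t$, I would fix a countable convergence-determining family $\{g_k\}_{k\ge1}$ of bounded continuous functions (available since $\tcal$ is Polish, hence separable metrizable), intersect the countably many $Q$-full events, and conclude that on a single $Q$-full event $\int g_k\, d\Pi_t \to g_k(\Btheta)$ simultaneously for every $k$, hence $\Pi_t \Rightarrow \delta_{\Btheta}$. Weak convergence to a point mass is equivalent to $\Pi_t(A)\to 1$ for every neighborhood $A \ni \Btheta$, which is the asserted conclusion.

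Finally I would disintegrate the ``$Q$-a.s.'' statement: writing $Q(\cdot) = \int P^{(\infty)}_\Btheta(\cdot)\,\Pi(d\Btheta)$ and applying Fubini, the set of parameter values $\Btheta$ for which the conclusion fails $P^{(\infty)}_\Btheta$-a.s.\ is $\Pi$-null, so for $\Pi$-almost every $\Btheta$ we have $\Pi_t(A)\to 1$ as $t\to\infty$, $P^{(\infty)}_\Btheta$-a.s., for every neighborhood $A$ of $\Btheta$. I do not expect a genuine obstacle here: the only delicate points are measure-theoretic bookkeeping — existence of the regular conditional distributions $\Pi_t$ and of a countable convergence-determining class (both from Polishness), and the interchange of ``a.s.\ for each fixed test function'' with ``a.s.\ simultaneously over a determining family,'' which is handled purely by countability — while the martingale convergence that powers the proof is entirely standard.
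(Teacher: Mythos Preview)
The paper does not actually prove this lemma: it is stated as a cited result (attributed to \citealt{gg2009jspi}) and then invoked as a black box in the proofs of Theorems~\ref{thm:doob1} and~\ref{thm:doob2}. So there is no ``paper's own proof'' to compare against here.

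That said, your proposal is the classical martingale proof of Doob's theorem and is correct. The key steps --- L\'evy's upward theorem applied to $\EE_Q[g(\Btheta)\given\xcal_t]$, the collapse $\EE_Q[g(\Btheta)\given\xcal_\infty]=g(\Btheta)$ via the $\xcal_\infty$-measurability hypothesis, the passage to weak convergence via a countable convergence-determining class (available by Polishness), and the final Fubini disintegration of the $Q$-a.s.\ statement into a $\Pi$-a.e.\ statement --- are all sound and assembled in the right order. The only point worth flagging is cosmetic: once you have $\Pi_t\Rightarrow\delta_{\Btheta}$ on a single $Q$-full set, the conclusion ``$\Pi_t(A)\to1$ for every neighborhood $A\ni\Btheta$'' follows deterministically from the portmanteau characterization of weak convergence (via open sets), so there is no hidden uncountability issue when you then disintegrate.
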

We know from \cite{lpw2007et} that this theorem still holds for stationary models like ours, and so our proof consists of showing that the conditions of Lemma~\ref{lem:doob} are satisfied. This amounts to showing that there is a consistent estimator of the VAR parameters $(\tilde{\BG}\com\tilde{\BSigma})$ that is measureable with respect to the $\sigma$-field generated by the sequence $\{\dcal(\By_{1:t})\}_{t=1}^\infty$:
\begin{proof}[Proof of Theorem~\ref{thm:doob1}]
To begin, note that Theorem~\ref{thm:doob1} requires that the $F_i$ be continuous. In this case, the information contained in $\{\dcal(\By_{1:t})\}_{t=1}^\infty$ is equivalent to the information contained in the ranks $\{r(\By_{1:t})\}_{t=1}^\infty$ \citep{hoff2007aoas}. So henceforth, we work exclusively with the ranks. To that end, let $\rcal_t$ denote the $\sigma$-field generated by $\{r(\By_{1:j})\}_{j=1}^t$, and $\rcal_\infty=\sigma\left(\cup_{t=1}^\infty \rcal_t\right).$ Next, let $u_{t,i}=F_i(y_{t,i})$, and collect them in $\Bu_{t}=[u_{t,1}\,\cdots\,u_{t,n}]^\tr$. Furthermore, let $r_{t,i}^{(T)}$ denote the rank of observation $y_{t,i}$ in the list $y_{1:T,i}$, define $\hat{u}_{t,i}^{(T)}=r_{t,i}^{(T)}/(T+1)$, and collect these in $\hat{\Bu}^{(T)}_t=[\hat{u}_{t,1}^{(T)}\,\cdots\,\hat{u}_{t,n}^{(T)}]^\tr$. We know by the ergodic theorem that $\hat{u}_{t,i}^{(T)}\cas u_{t,i}$ as $T\to\infty$, and so $\hat{\Bu}^{(T)}_t\cas \Bu_t$. This means that the $\Bu_t$ are $\rcal_\infty$-measurable. Armed with these, a consistent, rank-based estimator of $\Btheta$ can be constructed along the lines of \cite{fhp2023joe}, which requires that the marginal distributions be continuous. The matrix $\BC$ in (\ref{eq:apprescale}) can be estimated with $\hat{\BC}_T=\sin(\pi\hat{\BT}_T/2)$, where 
$$
\hat{\BT}_T
=
\frac{2}{(T-1)(T-2)}\sum\limits_{i=1}^{T-1}\sum\limits_{j=i+1}^{T-1}\text{sign}\begin{bmatrix}
    \Bu_i-\Bu_j\\ \Bu_{i+1}-\Bu_{j+1}
\end{bmatrix}
\text{sign}\begin{bmatrix}
    \Bu_i-\Bu_j\\ \Bu_{i+1}-\Bu_{j+1}
\end{bmatrix}^\tr
.
$$
Given those estimates, we can calculate $\hat{\Btheta}_T=\{\hat{\BG}_T\com\hat{\BSigma}_T\}$ by applying (\ref{eq:yw1}, \ref{eq:yw2}) using the contents of $\hat{\BC}_T$. $\hat{\Btheta}_T$ is a function of the $\Bu_{t}$, which are $\rcal_\infty$-measurable, and so we have the result.
\end{proof}

\subsection{Proof of Theorem~\ref{thm:doob2}} \label{app:proof_doob2}

This closely follows \cite{fk2024jmlr}:

\begin{proof}[Proof of Theorem~\ref{thm:doob2}]
Fix arbitrary $x\in\RR$ and $i=1\com2\com...\com n$. We are done if we can produce a consistent estimator of $F_i(x)$ that is measureable with respect to the $\sigma$-field generated by $\{\dcal(\By_{1:t})\}_{t=1}^\infty$. Then the result follows again from Doob's theorem (Lemma~\ref{lem:doob}). To begin, consider the marginal rank event
$$
\dcal(y_{1:t,i})
=
\{
z_{1:t,i}\in\RR^t:y_{k,i}<y_{k',i}\implies z_{k,i}<z_{k',i}
\}.
$$
The $\sigma$-field generated by $\{\dcal(y_{1:t,i})\}_{t=1}^T$ is a sub $\sigma$-field of the $\sigma$-field generated by $\{\dcal(\By_{1:t})\}_{t=1}^T$. Functions measureable with respect to the first will therefore be measureable with respect to the second, and it suffices to work with the first.

Since we always have $z_{1:t,i}\in\dcal(y_{1:t,i})$, then $\tilde{F}_i(x)=\max_t\{\Phi(z_{t,i}):y_{t,i}\leq x\}$ is measurable with respect to the $\sigma$-field generated by $\{\dcal(y_{1:t,i})\}_{t=1}^T$. Theorem~\ref{thm:ma} says that $\tilde{F}_i(x)$ consistently estimates $F_i(x)$ in the context of our Gaussian VAR copula, so $F_i(x)$ is measurable with respect to $\{\dcal(y_{1:t,i})\}_{t=1}^\infty$, and hence $\{\dcal(\By_{1:t})\}_{t=1}^\infty$.

\end{proof}

\subsection{Simulation}\label{app:sim}

Lastly, we present simulation results to see Theorems~\ref{thm:doob1} in action. We simulate datasets of size $T=25\com50\com100\com200\com...\com3200$, refit our model for each sample size, and observe in Figure~\ref{fig:more_simulation_recovery} how the pseudo-posterior (visualized with box plots of the MCMC draws) concentrates around the ground truth values of the copula parameters. We consider synthetic data from (\ref{eq:stationary_link},\ref{eq:simple_latent}), which is the special case of our factor model with $n=k$, $\BLambda=\BI_n$, and $\BV=\Bzero$. So we can fit this model with the special case of Algorithm~\ref{alg:gibbs_sampler} that skips the steps for those unobservables (and all associated hyperparameters). Recalling the discussion in Section~\ref{app:id}, this algorithm produces draws $\{\BG^{(m)}\com\BSigma^{(m)}\}$ that we subsequently transform to $\{\tilde{\BG}^{(m)}\com\tilde{\BSigma}^{(m)}\}$ for the sake of identification.

\bigskip 

\noindent The simulation settings are:

\begin{itemize}
    \item (\textbf{Ground truth}) Fix $n=2$, $\text{vec}(\BG_0)\sim\text{N}_{n^2}(\Bzero\com 0.1\cdot\BI_{n^2})$ (rejecting draws until stationarity), $\BSigma_0\sim\text{IW}_n(n+1\com\BI_n)$, $F_1=\text{Gamma}(1\com 1)$, $F_2=\text{Skew-}t(3\com0\com1\com2)$. The identifiable parameters $(\tilde{\BG}_0\com\tilde{\BSigma}_0)$ of the implied VAR(1) for $\tilde{\Bz}_t$ are then computed as in Appendix~\ref{app:id}, and these are the ground truth values indicated with a red line in Figure~\ref{fig:more_simulation_recovery};
    \item (\textbf{Prior hyperparameters}) Fix $d_0=n+1$, $\BPsi_0=\BI_n$, $\overline{\BG}_0=\Bzero$, $\BO^{-1}_0=\BI_n$;
    \item (\textbf{MCMC settings}) For each rerun of the MCMC, we generate 5,000 posterior draws by taking 35,000 draws in total, discarding the first 5,000 as burn-in, and then thinning to every 6th draw.
\end{itemize}



\begin{figure}
    \centering
    \includegraphics[width = 0.95\textwidth]{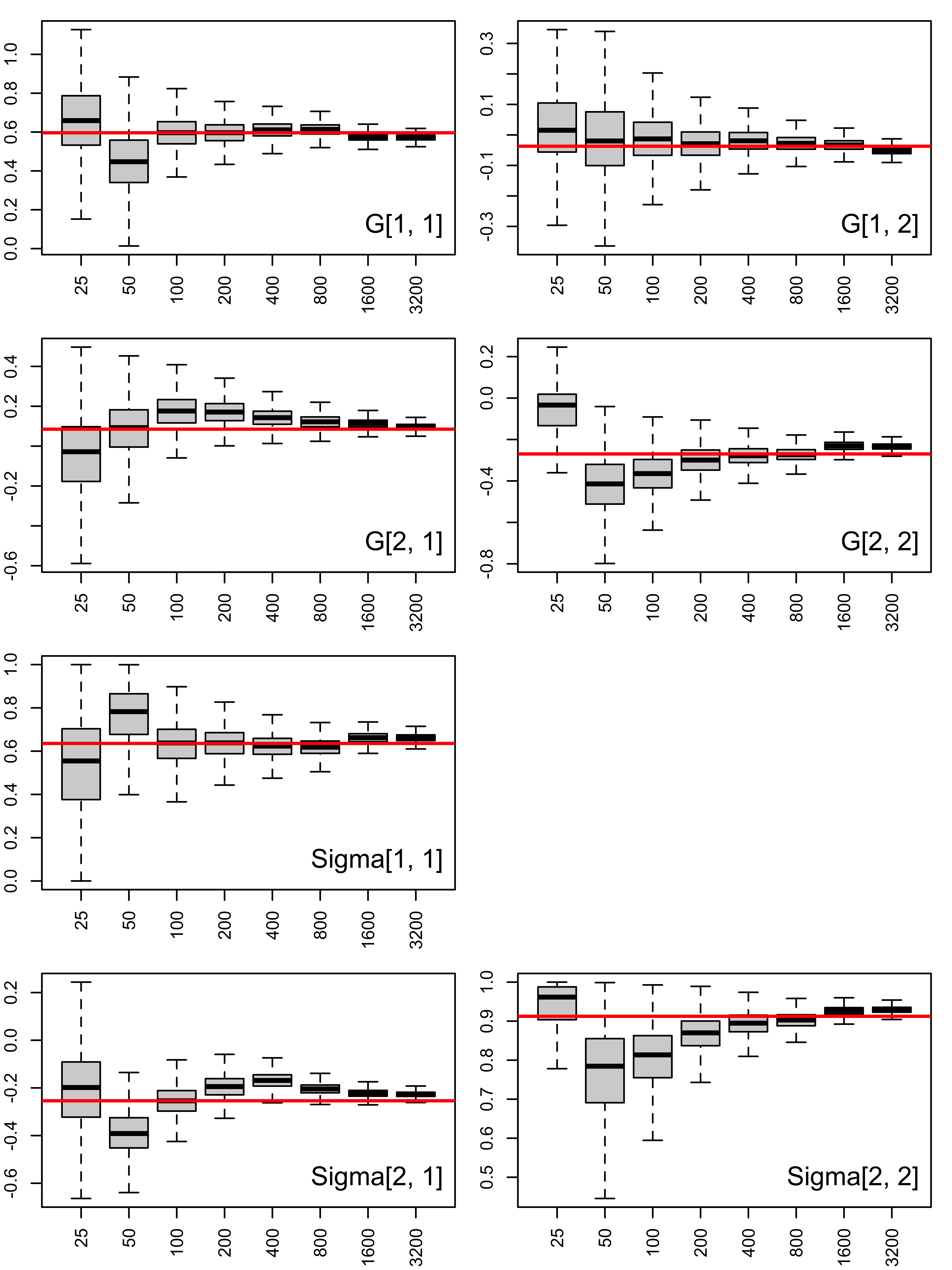}
    \caption{As we fit our model to larger simulated data ($T=25\com 50\com 100\com ...\com3200$), the pseudo-posterior concentrates around the ground truth values of the copula parameters (Theorem~\ref{thm:doob1}).}
    \label{fig:more_simulation_recovery}
\end{figure}

\section{The dependence structure of the DGFC}\label{sec:dgfc_dependence}

Our dynamic factor model (\ref{eq:model_xz} - \ref{eq:model_factor}) can be rewritten as $\Bz_{1:T}\sim\text{N}_{Tn}(\Bzero\com\BC_{\Btheta})$, and the structure of the correlation matrix $\BC_{\Btheta}$ encodes the cross-sectional and serial dependencies that are passed to $\By_t$ through the copula link functions. We now make this structure explicit. To begin, assume that the matrix $\BG$ has eigenvalues inside the unit circle, which we enforce during inference. This implies that the VAR(1) in (\ref{eq:model_factor}) defines a stationary, mean-zero Gaussian process. As such, we have that $\factor=[\factor_1^\tr\,\factor_2^\tr\,\cdots\,\factor_T^\tr]^\tr\sim\text{N}_{Tk}(\Bzero\com\BGamma)$ with 
$$
\BGamma
=
\begin{bmatrix}
\BGamma_0 & \BGamma_1^\tr  & \cdots & \BGamma_{T-2}^\tr & \BGamma_{T-1}^\tr\\
\BGamma_1 & \BGamma_0  & \cdots & \BGamma_{T-3}^\tr & \BGamma_{T-2}^\tr\\
\vdots & \vdots  & \ddots & \vdots & \vdots\\
\BGamma_{T-2} & \BGamma_{T-3}  & \cdots & \BGamma_0 & \BGamma_1^\tr \\
\BGamma_{T-1} & \BGamma_{T-2}  & \cdots & \BGamma_1 & \BGamma_0
\end{bmatrix},
$$
and $\BGamma_h = \text{cov}(\Beta_t\com\Beta_{t- h})$ for any $t$. From (\ref{eq:model_xz}) we see that $\unscaled\given\Beta\sim\text{N}_{Tn}\left((\BI_T\otimes\BLambda)\Beta\com \BI_T\otimes \BV\right)$, and so marginally, $\unscaled\sim\text{N}_{Tn}\left(\Bzero\com \BOmega\right)$ with $\BOmega=(\BI_T\otimes\BLambda)\BGamma(\BI_T\otimes\BLambda^\tr)+\BI_T\otimes \BV$. This implies a familiar-looking fact about Gaussian factor models, that $\unscaled_t\sim\text{N}_n(\Bzero\com \BOmega_0)$ with $\BOmega_0=\BLambda\BGamma_0\BLambda^\tr+\BV$ for all $t=1\com...\com T$, but in addition the $\unscaled_t$ are serially correlated according to the contents of $\BOmega$. We now also see that  $\BD_0$ in (\ref{eq:model_factor}) refers to the diagonal entries of $\BOmega_0$, which are the same for all $t$. Lastly, (\ref{eq:model_factor}) implies that $\latent=(\BI_T\otimes\BD_0^{-1/2})\unscaled\sim\text{N}_{Tn}(\Bzero\com\copulacov_{\Btheta})$ with correlation matrix $\copulacov_{\Btheta}=(\BI_T\otimes\BD_0^{-1/2})\BOmega(\BI_T\otimes\BD_0^{-1/2})$. In particular, $\Bz_t\sim\text{N}_{n}(\Bzero\com \BC_0)$, where $\text{corr}(\Bz_t)=\text{cov}(\Bz_t)=\BC_0=\BD_0^{-1/2}\BOmega_0\BD_0^{-1/2}$ for all $t$. Posterior means of this functional are displayed in Figures~\ref{fig:crime_params} and \ref{fig:macro_params}. These figures also visualize $\text{corr}(\Bz_{t+1}\com \Bz_t)$, which is an off-diagonal block of $\copulacov_{\Btheta}$.

\section{Prior details and posterior computation}

We choose a conditionally conjugate prior $\BG^\tr\com\BSigma\sim\text{MNIW}_{k,k}(d_0\com\BPsi_0\com\overline{\BG}_0^\tr\com\BO^{-1}_0)$ for the VAR parameters in (\ref{eq:model_factor}), and to ensure that this process is stationary, we truncate the prior (and consequently the posterior) so that $\BG$ always has eigenvalues within the unit circle. For the variances $\Bv=[v_1\,v_2\,\cdots\,v_n]^\tr$, we take $v_i^{-1}\sim\text{iid}\,\text{Gamma}(\alpha_0\com\beta_0)$. For the factor loadings $\BLambda=[\lambda_{i,l}]$, we use the multiplicative gamma process prior of \cite{bd2011bka}:
\begin{align}
    \lambda_{i,l}&\indep\text{N}(0\com \phi^{-1}_{i,l}\tau_l^{-1})\\
    \phi_{i,l}&\iid\,\text{Gamma}(\nu_0/2\com\nu_0/2)\\
    \tau_l&=\prod_{h=1}^l\delta_h \quad \begin{array}{l}
        \delta_1\sim\text{Gamma}(a_0\com 1)  \\
          \delta_h\iid\text{Gamma}(b_0\com 1).
    \end{array}
\end{align}
The $\phi_{i,l}$ are local scale parameters, the $\tau_l$ are global scale parameters, and the main advantage of this ordered shrinkage prior is to reduce sensitivity to the choice of factor dimension $k$. We collect these hyperparameters in $\BPhi=[\phi_{i,l}]$ and $\Btau=[\tau_1\,\tau_2\,\cdots\,\tau_k]^\tr$, and we include them in the set of copula parameters $\Btheta=\{\BG\com\BSigma\com\BLambda\com\Bv\com\BPhi\com\Btau\}$ that we estimate. In each of the simulations and applications, we use the same default prior hyperparameters: $k=\lceil0.7 n\rceil$, $d_0=k+1$, $\BPsi_0=\BI_k$, $\overline{\BG}_0=\Bzero$, $\BO_0=\BI_k$, $\alpha_0=1$, $\beta_0=0.3$, $\nu_0=3$, $a_0=2$, $b_0=3$.

\begin{algorithm}
\caption{Gibbs sampling steps for $p\{\latent_{1:T}\com\Beta_{1:T}\com\BG\com\BSigma\com\BLambda\com\Bv\com\BPhi\com\Btau\given\latent_{1:T}\in\dcal(\By_{1:T})\}$}

\bigskip

\For{$p(\BG\com\BSigma\given \latent_{1:T}\com \Beta_{1:T}\com\BLambda\com\Bv\com\BPhi\com\Btau)$}{
Construct pseudo-data 
$\BY = \begin{bmatrix}
\Beta_2 & \Beta_3 & \cdots & \Beta_T
\end{bmatrix}^\tr$ and $\BX = \begin{bmatrix}
\Beta_1 & \Beta_2 & \cdots & \Beta_{T-1}
\end{bmatrix}^\tr
$\;

Compute $d_T=d_0+T-1$, $\BO_T=\BX^\tr\BX+\BO_0$, $\overline{\BG}_T^\tr=\BO_T^{-1}\left(\BX^\tr\BY+\BO_0\overline{\BG}_0^\tr\right)$, and 
$$
    \BPsi_T
    =
    \BPsi_0 
    +
    \left(\BY-\BX\overline{\BG}_T^\tr\right)^\tr\left(\BY-\BX\overline{\BG}_T^\tr\right)
    +
    \left(\overline{\BG}_T^\tr-\overline{\BG}_0^\tr\right)^\tr\BO_0\left(\overline{\BG}_T^\tr-\overline{\BG}_0^\tr\right).
$$
Draw $\BG^\tr\com\BSigma\sim \textrm{MNIW}_{k\com k}(d_T\com\BPsi_T\com\overline{\BG}_T^\tr\com\BO^{-1}_T)$ until $\BG$ has eigenvalues in the unit circle.
}

\bigskip

\For{$p(\Beta_{1:T}\given \latent_{1:T}\com\BG\com\BSigma\com\BLambda\com\Bv\com\BPhi\com\Btau)$}{
(\ref{eq:model_xz},\ref{eq:model_factor}) define a linear Gaussian state space system with initial condition $\Beta_1\sim\text{N}_k(\Bzero\com\BGamma_0)$ as in Section~\ref{sec:dgfc_dependence}, so if the $\unscaled_{1:T}$ are observed, the $\Beta_{1:T}$ can be drawn using an implementation of the Kalman simulation smoother: \cite{dk2002bka}, \cite{cj2009ijmmno}, etc. 
}

\bigskip

\For{$p(\BLambda\given \latent_{1:T}\com\Beta_{1:T}\com\BG\com\BSigma\com\Bv\com\BPhi\com\Btau)$}{
Set $\BN=\begin{bmatrix}
\Beta_1 & \Beta_2 & \cdots & \Beta_T
\end{bmatrix}^\tr$, $\Bz_i=[z_{1,i}\,z_{2,i}\,\cdots\,z_{T,i}]^\tr$, and $\BP_i^{-1}=\text{diag}\left(\phi_{i,1}\tau_1\com ...\com\phi_{i,k}\tau_k\right)$\;
Draw $\Blambda_i\sim\text{N}_k\left(v_i^{-1}(\BP_i^{-1}+v_i^{-1}\BN^\tr\BN)^{-1}\BN^\tr\Bz_i\com(\BP_i^{-1}+v_i^{-1}\BN^\tr\BN)^{-1}\right)$ for each $i=1\com2\com...\com n$\;
Set $\BLambda=[\Blambda_1\,\cdots\,\Blambda_n]^\tr$.
}

\bigskip

\For{$p(\Bv\given\latent_{1:T}\com\Beta_{1:T}\com\BG\com\BSigma\com\BLambda\com\BPhi\com\Btau)$}{
Draw $v_i^{-1}\sim\text{Gamma}\left(\alpha_0+T/2\com\beta_0+\sum_{t=1}^T(z_{t,i}-\Blambda_i^\tr\Beta_t)^2/2\right)$ for each $i=1\com2\com...\com n$\;
}

\bigskip

\For{$p(\BPhi\given \latent_{1:T}\com\Beta_{1:T}\com\BG\com\BSigma\com\BLambda\com\Bv\com\Btau)$}{
Draw $\phi_{i,l}\sim\text{Gamma}\left(\frac{\nu_0+1}{2}\com\frac{\nu_0+\tau_l\lambda_{i,l}^2}{2}\right)$ for each $i=1\com2\com...\com n$ and $l=1\com2\com...\com k$\;
}

\bigskip

\For{$p(\Btau\given \latent_{1:T}\com\Beta_{1:T}\com\BG\com\BSigma\com\BLambda\com\Bv\com\BPhi)$}{
Draw $\delta_1\sim\text{Gamma}\left(a_0+\frac{nk}{2}\com 1+\frac{1}{2}\sum_{l=1}^k\tau_l^{(1)}\sum_{i=1}^n\phi_{i,l}\lambda_{i,l}^2\right)$\;
Draw $\delta_s\sim\text{Gamma}\left(b_0+\frac{n(k-s+1)}{2}\com1+\frac{1}{2}\sum_{l=s}^k\tau_l^{(s)}\sum_{i=1}^n\phi_{i,l}\lambda_{i,l}^2\right)$, $\tau_l^{(s)}=\Pi_{h=1,h\neq s}^l\delta_h$ \;
Compute $\tau_l=\prod_{h=1}^l\delta_h$ \;
}

\bigskip

\For{$p\left(z_{t,i}\given\Beta_{1:T}\com\BG\com\BSigma\com\BLambda\com\Bv\com\BPhi\com\Btau\com\latent_{1:T}/\{z_{t,i}\}\com\latent_{1:T}\in\dcal(\By_{1:T})\right)$}{
Compute $x_{t,i}^L=\max_{t'}\{x_{t',i}:y_{t',i}<y_{t,i}\}$ and $x_{t,i}^U=\min_{t'}\{x_{t',i}:y_{t',i}>y_{t,i}\}$\;
Draw $x_{t,i}\sim\text{TN}\left(\sum\limits_{l=1}^k\lambda_{i,l}\eta_{t,l}\com v_i\com x_{t,i}^L\com x_{t,i}^U\right)$\;
Compute $z_{t,i}=x_{t,i}/\sqrt{\omega_i}$.
}

\bigskip

\label{alg:gibbs_sampler}
\end{algorithm}

\section{Additional crime count forecasting results}

Table~\ref{tab:3crime} displays the one-step-ahead forecasting metrics when we redo the exercise in Section~\ref{sec:crime} with a smaller cross-section of only $n=3$ variables (murder, attempted murder, and accessory). Comparing Table~\ref{tab:3crime} and Table~\ref{tab:crime}, we see that the performance of the DGFC remains stable as the dimension increases, while the performance of the competing models worsens. 

\begin{table}
    \centering
    \begin{tabular}{lrrrr}
           & \textbf{MAE} &\textbf{COV} &\textbf{SIZE} &\textbf{CRPS}\\\cline{2-5}
           & \multicolumn{4}{|c|}{Murder}  \\\cline{2-5}
DGFC      & \textbf{2.10}  &  0.96  &  11.52 &\textbf{1.52}\\
BVAR      &2.32  &  1.00  &  37.46 &2.55\\
Pois-DGLM &2.14  &  0.96  &  \textbf{10.54} &\textbf{1.52}\\
NB-DGLM   &2.24  &  1.00  &  31.33 &2.10\\\cline{2-5}
           & \multicolumn{4}{|c|}{Attempted murder}  \\\cline{2-5}
DGFC      &1.80  &  0.99  &  10.12 &1.28\\
BVAR      &2.33  &  1.00  &  50.55 &3.23\\
Pois-DGLM &\textbf{1.78}  &  0.98  &   \textbf{8.51} & \textbf{1.25}\\
NB-DGLM   &1.88  &  1.00  &  18.87 &1.46\\\cline{2-5}
           & \multicolumn{4}{|c|}{Accessory to murder}  \\\cline{2-5}
DGFC      &\textbf{0.40}  &  0.99  &   \textbf{2.70} &\textbf{0.30}\\
BVAR      &0.62  &  0.99  &   4.64 &0.44\\
Pois-DGLM &0.48  &  0.98  &   2.95 &0.36\\
NB-DGLM   &\textbf{0.40}  &  1.00  &   2.83 &\textbf{0.30}\\
    \end{tabular}
    \caption{One-step ahead forecasting results for a smaller cross-section of $n=3$ crime variables. Metrics are defined in Table~\ref{tab:crime}. The best (smallest) values of MAE, SIZE, and CRPS are bolded for each variable. The proposed method (DGFC) consistently provides highly competitive density, interval, and point forecasts, but with fewer gains than in the larger cross-section ($n=10$) case. 
    }
    \label{tab:3crime}
\end{table}

\section{Details of the macroeconomic forecasting exercise}

In the macroeconomic forecasting application, we use the same dataset as \cite{ccm2016jbes}. It includes fourteen quarterly macroeconomic time series from 1965Q1 to 2019Q4. The data are displayed in Figure~\ref{fig:full_macro}, and further details are given in Table~\ref{tab:macro}. Using real-time data vintages from the Federal Reserve Bank of Philadelphia’s Real-Time Data Set for Macroeconomists (RTDSM), we analyze the real-time, one-step-ahead forecasts that our competing models would have produced if they had been recursively refit each quarter from 1985:Q1 through 2013:Q4.

\setlength{\tabcolsep}{6pt}
\begin{table}
    \centering
    \begin{tabular}{llll}\hline
        Variable & Abbrv & FRED mnemonic & Transformation  \\\hline
        Real GDP & GDP & \href{https://fred.stlouisfed.org/series/GDPC1}{\texttt{GDPC1}} & $400\Delta\ln$  \\
        Personal consumption expenditures & PCE & \href{https://fred.stlouisfed.org/series/PCE}{\texttt{PCE}} & $400\Delta\ln$  \\
        Business fixed investment & BFI & \href{https://fred.stlouisfed.org/series/PNFI}{\texttt{PNFI}} & $400\Delta\ln$  \\
        Residential investment & RI & \href{https://fred.stlouisfed.org/series/PRFI}{\texttt{PRFI}} & $400\Delta\ln$  \\
        Industrial production & IP & \href{https://fred.stlouisfed.org/series/INDPRO}{\texttt{INDPRO}} & $400\Delta\ln$  \\
        Capacity utilization in manufacturing & CU & \href{https://fred.stlouisfed.org/series/CUMFNS}{\texttt{CUMFNS}} & --  \\
        Nonfarm payroll employment & E & \href{https://fred.stlouisfed.org/series/PAYEMS}{\texttt{PAYEMS}} & $400\Delta\ln$  \\
        Aggregate hours worked & H & \href{https://fred.stlouisfed.org/series/AWHI}{\texttt{AWHI}} & $400\Delta\ln$  \\
        Unemployment rate & UR & \href{https://fred.stlouisfed.org/series/UNRATE}{\texttt{UNRATE}} & --  \\
        GDP price index inflation & Infl1 & \href{https://fred.stlouisfed.org/series/GDPCTPI}{\texttt{GDPCTPI}} & $400\Delta\ln$  \\
        PCE inflation & Infl2 & \href{https://fred.stlouisfed.org/series/PCEPI}{\texttt{PCEPI}} & $400\Delta\ln$  \\
        Federal funds rate & FFR & \href{https://fred.stlouisfed.org/series/DFF}{\texttt{DFF}} & --  \\
        10-year, 3-month treasury yield spread & YS & \href{https://fred.stlouisfed.org/series/DGS10}{\texttt{DGS10}} - \href{https://fred.stlouisfed.org/series/DTB3}{\texttt{DTB3}} & --  \\
        Real S\&P 500 & SP & \href{https://fred.stlouisfed.org/series/SP500}{\texttt{SP500}} / \href{https://fred.stlouisfed.org/series/PCEPI}{\texttt{PCEPI}} & $400\Delta\ln$  \\\hline
    \end{tabular}
    \caption{Macroeconomic variables studied by \cite{ccm2016jbes}. Higher frequency variables are aggregated to the quarterly level by averaging. Transformed variables are in terms of annualized percent growth rates. Data are available from Federal Reserve Economic Data (FRED).}
    \label{tab:macro}
\end{table}

\begin{figure}
    \centering
    \includegraphics[width=\textwidth]{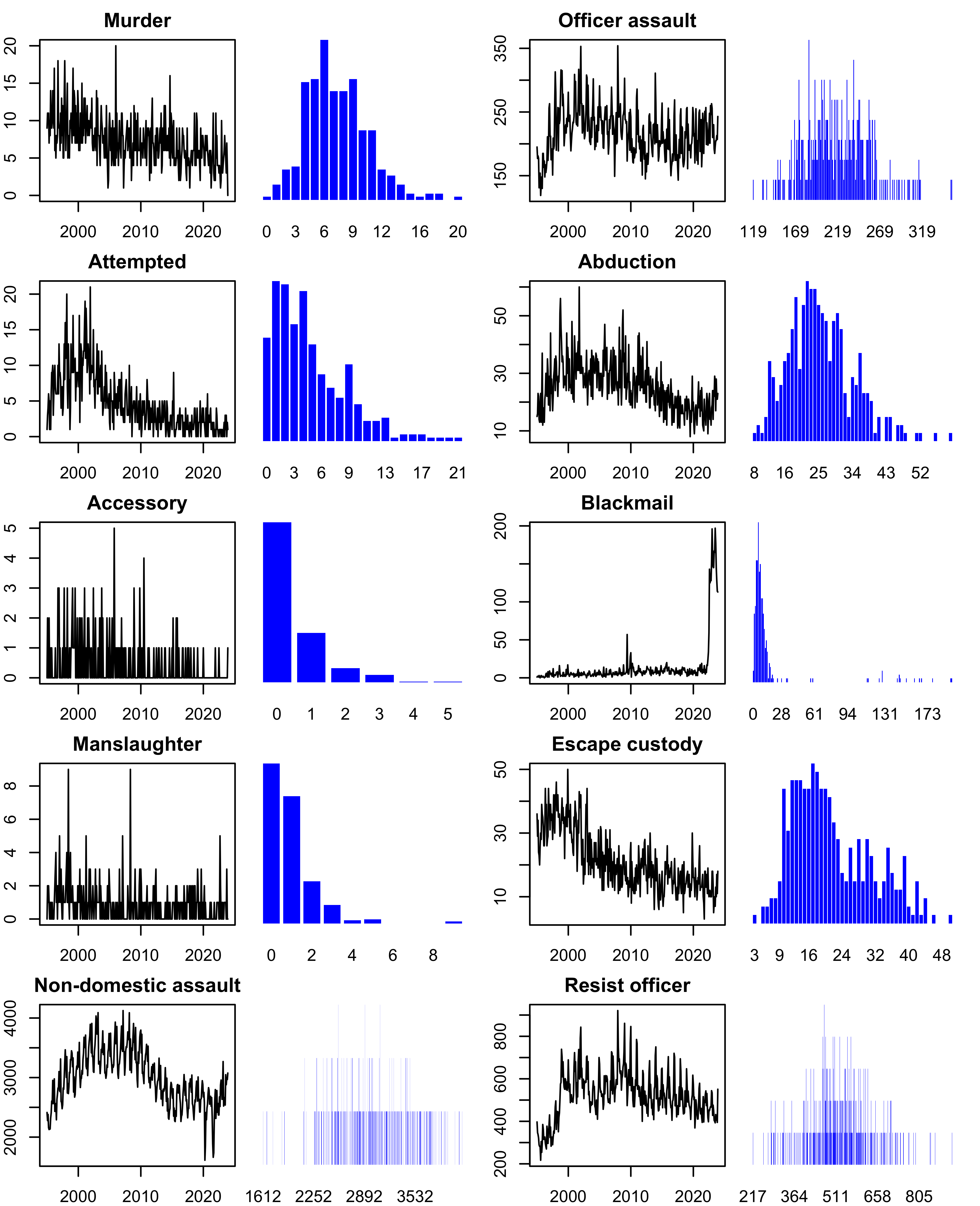}
    \caption{Monthly crime counts in New South Wales, Australia (January 1995 - December 2023).}
    \label{fig:full_crime}
\end{figure}

\begin{figure}
    \centering
    \includegraphics[width=\textwidth]{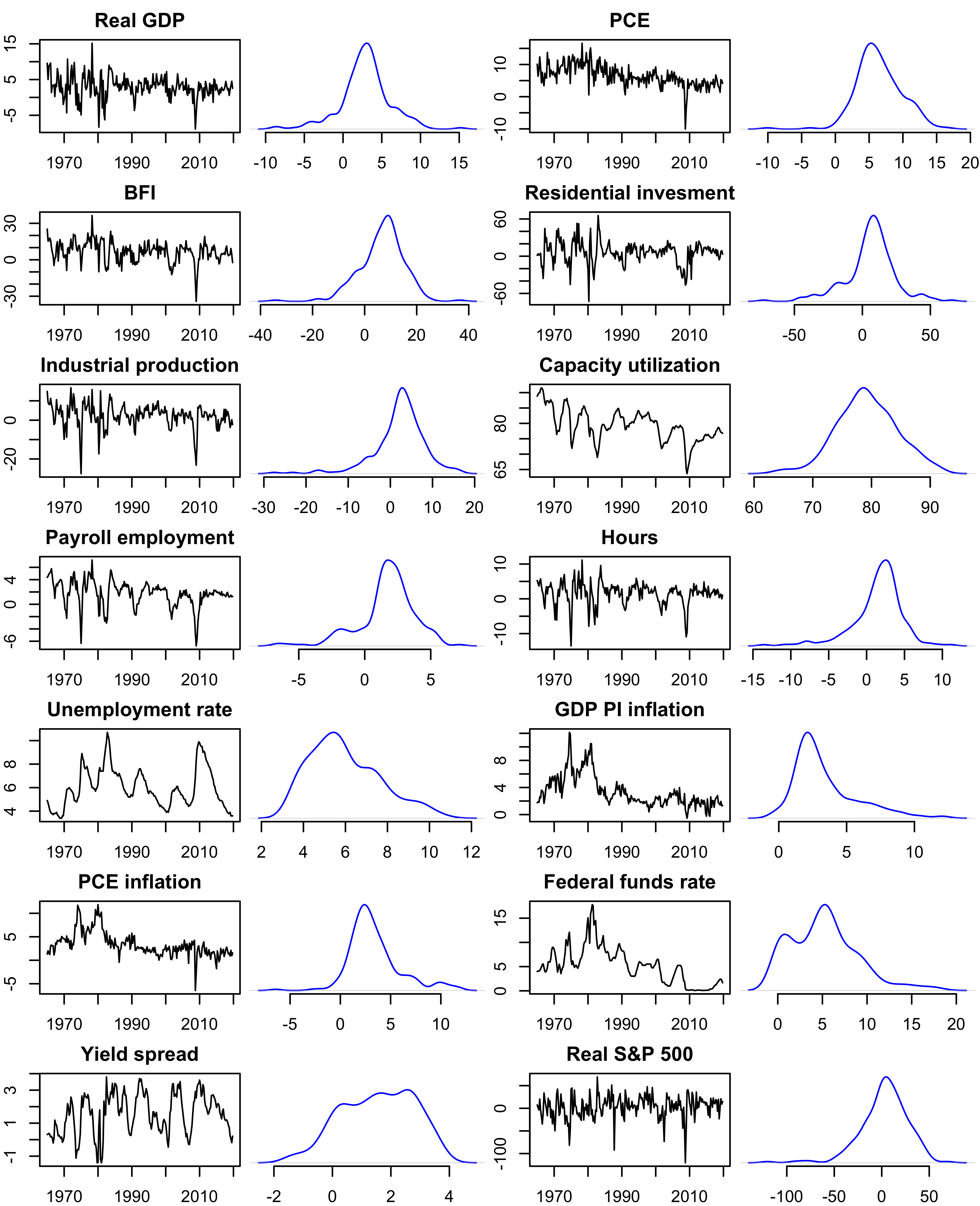}
    \caption{Quarterly US macroeconomic aggregates (1965Q1 - 2019Q4).}
    \label{fig:full_macro}
\end{figure}

\end{document}